\documentclass[12pt,reqno]{amsart}
\usepackage[hmargin=0.7in,height=9.5in]{geometry}
\usepackage{amssymb,amsthm, times, enumerate}
\usepackage{delarray,verbatim}
\usepackage{subfigure}

\usepackage{ifpdf}
\ifpdf
\usepackage[pdftex]{graphicx}
\DeclareGraphicsRule{*}{mps}{*}{} \else
\usepackage[dvips]{graphicx}
\DeclareGraphicsRule{*}{eps}{*}{} \fi

\linespread{1.2}

\usepackage{ifpdf}
\usepackage{color}
\definecolor{webgreen}{rgb}{0,.5,0}
\definecolor{webbrown}{rgb}{.8,0,0}
\definecolor{emphcolor}{rgb}{0.95,0.95,0.95}

\usepackage{hyperref}
\hypersetup{%
          colorlinks=true,
          linkcolor=webbrown,
          filecolor=webbrown,
          citecolor=webgreen,
          breaklinks=true}
\ifpdf \hypersetup{pdftex,
            pdfstartview=FitH, 
            bookmarksopen=true,
            bookmarksnumbered=true
} \else \hypersetup{dvips} \fi


\newcommand {\costb}{\gamma_b}
\newcommand {\costs}{\gamma_s}
\newcommand {\ph}{\hat{p}}
\newcommand {\ah}{\hat{\alpha}}
\newcommand {\pcheck}{\tilde{p}}
\newcommand {\acheck}{\tilde{\alpha}}

\renewcommand{\S}{\mathcal{S}}

\numberwithin{equation}{section}

\newtheorem{theorem}{Theorem}[section]
\newtheorem{proposition}{Proposition}[section]

\newtheorem{corollary}{Corollary}[section]
\newtheorem{remark}{Remark}[section]
\newtheorem{lemma}{Lemma}[section]

\newtheorem{assumption}{Assumption}[section]

\numberwithin{remark}{section} \numberwithin{proposition}{section}
\numberwithin{corollary}{section}
\newcommand {\R}{\mathbb{R}}

\newcommand {\F}{\mathcal{F}}

\newcommand {\p}{\mathbb{P}}

\newcommand {\E}{\mathbb{E}}

\def\1{1{\hskip -3.3 pt}\hbox{I}}

\newcommand{\diff}{{\rm d}}

\newcommand{\lev}{L\'{e}vy }

\newcommand {\lap}{\zeta}
\newcommand{\lapinv}{\Phi(r)}

\addtolength{\voffset}{0.5cm}

\title[Default Swap Games]{Default Swap Games Driven by Spectrally Negative L\'{E}vy  Processes$^*$}\thanks{$*$\,This draft: \today.  }
\author[M. Egami]{Masahiko Egami$^\diamond$}\thanks{$\diamond$\,Graduate School of Economics,
Kyoto University, Sakyo-Ku, Kyoto, 606-8501, Japan. Email: \mbox{\em
egami@econ.kyoto-u.ac.jp}}
\author[T. Leung]{\,\,Tim Leung$^\dag$\,}\thanks{$\dag$\, IEOR Department, Columbia University, New York NY 10027, USA. Email: \mbox{{\em leung@ieor.columbia.edu}}}
\author[K. Yamazaki]{\,Kazutoshi Yamazaki$^\ddag$}\thanks{$\ddag$\, (corresponding author) Center for the Study of Finance and Insurance,
Osaka University, 1-3 Machikaneyama-cho, Toyonaka City, Osaka 560-8531, Japan. Email: \mbox{{\em
k-yamazaki@sigmath.es.osaka-u.ac.jp}}. Phone: +81-(0)6-6850-6469.  Fax: +81-(0)6-6850-6092}


\begin{document}
\begin{abstract}
This paper studies game-type credit default swaps that allow the protection buyer and seller to raise or reduce their
respective positions once prior to default. This leads to the study of
an optimal stopping game subject to early default
termination. Under a structural credit risk model based on spectrally
negative \lev processes, we apply the principles of smooth and
continuous fit to identify the equilibrium exercise strategies for the buyer and the  seller. We then rigorously prove the existence of the Nash equilibrium and compute the contract value at equilibrium.  Numerical examples are
provided to illustrate the impacts of default risk and other contractual
features on the players' exercise timing at equilibrium.
\end{abstract}
\maketitle
{\noindent \small{\textbf{Keywords:}\,  optimal stopping games; Nash equilibrium; \lev processes; scale function; credit default swaps}\\
 \noindent \small{\textbf{JEL Classification:}\, C73, G13,  G33, D81}\\
\noindent \small{\textbf{Mathematics Subject Classification (2010):}\, 91A15, 60G40, 60G51, 91B25  }}

\section{Introduction}\label{section_intro}
Credit default swaps (CDSs) are among the most liquid and widely used
credit derivatives for trading and managing default risks. Under a vanilla
CDS contract, the protection buyer pays a periodic premium  to the
protection seller in exchange for a payment if the reference entity defaults
before expiration. In order to control the credit risk exposure, investors can adjust the premium and notional amount prior to default by appropriately combining a market-traded default swaption with a vanilla CDS position, or use the over-the-counter traded products such as the callable CDSs (see \cite[Chapter 21]{Brigobook}). In a recent related work \cite{Leung_Yamazaki_2010},  we studied  the \emph{optimal timing} to step up or down a CDS position under a general \lev credit risk model.


The current paper studies the game-type CDSs that allow both the protection
buyer and seller to change the swap position once prior to default.
Specifically, in the step-up (resp.\ step-down) default swap game, as soon as
the buyer or the seller, whoever first, exercises prior to default, the
notional amount and premium  will be increased (resp.\ decreased) to a
pre-specified level upon exercise. From the exercise time till default, the
buyer will pay the new premium and the seller is subject to the new default
liability. Hence, for a given set of contract parameters, the buyer's objective
is  to maximize the expected net cash flow while the seller wants to
minimize it, giving rise to a two-player optimal stopping game.

We   model  the default time as the \emph{first passage time} of a  general
\emph{exponential \lev process}  representing some underlying
asset value.  The default event occurs  either when the underlying asset value moves continuously to the lower default barrier, or when it jumps below the default barrier.
This is an extension of  the original structural  credit risk
approach introduced by Black and Cox \cite{BlackCox76}  where the asset
value follows a geometric Brownian motion. As is well known \cite{Duffie2003}, the incorporation of unpredictable  jump-to-default is useful  for explaining a number of market observations, such as the non-zero short-term limit of credit spreads.  Other related credit risk models based on \lev and other jump processes include \cite{schoutensCDS07,helberenkrogers,zhou2001}.


 The default swap game is formulated  as a variation of the standard
 optimal stopping games in the literature (see, among others,
  \cite{dynkin_book1968, Peskir_2008} and references therein).  However, while typical
 optimal stopping games end at the  time of exercise by either player,   the
  exercise time in the default swap game does not terminate the
 contract, but merely alters the premium
forward and the future protection amount to be paid at default time. In
fact,  since default may arrive before either party exercises, the game may
be terminated early \emph{involuntarily}.

The central challenge of  the  default swap games  lies in  determining the pair
of stopping times that yield the \emph{Nash equilibrium}.  Under a structural credit
risk model based on \emph{spectrally negative} \lev processes, we analyze and
calculate the equilibrium exercise strategies for the protection buyer and
seller. In addition, we  determine the equilibrium premium of the default swap
game so that the   expected discounted cash  flows for the two parties
coincide at contract inception.

 Our solution approach starts with a decomposition of the default swap game
 into a combination of a perpetual  CDS and  an optimal stopping game with
   early  termination from default. Moreover, we utilize a  symmetry between the step-up and step-down games, which
   significantly simplifies our analysis as it is sufficient to study either  case.
For a general spectrally negative \lev process (with a non-atomic \lev measure), we provide the conditions for the existence of the Nash
equilibrium. Moreover, we  derive the buyer's and seller's optimal threshold-type exercise strategies using the principle of continuous and smooth fit, followed by a rigorous verification theorem via martingale arguments.

For our analysis of the game equilibrium,  the \emph{scale
function} and a number of fluctuation identities of spectrally negative \lev
processes are particularly useful. Using our analytic results, we   provide a bisection-based algorithm
for the efficient computation of the buyer's and seller's exercise thresholds as well as the equilibrium premium, illustrated in a series of numerical examples. Other recent applications of
spectrally negative \lev processes include derivative pricing
\cite{Alili2005,Avram_2004},   optimal dividend problem
\cite{Avram_et_al_2007,Kyprianou_Palmowski_2007,Loeffen_2008}, and
capital reinforcement timing \cite{Egami_Yamazaki_2010}. We   refer the
reader to \cite{Kyprianou_2006} for  a comprehensive account.


To our best knowledge, the step-up and step-down default
swap games and the associated optimal stopping games have
not been studied elsewhere. There are a few related studies on stochastic
games driven by spectrally negative or positive \lev processes; see e.g.\
\cite{BaurdouxKypri08} and \cite{BauKyrianouPardo2011}.  For optimal
stopping games driven by a strong  Markov process, we refer to the recent
papers by  \cite{Peskir_2008} and \cite{Peskir_2009}, which study the
existence and mathematical characterization of Nash  equilibrium. Other  game-type derivatives in the literature include
Israeli/game options \cite{kifer2000, Kyprianou_israeli}, defaultable game options
\cite{BieleckiCrepeyJR08}, and convertible bonds
\cite{kallsenkuhn05,sirbushreve_convert}.


The rest of the paper is organized as follows. In Section \ref{section_model}, we formulate
the default swap game under a general \lev model. In Section
\ref{section_spec_neg_lev}, we focus on the spectrally negative \lev model and analyze the Nash equilibrium. Section \ref{section_numer} provides the numerical study of the default swap games for the case with i.i.d.\ exponential jumps. Section \ref{section_conclude} concludes the paper and presents some ideas for future work.
All  proofs are given in the Appendix.

\section{Game Formulation} \label{section_model}

On a complete probability space  $(\Omega, \F, \p)$, we assume there exists a L\'{e}vy process $X=\{X_t;\, t\ge 0\}$ and denote by $\mathbb{F}=(\F_t)_{t\ge 0}$ the filtration generated by $X$.  The value of the reference entity (a company stock or other assets) is assumed to evolve according to an \emph{exponential \lev process} $S_t = e^{X_t}$, $t\geq0$. Following the Black-Cox \cite{BlackCox76} structural approach, the default event is triggered by $S$ crossing a lower level $D$. Without loss of generality, we can take $\log D =0$ by shifting the initial value $x\in\R$. Henceforth, we shall work with the default time
\[{\sigma_0}:=\inf\{\,t\ge 0\,:\, X_t\,  \leq \,0\,\},\]
where $\inf \emptyset\! =\! \infty$ by convention.  We denote by  $\p^x$ the probability law and $\E^x$ the expectation with $X_0=x$.


We consider a default swap contract that gives the protection buyer and seller an option to change the premium and notional amount before default  for a fee, whoever exercises first.  Specifically, the  buyer begins by paying premium at rate $p$ over time  for a notional amount $\alpha$ to be paid at default.  Prior to default, the buyer and the seller can select a time to switch to a new premium $\ph$ and notional amount $\ah$. When the buyer exercises, she is incurred the fee  $\costb$ to be paid to the seller; when the seller exercises, she is incurred $\costs$ to be paid to the buyer.  If the
buyer and the seller exercise simultaneously, then  both parties pay the fee upon exercise. We assume that $p$, $\ph$, $\alpha$, $\ah$, $\costb$,
$\costs \geq 0$ (see also Remark \ref{remark_zero} below).

Let $\S : = \left\{ \tau \in \mathbb{F}: \tau \leq {\sigma_0} \; \text{ a.s.
}\right\}$  be the set of all stopping times \emph{smaller than or equal to}
the default time.  Denote the buyer's candidate exercise time by $\tau \in
\S$ and seller's candidate exercise time by $\sigma \in \S$, and let $r>0$ be
the positive risk-free interest rate. Given any pair of exercise times
$(\sigma, \tau)$, the expected cash flow to the buyer is given by
\begin{multline}
V(x; \sigma, \tau) :=\E^x \left[ -\int_0^{\tau \wedge \sigma} e^{-rt} p\,\diff t + 1_{\{\tau \wedge \sigma < \infty\}}\bigg( -\int_{\tau \wedge \sigma}^{\sigma_0} e^{-rt}\ph\,\diff t \right. \\ \left. + e^{-r {\sigma_0}}(\ah 1_{\{{\tau \wedge \sigma} <{\sigma_0}\}}   +\alpha 1_{\{{\tau \wedge \sigma} ={\sigma_0} \}}) + 1_{\{\tau \wedge \sigma < {\sigma_0} \}} e^{-r (\tau \wedge \sigma)}  \left(- \costb 1_{\{\tau \leq \sigma \}} +  \costs 1_{\{\tau \geq \sigma \}}  \right) \bigg) \right]. \label{def_V}
\end{multline} To the seller, the contract value   is  $-V(x; \sigma, \tau)$.
Naturally, the buyer wants to \emph{maximize} $V$ over $\tau$ whereas
the seller wants to \emph{minimize} $V$ over $\sigma$, giving rise to a
two-player optimal stopping game.

This formulation covers default swap games with the following provisions:
\begin{enumerate}\item \emph{Step-up Game}: if $\ph> p$ and $\ah >\alpha$, then the buyer and the seller are allowed to \emph{increase} the notional amount once from $\alpha$ to $\ah$ and the premium rate from $p$ to $\ph$ by paying the fee $\costb$ (if the buyer exercises) or $\costs$ (if the seller exercises).
\item \emph{Step-down Game}: if $\ph < p$ and $\ah < \alpha$, then the buyer and the seller are allowed to \emph{decrease} the notional amount once from $\alpha$ to $\ah$ and the premium rate from $p$ to $\ph$ by paying the fee $\costb$ (if the buyer exercises) or $\costs$ (if the seller exercises).  When $\ph = \ah =0$, we obtain a \emph{cancellation game} which allows the buyer and the seller to terminate the contract early.
\end{enumerate}

Our primary objective is to determine the pair of stopping times $(\sigma^*, \tau^*)\subset \mathcal{S}$, called the \emph{saddle point}, that constitutes the \emph{Nash equilibrium}:
\begin{align}
V(x; \sigma^*, \tau) \leq V(x; \sigma^*, \tau^*)  \leq V(x; \sigma, \tau^*), \quad \forall \, \sigma, \tau \in \S. \label{saddle_pt}
\end{align}

\begin{remark}A related concept is the Stackelberg equilibrium, represented by the equality  $V^*(x) = V_*(x)$, where $V^*(x) := \inf_{\sigma \in \S} \sup_{\tau \in \S} V(x; \sigma, \tau)$ and $V_*(x) := \sup_{\tau \in \S} \inf_{\sigma \in \S}  V(x; \sigma, \tau)$. See e.g.\ \cite{Peskir_2008} and \cite{Peskir_2009}. These definitions imply that  $V^*(x) \geq V_*(x)$. The existence of the Nash equilibrium  \eqref{saddle_pt} will also yield the  Stackelberg equilibrium via the reverse inequality:
\begin{align*}
V^*(x) \leq \sup_{\tau \in \S} V(x; \sigma^*, \tau) \leq V(x;\sigma^*,\tau^*) \leq  \inf_{\sigma \in \S} V(x; \sigma, \tau^*)\leq V_*(x).
\end{align*}
Herein, we shall focus our analysis on the Nash equilibrium.  \end{remark}

Our main results on the Nash equilibrium are summarized in Theorems \ref{theorem_equilibrium}-\ref{theorem_equilibrium_zero} for the spectrally negative \lev case. As preparation, we begin our analysis with two useful observations, namely,  the decomposition of $V$ and the  symmetry between the step-up and step-down games.

\subsection{Decomposition and Symmetry} In standard  optimal stopping games, such as the well-known Dynkin game \cite{dynkin_book1968},  random payoffs are realized at either player's exercise time. However,  our default swap game is not terminated at the buyer's or seller's exercise time. In fact, upon exercise only the contract terms will change, and there will be a terminal transaction at default time. Since default may arrive before either party exercises the
 step-up/down option, the game may be terminated early \emph{involuntarily}. Therefore, we shall transform the value function $V$ into another optimal stopping game that is more amenable for analysis.

 First, we  define the value of a (perpetual) CDS with premium rate $p$ and notional amount $\alpha$  by
\begin{align}C(x;p,\alpha) &:=\E^x \left[ -\int_0^{{\sigma_0} } e^{-rt} p\, \diff t + \alpha\, e^{-r {\sigma_0}}     \right] =  \left(\frac{p}{r} +\alpha\right) \lap(x) -\frac{p}{r}, \quad x > 0, \label{cds}
\end{align}where
\begin{align}
 \lap(x) := \E^x \left[  e^{-r{\sigma_0}}\right], \quad x \in  \R, \label{zeta}
\end{align}is the Laplace transform of ${\sigma_0}$. Next, we extract this CDS value from  the value function $V$.
Let
\begin{align}\acheck := \alpha - \hat{\alpha} \quad \textrm{and} \quad \pcheck := p - \hat{p}.\label{ap}\end{align}
\begin{proposition}[decomposition]\label{prop-V} For every $\sigma, \tau \in \S$ and $x > 0$, the value function admits the decomposition
\begin{align*}V(x; \sigma, \tau) &= C(x;p,\alpha)+ v(x; \sigma, \tau),\end{align*}
where $v(x; \sigma, \tau)\equiv v(x; \sigma, \tau ;\pcheck, \acheck, \costb, \costs)$ is defined by
\begin{align}
v(x; \sigma, \tau ;\pcheck, \acheck, \costb, \costs):= \E^x \left[ e^{-r (\tau\wedge \sigma)} \left( h(X_{\tau}) 1_{\{\tau < \sigma \}} + g(X_{ \sigma}) 1_{\{\tau > \sigma \}} + f(X_{\tau }) 1_{\{\tau = \sigma \}} \right) 1_{\{\tau \wedge \sigma < \infty\}} \right], \label{definition_v}
\end{align}
 with \begin{align}\label{hx}h(x) &\equiv h(x; \pcheck, \acheck, \costb) :=  1_{\{x > 0 \}}  \Big[ \Big(\frac{\pcheck}{r} - \costb \Big) - \Big(\frac{\pcheck}{r} +\acheck\Big) \lap(x)\Big], \\
\label{gx}g(x) &\equiv g(x; \pcheck, \acheck, \costs) := 1_{\{x > 0 \}}  \Big[ \Big(\frac{\pcheck}{r} +\costs \Big) - \Big(\frac{\pcheck}{r} +\acheck\Big) \lap(x)\Big], \\
\label{fx}f(x) &\equiv f(x; \pcheck, \acheck, \costb,\costs) := 1_{\{x > 0 \}}  \Big[ \Big(\frac{\pcheck}{r} - \costb +\costs \Big) - \Big(\frac{\pcheck}{r} +\acheck\Big) \lap(x)\Big].
\end{align}
\end{proposition}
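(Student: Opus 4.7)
Set $\rho := \tau \wedge \sigma$, so $\rho \le {\sigma_0}$ a.s.\ because $\tau,\sigma \in \S$. The plan is to verify $V - C = v$ by direct cash-flow accounting in three steps: cancel the $C$-contributions against the matching pieces of $V$, apply the strong Markov property at $\rho$, and split the fee across the three buyer/seller-order events.

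For the first step, I would split $\int_0^{{\sigma_0}} = \int_0^\rho + \int_\rho^{{\sigma_0}}$ (valid since $\rho \le {\sigma_0}$) and combine the premium pieces to get
\begin{align*}
-\int_0^\rho e^{-rt} p\,\diff t + \int_0^{{\sigma_0}} e^{-rt} p\,\diff t - 1_{\{\rho<\infty\}}\int_\rho^{{\sigma_0}} e^{-rt}\ph\,\diff t = 1_{\{\rho<{\sigma_0}\}}\int_\rho^{{\sigma_0}} e^{-rt}\pcheck\,\diff t.
\end{align*}
The notional pieces combine as
\begin{align*}
-\alpha e^{-r{\sigma_0}} + 1_{\{\rho<\infty\}} e^{-r{\sigma_0}}\bigl(\ah\, 1_{\{\rho<{\sigma_0}\}} + \alpha\, 1_{\{\rho={\sigma_0}\}}\bigr) = -\acheck e^{-r{\sigma_0}} 1_{\{\rho<{\sigma_0}\}},
\end{align*}
since the $\alpha$'s cancel on $\{\rho = {\sigma_0}\}$ while $\ah - \alpha = -\acheck$ survives on $\{\rho<{\sigma_0}\}$. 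Writing $\Phi := -\costb 1_{\{\tau\le\sigma\}} + \costs 1_{\{\tau\ge\sigma\}}$ for the fee, we therefore obtain
\[
V(x;\sigma,\tau) - C(x;p,\alpha) = \E^x\Bigl[1_{\{\rho<{\sigma_0}\}}\Bigl(\int_\rho^{{\sigma_0}} e^{-rt}\pcheck\,\diff t - \acheck e^{-r{\sigma_0}} + e^{-r\rho}\Phi\Bigr)\Bigr].
\]

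Next, I would apply the strong Markov property at $\rho$. Since $\{\rho<{\sigma_0}\} = \{X_\rho > 0\}$ is $\F_\rho$-measurable, conditioning on $\F_\rho$ together with $\E^y[\int_0^{{\sigma_0}} e^{-rs}\,\diff s] = (1-\lap(y))/r$, for $\lap$ defined in \eqref{zeta}, yields
\begin{align*}
\E^x\Bigl[1_{\{\rho<{\sigma_0}\}}\Bigl(\int_\rho^{{\sigma_0}} e^{-rt}\pcheck\,\diff t - \acheck e^{-r{\sigma_0}}\Bigr)\Bigr] = \E^x\Bigl[1_{\{\rho<{\sigma_0}\}} e^{-r\rho}\Bigl(\frac{\pcheck}{r} - \bigl(\frac{\pcheck}{r}+\acheck\bigr)\lap(X_\rho)\Bigr)\Bigr].
\end{align*}
Then, on the disjoint events $\{\tau<\sigma\}$, $\{\tau>\sigma\}$, $\{\tau=\sigma\}$, the fee $\Phi$ takes the values $-\costb$, $+\costs$, $-\costb+\costs$, and $X_\rho$ equals $X_\tau$, $X_\sigma$, $X_\tau=X_\sigma$ respectively. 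Substituting these into the bracket above reproduces precisely $h(X_\tau)$, $g(X_\sigma)$, $f(X_\tau)$ as defined in \eqref{hx}--\eqref{fx}, the $1_{\{x>0\}}$ prefactor being automatic on $\{X_\rho>0\}$. Finally, because $h,g,f$ vanish on $\{x\le 0\}$ and $X_\rho = X_{{\sigma_0}} \le 0$ on $\{\rho={\sigma_0}\}$, the indicator $1_{\{\rho<{\sigma_0}\}}$ can be enlarged to $1_{\{\rho<\infty\}}$ without changing the expectation, which matches \eqref{definition_v}.

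The main obstacle is purely bookkeeping: keeping the three indicator decompositions $\{\rho<\infty\}$, $\{\rho<{\sigma_0}\}$, $\{\rho={\sigma_0}\}$ consistently aligned with the buyer-first/seller-first/tie events so that no cash-flow term is lost or double-counted. No delicate analysis is required once the strong Markov step is set up correctly.
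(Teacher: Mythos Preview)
Your proof is correct and follows essentially the same route as the paper's: rearrange the cash-flow integrals, subtract off the perpetual CDS value $C(x;p,\alpha)$, apply the strong Markov property at $\rho=\tau\wedge\sigma$ using $\{\rho<\sigma_0\}=\{X_\rho>0,\ \rho<\infty\}$, and then replace the indicator $1_{\{\rho<\sigma_0\}}$ by $1_{\{\rho<\infty\}}$ via the vanishing of $h,g,f$ on $(-\infty,0]$. The only cosmetic difference is that you compute $V-C$ directly whereas the paper first rewrites the integrand of $V$ and then identifies the $C$ piece; also, avoid naming the fee $\Phi$ since the paper reserves $\Phi$ for the right inverse of the Laplace exponent.
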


Comparing \eqref{cds} and \eqref{hx}, we see  that $h(x) =1_{\{x>0\}} (C(x;-\pcheck, -\acheck) - \costb)$, which means that the buyer receives the CDS value $C(x;-\pcheck, -\acheck)$ at the cost of $\costb$ if she exercises before the seller.  For the seller, the payoff of exercising before the buyer is $ - g(x) = 1_{\{x>0\}} (C(x;\pcheck, \acheck) - \costs)$. Hence, in both cases the fees  $\costb$ and $\costs$ can be viewed as   strike prices.

Since $C(x;p,\alpha)$ does not depend on $(\sigma, \tau)$, Proposition \ref{prop-V} implies that  finding the saddle point $(\sigma^*,\tau^*)$ for the Nash equilibrium in \eqref{saddle_pt} is equivalent to showing that
\begin{align}
v(x;\sigma^*,\tau) \leq v(x;\sigma^*,\tau^*) \leq v(x;\sigma,\tau^*), \quad \forall\, \sigma, \tau \in \S. \label{Nashv}
\end{align}
If the Nash equilibrium exists, then  the  value of the game is $V(x;\sigma^*,\tau^*) = C(x) + v(x;\sigma^*,\tau^*)$, $x \in \R.$
According to \eqref{ap}, the problem is a step-up (resp.\ step-down) game when $\acheck < 0$ and $\pcheck  <0$ (resp.\ $\acheck > 0$ and $\pcheck > 0$).

\begin{remark}\label{remark_zero} If $\costb=\costs=0$, then it follows from \eqref{hx}-\eqref{fx} that $h(x)=g(x) = f(x)$ and
\[v(x; \sigma, \tau; \pcheck,  \acheck, 0, 0)  =   \E^x\left[ e^{-r(\tau\wedge \sigma)}  1_{\{X_{\tau \wedge \sigma} > 0, \, \tau \wedge \sigma < \infty \}}C(X_{\tau\wedge \sigma}; -\pcheck, -\acheck)\right].\]
In this case,  the choice of $\tau^*=\sigma^*=0$ yields the equilibrium \eqref{Nashv} with equalities, so  the default swap game is always trivially exercised at inception by either party. For similar reasons, we also rule out the trivial case with $\pcheck = 0$ or $\acheck=0$ (even with $\costs+\costb>0$).  Furthermore, we ignore the contract specifications with $\pcheck\acheck < 0$ since they mean paying more (resp. less) premium in exchange for a reduced (resp. increased) protection after exercise. Henceforth, we proceed our analysis with $\pcheck\acheck >0$ and $\costb +\costs >0$. 
\end{remark}


Next, we observe the symmetry between the step-up and step-down games.

 \begin{proposition}[symmetry]\label{prop-sym} For any  $\sigma, \tau \in \S$, we have $v(x; \sigma, \tau; \pcheck,  \acheck, \costb, \costs)  =- v(x;  \tau,\sigma; -\pcheck, -\acheck, \costs, \costb)$.
 \end{proposition}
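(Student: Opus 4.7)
The plan is to reduce the identity to three pointwise sign-flipping identities on the deterministic payoff functions $h$, $g$, $f$, and then substitute into the expectation defining $v$. The main observation is that swapping the buyer's and seller's roles is the same as negating the pair $(\pcheck,\acheck)$ (since a buyer paying premium $\pcheck$ for protection $\acheck$ is equivalent to a seller paying $-\pcheck$ for $-\acheck$) while interchanging the two fees $(\costb,\costs)$. I expect no real obstacle; the whole argument is bookkeeping of signs and indices.

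First, I would check the three algebraic identities
\begin{align*}
 h(x;\pcheck,\acheck,\costb) &= -\,g(x;-\pcheck,-\acheck,\costb),\\
 g(x;\pcheck,\acheck,\costs) &= -\,h(x;-\pcheck,-\acheck,\costs),\\
 f(x;\pcheck,\acheck,\costb,\costs) &= -\,f(x;-\pcheck,-\acheck,\costs,\costb),
\end{align*}
by direct substitution into the definitions \eqref{hx}--\eqref{fx}. In each case the indicator $1_{\{x>0\}}$ and the $\lap(x)$-term are unaffected (both $\pcheck/r+\acheck$ and its parameters just change sign), and the constant part flips sign once $\costb$ and $\costs$ are exchanged. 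This is purely arithmetic, one line per identity.

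Next, I would plug $(\sigma',\tau';\pcheck',\acheck',\costb',\costs') = (\tau,\sigma;-\pcheck,-\acheck,\costs,\costb)$ into the definition \eqref{definition_v} of $v$. The discount factor $e^{-r(\tau'\wedge\sigma')}$ and the terminal indicator $1_{\{\tau'\wedge\sigma'<\infty\}}$ are manifestly invariant under $(\sigma,\tau)\leftrightarrow(\tau,\sigma)$. The three indicators in the bracket transform as
\begin{align*}
1_{\{\tau'<\sigma'\}} = 1_{\{\sigma<\tau\}} = 1_{\{\tau>\sigma\}}, \qquad 1_{\{\tau'>\sigma'\}} = 1_{\{\tau<\sigma\}}, \qquad 1_{\{\tau'=\sigma'\}} = 1_{\{\tau=\sigma\}},
\end{align*}
so that the roles of the $h$-term and the $g$-term interchange while the $f$-term stays on the diagonal (where $X_{\tau}=X_{\sigma}$). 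Applying the three identities of the previous step inside the expectation and multiplying through by $-1$ then reconstitutes exactly the integrand defining $v(x;\sigma,\tau;\pcheck,\acheck,\costb,\costs)$, proving the claim.

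Because every step is an in-line substitution and the expectation plays no active role, the argument is short and requires no measurability or integrability considerations beyond those already implicit in the definition of $v$; the only point that needs a moment's care is keeping track of which stopping time sits in which slot after the swap, which is handled by the indicator-rewriting display above. \myBox
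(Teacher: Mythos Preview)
Your proposal is correct and follows essentially the same approach as the paper's own proof: establish the three pointwise identities $h(x;\pcheck,\acheck,\costb)=-g(x;-\pcheck,-\acheck,\costb)$, $g(x;\pcheck,\acheck,\costs)=-h(x;-\pcheck,-\acheck,\costs)$, $f(x;\pcheck,\acheck,\costb,\costs)=-f(x;-\pcheck,-\acheck,\costs,\costb)$ from \eqref{hx}--\eqref{fx}, and then substitute into \eqref{definition_v} with the roles of $\sigma$ and $\tau$ swapped. Your explicit display of how the three indicators transform under the swap is a helpful addition, but otherwise the argument matches the paper line for line.
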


 Applying Proposition \ref{prop-sym} to the Nash equilibrium condition \eqref{Nashv}, we deduce that if $(\sigma^*, \tau^*)$ is  the saddle point  for the step-down default swap game with $(\pcheck, \acheck, \costb, \costs)$, then the reversed pair $(\tau^*, \sigma^*)$  is the saddle point for the  step-up default swap game with $(-\pcheck, -\acheck, \costs,\costb)$. Consequently, the symmetry result implies that it is sufficient to study \emph{either}  the  step-down \emph{or} the step-up default swap game. This significantly simplifies our analysis. Henceforth,  we solve only for the step-down game.

   Also, we notice from \eqref{def_V} that if $\acheck \leq  \gamma_s$, then the seller's benefit of a reduced exposure does not exceed the fee, and therefore, should never exercise. As a result, the valuation problem is reduced to a step-down CDS studied in \cite{Leung_Yamazaki_2010}, and so we exclude it from our analysis here. With this observation and Remark \ref{remark_zero}, we will proceed with the following assumption without loss of generality:

\begin{assumption} We assume that $\acheck >  \costs\geq 0$,  $\pcheck >0$ and $\costb +\costs >0$.\end{assumption}

\subsection{Candidate Threshold Strategies} \label{subsection_continuous_smooth_fit}


In the step-down game, the protection buyer has an incentive to step-down when default  is less likely, or equivalently when $X$ is sufficiently high. On the other hand, the protection seller tends to  exercise the step-down option when default is likely to occur, or equivalently when $X$ is sufficiently small. This intuition leads us to conjecture  the following  \emph{threshold strategies}, respectively, for the buyer and the seller:
\begin{align*}
\tau_B := \inf \left\{ t \geq 0: X_t \notin (0,B) \right\}, \quad \text{and}\quad
\sigma_A := \inf \left\{ t \geq 0: X_t \notin (A,\infty) \right\}, 
\end{align*} for $B>A > 0$. Clearly, $\sigma_A, \tau_B \in \S$.  For    $B>A>0$, we denote the candidate value function
\begin{align}
v_{A,B}(x) &:= v(x; \sigma_A, \tau_B)\notag \\ &= \E^x \left[ e^{-r (\tau_B \wedge \sigma_A)} \left( h(X_{\tau_B  }) 1_{\{\tau_B < \sigma_A\}} + g(X_{  \sigma_A}) 1_{\{\tau_B > \sigma_A \}} + f(X_{\tau_B }) 1_{\{\tau_B = \sigma_A  \}}\right) 1_{\{\tau_B \wedge \sigma_A < \infty\}} \right] \notag\\
&=  \E^x \left[ e^{-r (\tau_B \wedge \sigma_A)} \left( h(X_{\tau_B}) 1_{\{\tau_B < \sigma_A\}} + g(X_{\sigma_A}) 1_{\{\tau_B > \sigma_A \}}\right) 1_{\{\tau_B \wedge \sigma_A < \infty\}} \right]\label{vbx}
\end{align}
for every $x \in \R$. The last equality follows since $\tau_B = \sigma_A$ implies that $\tau_B = \sigma_A = {\sigma_0}$, and $f(X_{\sigma_0}) =0$ a.s.

In subsequent sections, we will identify the candidate exercise thresholds $A^*$ and $B^*$
simultaneously by applying the principle of continuous and smooth fit:
\begin{align}(\text{continuous fit}) \quad v_{A,B}(B-) - h(B) &= 0 \quad \text{and}\quad v_{A,B}(A+) - g(A) = 0,\label{fit_contin}\\
(\text{smooth fit})\quad v_{A,B}'(B-) - h'(B) &= 0\quad \text{and} \quad v'_{A,B}(A+) - g'(A) = 0,\label{fit_smooth}
\end{align}
if these
limits exist.


\section{Solution Methods for the Spectrally Negative \lev Model}\label{section_spec_neg_lev}
We now define $X$ to be a \emph{spectrally negative} \lev process with  the {Laplace exponent}
\begin{align}
\phi(s) := \log \E^0 \left[ e^{s X_1} \right] =c s +\frac{1}{2}\nu^2 s^2 +\int_{(
0,\infty)}(e^{-s u}-1+s u 1_{\{0 <u<1\}})\,\Pi(\diff u), \quad  {s \in \mathbb{R}},  \label{laplace_spectrally_negative}
\end{align}
where $c \in\R$, $\nu \geq 0$ is called the Gaussian coefficient, and $\Pi$ is  a \lev measure  on $(0,\infty)$ such that\\  \mbox{$\int_{(0,\infty)} (1  \wedge u^2) \Pi( \diff u)<\infty$}. See \cite[p.212]{Kyprianou_2006}.   It admits a unique decomposition:
\begin{align} \label{x_decomposition}
X = X^c + X^d
\end{align}
where $X^c$ is the continuous martingale (Brownian motion) part and $X^d$ is the jump and drift part of $X$.  Moreover,
\begin{align}
\textrm{$X^d$ has paths of bounded variation} \Longleftrightarrow  \int_0^1 u \Pi (\diff u) < \infty. \label{bounded_var_cond}
\end{align}
If this  condition \eqref{bounded_var_cond} is satisfied,
then the Laplace exponent simplifies to \begin{align}
\phi(s) =\mu  s+  \frac{1}{2}\nu^2 s^2 + \int_{(
0,\infty)}(e^{-s u}-1)\,\Pi(\diff u), \quad s \in \mathbb{C}, \label{laplace_spectrally_negative_bounded}
\end{align}
where $
\mu := c + \int_{(
0,1)}u\, \Pi(\diff u)$. Recall that $X$ has paths of bounded variation if and only if $\nu = 0$ and \eqref{bounded_var_cond} holds.
We ignore the case when $X$ is a negative subordinator (decreasing a.s.). This means that we require $\mu$ to be strictly positive if $\nu = 0$ and \eqref{bounded_var_cond} holds. We also assume the following and also Assumption \ref{assump_scale_twice} below.
\begin{assumption} We assume that the \lev measure $\Pi$ does not have atoms.  \label{assump_atom}\end{assumption}


\subsection{Main Results} We now state our main results concerning the Nash equilibrium and its associated saddle point.  We will identify the pair of thresholds $(A^*,B^*)$ for the seller and buyer at equilibrium.  The first theorem considers the case $A^* > 0$, where the seller exercises at a level \emph{strictly above} zero.
\begin{theorem} \label{theorem_equilibrium}
Suppose  $A^* > 0$. The Nash equilibrium exists with saddle point $(\sigma_{A^*},\tau_{B^*})$ satisfying
\begin{align}
v(x; \sigma_{A^*}, \tau) \leq v_{A^*, B^*}(x) \leq v(x; \sigma, \tau_{B^*}), \quad \forall \sigma, \tau \in \S. \label{equilibrium_final}
\end{align}
\end{theorem}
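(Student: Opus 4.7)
The plan is a verification argument built on the candidate value $u(x) := v_{A^*,B^*}(x)$ constructed in Section~\ref{subsection_continuous_smooth_fit} via the principles of continuous and smooth fit. Extend $u$ to $\R$ by declaring $u(x) := 0$ for $x \leq 0$ (the post-default value), and recall that by construction $u \equiv h$ on $[B^*,\infty)$ and $u \equiv g$ on $(0,A^*]$, with continuous fit at $A^*$ and continuous (plus smooth, if $\nu>0$) fit at $B^*$.

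The first step is to establish the global sandwich $h(x) \leq u(x) \leq g(x)$ for all $x>0$. On the two exercise regions, the inequalities either reduce to $g-h = \costs+\costb \geq 0$ (read off from \eqref{hx}--\eqref{gx}) or are tight. On the continuation interval $(A^*,B^*)$, the plan is to exploit the explicit scale-function form of $u$ (a linear combination of the two-sided exit identities for a spectrally negative \lev process) together with \eqref{fit_contin}--\eqref{fit_smooth}: smooth fit forces $u-h$ to vanish to first order at $B^*-$, continuous fit forces $u-g$ to vanish at $A^*+$, and the monotonicity/convexity properties of $\lap$ and the scale function rule out interior crossings.

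The second step is to verify two process-level facts: (i) $e^{-r(t\wedge\sigma_{A^*})} u(X_{t\wedge\sigma_{A^*}})$ is a $\p^x$-supermartingale, and (ii) $e^{-r(t\wedge\tau_{B^*})} u(X_{t\wedge\tau_{B^*}})$ is a $\p^x$-submartingale. Applying an It\^{o}--Meyer formula to $e^{-rt} u(X_t)$, the drift takes the form $\int_0^t e^{-rs}(\mathcal{L}-r)u(X_{s-})\,\diff s$, where $\mathcal{L}$ is the generator associated with \eqref{laplace_spectrally_negative}. On $(A^*,B^*)$, $(\mathcal{L}-r)u=0$ follows from the scale-function construction. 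For (i), it remains to show $(\mathcal{L}-r)u\leq 0$ on $[B^*,\infty)$, where $u=h$, by direct computation using \eqref{hx} and the generator action on $\lap$. For (ii), the symmetric inequality $(\mathcal{L}-r)u\geq 0$ has to be checked on $(0,A^*]$, where $u=g$.

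Given these ingredients, the two inequalities in \eqref{equilibrium_final} follow by optional sampling. For the right-hand one, fix $\sigma\in\S$ and decompose
\begin{align*}
v(x;\sigma,\tau_{B^*}) = \E^x\bigl[ e^{-r(\tau_{B^*}\wedge\sigma)}\bigl( u(X_{\tau_{B^*}})\I{\tau_{B^*}<\sigma} + g(X_\sigma)\I{\tau_{B^*}>\sigma}\bigr)\I{\tau_{B^*}\wedge\sigma<\infty}\bigr],
\end{align*}
use $g(X_\sigma)\geq u(X_\sigma)$ from Step~1 to bound this below by $\E^x[e^{-r(\tau_{B^*}\wedge\sigma)} u(X_{\tau_{B^*}\wedge\sigma})\I{\tau_{B^*}\wedge\sigma<\infty}]$, and apply the submartingale property (ii) to conclude $\geq u(x)$. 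The left-hand inequality is obtained symmetrically from $u\geq h$ and the supermartingale property (i), replacing $\sigma_{A^*}$ for the stopping region. The hypothesis $A^*>0$ enters here precisely to ensure that the fit at $A^*$ is an interior (continuous-fit) condition rather than a boundary condition at $0$, so that the optional-sampling step on $\{\tau_{B^*}\wedge\sigma<\sigma_0\}$ is uniform.

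The main obstacle is the sign analysis of $(\mathcal{L}-r)u$ on the exercise regions: since $X$ has only downward jumps, the nonlocal part of $\mathcal{L}u(x)$ for $x\in[B^*,\infty)$ integrates $u$ over $(A^*,B^*)$ and $(-\infty,0]$, and similarly for $x\in(0,A^*]$, so these inequalities cannot be read off pointwise and must be reduced to calculations involving $\lap$, the scale function, and the \lev measure $\Pi$. Assumption~\ref{assump_atom} is used to legitimately differentiate through these integrals, and the creeping behaviour at $B^*$ (when $\nu>0$) is what activates the smooth-fit condition needed to kill the boundary local-time contribution at $B^*$ in the It\^{o}--Meyer expansion.
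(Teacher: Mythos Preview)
Your proposal follows essentially the same verification scheme as the paper: establish the sandwich $h \le v_{A^*,B^*} \le g$ on $(0,\infty)$, determine the sign of $(\mathcal{L}-r)v_{A^*,B^*}$ on each of the three regions, and conclude via optional sampling on the resulting super- and submartingales. Two points need correction, one minor and one substantive.

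First, you have the creeping behaviour backwards. A spectrally negative \lev process \emph{always} creeps upward, so $B^*$ is regular for $(B^*,\infty)$ and smooth fit at $B^*$ holds irrespective of $\nu$; there is no local-time contribution at $B^*$ to worry about. The Gaussian coefficient governs creeping \emph{downward}: $X$ creeps through $A^*$ (and through $0$) if and only if $\nu>0$, which is why smooth rather than merely continuous fit at $A^*$ is needed precisely in the unbounded-variation case. This does not change your outline, but it does change which fit conditions you invoke where.

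Second, and more seriously, $v_{A^*,B^*}$ has a jump discontinuity at $0$: on $(0,A^*]$ one has $v_{A^*,B^*}=g$ with $g(0+)=\costs-\acheck<0$, whereas $v_{A^*,B^*}(x)=0$ for $x\le 0$. Since the integral part of $\mathcal{L}$ is nonlocal and $X$ can jump from $(0,\infty)$ directly into $(-\infty,0]$, this discontinuity contaminates both the generator computation and the It\^o--Meyer expansion; simply declaring $u=0$ on $(-\infty,0]$ and applying It\^o leaves an uncontrolled term. The paper resolves this by introducing continuous approximations $v_n$ with $v_n=v_{A^*,B^*}$ on $(0,\infty)$ and $v_n\uparrow v_{A^*,B^*}$ on $(-\infty,0)$, establishing the super/submartingale inequalities for $v_n$, and passing to the limit via dominated convergence, the enabling estimate being $\E^x\bigl[\int_0^{\sigma_{A^*}} e^{-rs}\,\Pi(X_{s-},\infty)\,\diff s\bigr]<\infty$. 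Your outline does not yet contain any mechanism for this step.
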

Here $v_{A^*, B^*}(x) \equiv v(x;\sigma_{A^*},\tau_{B^*} )$ as in \eqref{vbx}  and can be expressed in terms of the scale function as we shall see in Subsection \ref{subsection_exp_scale}.  In particular, the case $B^*=\infty$ reflects that $\tau_{B^*} = \sigma_0$ and $v_{A^*, \infty}(x) := \lim_{B \uparrow \infty}v_{A^*,B}(x)$ is the expected value when the buyer never exercises and the seller's strategy is $\tau_{A^*}$. The value function  can be computed using \eqref{delta_by_upsilon} and \eqref{upsilon_infty} below.




The case $A^* = 0$ may occur, which  is more technical and may not yield the Nash equilibrium.  To see why, we notice that a default happens as soon as $X$ touches zero. Therefore, in the event that $X$ continuously passes (creeps) through  zero, the seller would optimally seek to exercise  at a level as close to zero as possible. Nevertheless, this timing strategy is not admissible, though it can be approximated arbitrarily closely by admissible  stopping times.

As shown in Corollary \ref{A_0_unbounded_variation} below,  the case $A^*=0$ is possible only if  the jump part $X^d$ of $X$ is of bounded variation (see \eqref{bounded_var_cond}).  This is consistent with our intuition because if $X$ jumps downward frequently, then the seller has the incentive to step down the position at a level strictly above zero.  On the other hand, when $\nu = 0$ (with no Gaussian component), the process $X$ never goes through continuously the level zero, so even with $A^*=0$ the Nash equilibrium in Theorem \ref{theorem_equilibrium} still holds. In contrast, if  $\nu > 0$, then an alternative form of ``equilibrium" is attained, namely,
\begin{align}v(x; \sigma_{0+}, \tau) \leq v_{0+,B^*}(x) \leq v(x; \sigma, \tau_{B^*}), \quad \forall \sigma, \tau \in \S, \label{Nashnot}\end{align}
where \begin{align*}
v(x; \sigma_{0+}, \tau) &:= \E^x \left[ e^{-r\tau} (h (X_\tau) - (\acheck - \gamma_s) 1_{\{X_\tau = 0\}} ) 1_{\{\tau < \infty \}} \right], \quad \tau \in \S, \\
v_{0+,B^*}(x) &:= \E^x \big[ e^{-r\tau_{B^*}} (h (X_{\tau_{B^*}}) - (\acheck - \gamma_s) 1_{\{X_{\tau_{B^*}} = 0\}} ) 1_{\{\tau_{B^*} < \infty \}} \big].
\end{align*}
Here, the functions $v(x; \sigma_{0+}, \tau)$ and $v_{0+,B^*}(x)$ correspond to the limiting case where the seller exercises  arbitrarily close to the default time $\sigma_0$.  However, since the seller cannot predict the default time, this timing strategy is not admissible and \eqref{Nashnot} is not the Nash equilibrium. In practice, given the buyer's strategy $\tau_{B^*}$, the seller's value function can be approximated with an $\varepsilon$-optimal strategy by choosing $\sigma_\delta$ for a sufficiently low exercise  level $\delta > 0$.


Let us summarize our equilibrium results for the case $A^*=0$.

\begin{theorem} \label{theorem_equilibrium_zero}
For the case $A^* = 0$,
\begin{enumerate}
\item if $\nu=0$, a Nash equilibrium exists with saddle point $({\sigma_0},\tau_{B^*})$ and \eqref{equilibrium_final} holds;
\item if $\nu > 0$, then  the alternative equilibrium \eqref{Nashnot} holds.
\end{enumerate}
\end{theorem}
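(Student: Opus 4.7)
The plan is to verify both saddle inequalities by (a) exploiting the absence or presence of downward creeping of the spectrally negative \lev process at the default level $0$, and (b) approximating the degenerate threshold $A^*=0$ by the admissible thresholds $\sigma_\delta := \inf\{t \geq 0 : X_t \leq \delta\}$ with $\delta \downarrow 0$, using the scale-function representation of $v_{A,B}$ from Subsection \ref{subsection_exp_scale} to pass to the limit.

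For part (1) with $\nu = 0$, the key pathwise fact is that a spectrally negative \lev process with zero Gaussian coefficient cannot creep downward, so $X_{\sigma_0} < 0$ almost surely on $\{\sigma_0 < \infty\}$. Hence $g(X_{\sigma_0}) = f(X_{\sigma_0}) = 0$ by the indicator $1_{\{x>0\}}$, which collapses $v(x;\sigma_0,\tau)$ to $\E^x[e^{-r\tau} h(X_\tau)\I{\tau < \sigma_0}]$ for every $\tau \in \S$. The buyer's inequality in \eqref{equilibrium_final} then reduces to showing that $\tau_{B^*}$ is optimal for a purely one-sided perpetual optimal stopping problem driven by $h$, verifiable by the standard martingale argument applied to $e^{-rt}v_{0,B^*}(X_t)$ exactly as in Theorem \ref{theorem_equilibrium}. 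For the seller's inequality $v_{0,B^*}(x) \leq v(x;\sigma,\tau_{B^*})$, I would establish the submartingale property of $e^{-r(t\wedge\tau_{B^*})}v_{0,B^*}(X_{t\wedge\tau_{B^*}})$ together with the bound $v_{0,B^*}(x) \geq g(x)$, both consequences of the scale-function representation and the smooth-fit condition at $B^*$; optional sampling at $\sigma\wedge\tau_{B^*}$ then yields the result.

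For part (2) with $\nu > 0$, the process creeps continuously through $0$ with positive probability, and the naive choice $\sigma_0$ fails to be a saddle because on the creeping event the seller collects $g(X_{\sigma_0}) = g(0) = 0$, whereas by stopping at $\sigma_\delta$ for small $\delta > 0$ she realizes $g(\delta) \to g(0+) = \costs - \acheck < 0$. I would derive the limiting value by decomposing $v(x;\sigma_\delta,\tau_{B^*})$ according to creeping versus jumping at $\sigma_0$: on $\{X_{\sigma_0} = 0\}$ one has $X_{\sigma_\delta} = \delta \downarrow 0$ so $g(X_{\sigma_\delta}) \to \costs - \acheck$, while on $\{X_{\sigma_0} < 0\}$ one has $X_{\sigma_\delta} < 0$ for $\delta$ small enough so $g(X_{\sigma_\delta}) = 0$. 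Dominated convergence then produces precisely the correction $-(\acheck-\costs)\I{X_{\tau_{B^*}}=0}$ and identifies the limit as $v_{0+,B^*}$; the two inequalities in \eqref{Nashnot} follow as limits of the corresponding inequalities at level $\delta > 0$, which themselves are instances of Theorem \ref{theorem_equilibrium}.

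The principal obstacle is part (2): because $\sigma_{0+}$ is not itself an admissible stopping time, the saddle condition cannot be checked directly and must be realized as a limit, so one must justify monotonicity of $\delta \mapsto v(x;\sigma_\delta,\tau_{B^*})$ (or a suitable uniform estimate) to identify $v_{0+,B^*}$ as a genuine infimum over admissible seller strategies, and rigorously control the creeping contribution of $X^c$ using the fluctuation identities and the scale-function smoothness guaranteed by Assumption \ref{assump_scale_twice}.
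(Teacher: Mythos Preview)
Your treatment of part (1) is essentially sound and aligns with the paper: once $\nu=0$ rules out downward creeping, $v_{0+,B^*}=v_{0,B^*}$ (cf.\ Remark \ref{lemma_creeping}), and the verification machinery of Theorem \ref{theorem_equilibrium} carries over verbatim with $A^*$ replaced by $0$ and $\sigma_{A^*}$ by $\sigma_0$.

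Your argument for part (2), however, has a genuine gap. You propose to obtain the inequalities in \eqref{Nashnot} ``as limits of the corresponding inequalities at level $\delta>0$, which themselves are instances of Theorem \ref{theorem_equilibrium}.'' But Theorem \ref{theorem_equilibrium} applies only to the \emph{specific} pair $(A^*,B^*)$ produced by Theorem \ref{lemma_classification_b}, which satisfies the continuous/smooth-fit conditions $\widehat{\Psi}(A^*,B^*)=0$ and $\partial_B\widehat{\Psi}(A^*,B^*)=0$. When $A^*=0$, an arbitrary $\delta>0$ does \emph{not} satisfy these conditions, and $(\sigma_\delta,\tau_{B^*})$ is not a saddle point; in particular the key verification ingredients---Lemma \ref{lemma_domination} (the bounds $h\leq v_{\delta,B^*}\leq g$ on $(\delta,B^*)$) and Lemma \ref{lemma_generator}-(1) (the positivity of $(\mathcal{L}-r)v_{\delta,B^*}$ on $(0,\delta)$)---need not hold for the pair $(\delta,B^*)$. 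There are therefore no ``inequalities at level $\delta$'' available to pass to the limit.

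The paper avoids this by working directly with the function $v_{0+,B^*}$, which has an explicit scale-function representation via \eqref{upsilon_zero_via_scale}. One checks directly that $(\mathcal{L}-r)v_{0+,B^*}=0$ on $(0,B^*)$ and $\leq 0$ on $(B^*,\infty)$ (Lemma \ref{lemma_generator} covers the $A^*=0$ case), and that the domination inequalities of Lemmas \ref{lemma_domination}--\ref{lemma_domination2} hold for $v_{0+,B^*}$. The supermartingale property of $e^{-r(t\wedge\sigma_0)}v_{0+,B^*}(X_{t\wedge\sigma_0})$ is then obtained by running the Ito/localization argument of Theorem \ref{theorem_equilibrium} with $\sigma_{A^*}$ replaced by $\sigma_0$ (using the $\sigma_\epsilon$ device from part (ii) of that proof to avoid the discontinuity at $0$), and Lemma \ref{lemma_domination2}-(1) in its $A^*=0$ form gives the first inequality in \eqref{Nashnot}. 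The second inequality follows from the martingale property of $e^{-r(t\wedge\tau_{B^*})}v_{0+,B^*}(X_{t\wedge\tau_{B^*}})$ together with Lemma \ref{lemma_domination2}-(2). In short: approximate in the \emph{verification argument}, not in the \emph{saddle-point statement}.
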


In the remainder of this section, we take the following steps to prove the existence of $(A^*,B^*)$ and Theorems \ref{theorem_equilibrium}-\ref{theorem_equilibrium_zero}:
\begin{enumerate}
\item In Section \ref{subsection_exp_scale}, we express the candidate value function $v_{A,B}$ in terms of the \lev scale function.
\item In Section \ref{subsection_cont_smooth_fit}, we establish the sufficient conditions for continuous and smooth fit.
\item In Section \ref{subsection_existence}, we show the existence of the candidate optimal thresholds $A^*$ and $B^*$ (Theorem \ref{lemma_classification_b}).
\item In Section \ref{subsection_verification}, we verify the optimality of the candidate optimal exercise strategies.
\end{enumerate}
Furthermore, in Section  \ref{subsection_existence}  we provide an efficient algorithm to compute the pair $(A^*,B^*)$ and $v_{A^*,B^*}(x)$.  Finally, with Theorems \ref{theorem_equilibrium}-\ref{theorem_equilibrium_zero}, the value of the step-down game is recovered by $V(x) = C(x) + v(x)$ by Proposition \ref{prop-V} and that of the step-up game is recovered by $V(x) = C(x) - v(x)$  by Proposition \ref{prop-sym}.

\begin{remark}\label{rem-RNeu}
For the fair valuation of the default swap game, one may specify $\p$ as the risk-neutral pricing measure. The risk-neutrality condition would require that $\phi(1)=r$ so that the discounted asset value is a $(\p, \mathbb{F})$-martingale. This condition is not needed for our solution approach and equilibrium results.
\end{remark}

\subsection{Expressing $v_{A,B}$ using the scale function.} \label{subsection_exp_scale}
In this subsection, we shall summarize the scale function associated with the process $X$, and then  apply this to compute the candidate value function $v_{A,B}(x)$ defined in \eqref{vbx}.  For any spectrally negative \lev process, there exists a function $W^{(r)}: \R \mapsto \R$, which is zero on $(-\infty,0)$ and continuous and strictly increasing on $[0,\infty)$. It is characterized by the Laplace transform:
\begin{align*}
\int_0^\infty e^{-s x} W^{(r)}(x) \diff x = \frac 1
{\phi(s)-r}, \qquad s > \lapinv,
\end{align*}
where $\Phi$ is the right inverse of $\phi$, defined by
\begin{equation}
\lapinv :=\sup\{\lambda \geq 0: \phi(\lambda)=r\}. \notag
\end{equation}
The function $W^{(r)}$ is often called  the  \emph{(r-)scale function} in the literature (see e.g. \cite{Kyprianou_2006}).

With $\lapinv$ and $W^{(r)}$, we  can define the function $W_{\lapinv} = \{ W_{\lapinv} (x); x \in \R \}$ by
\begin{align}
W_{\lapinv} (x)  = e^{ - \lapinv x}W^{(r)} (x) , \quad x \in \R.  \label{w_phi}
\end{align}
As is well known (see  \cite[Chapter 8]{Kyprianou_2006}), the function $W_{\lapinv}(x)$ is increasing, and satisfies \begin{align}
W_{\lapinv}(x) &\uparrow \frac 1 {\phi'(\lapinv)} \quad \textrm{as } \; x \uparrow\infty. \label{scale_function_asymptotic}
\end{align}
From Lemmas 4.3-4.4 of \cite{Kyprianou_Surya_2007}, we also summarize the behavior of $W^{(r)}$ in the neighborhood of zero:
\begin{align}
W^{(r)} (0) = \left\{ \begin{array}{ll} 0, & \textrm{unbounded variation} \\ \frac 1 {\mu}, & \textrm{bounded variation} \end{array} \right\} \quad \textrm{and} \quad
W^{(r)'} (0+) = \left\{ \begin{array}{ll}  \frac 2 {\nu^2}, & \nu > 0 \\   \infty, & \nu = 0 \; \textrm{and} \; \Pi(0,\infty) = \infty \\ \frac {r + \Pi(0,\infty)} {\mu^2}, & \textrm{compound Poisson} \end{array} \right\}. \label{lemma_zero}
\end{align}

To facilitate calculations, we   define the function
\begin{align*}
Z^{(r)}(x) := 1 + r \int_0^x W^{(r)}(y) \diff y, \quad x \in \R
\end{align*}
which satisfies that
\begin{align}
\frac {Z^{(r)}(x)}  {W^{(r)}(x)} \xrightarrow{x \uparrow \infty} \frac r {\lapinv}; \label{convergence_z_by_w}
\end{align}
see \cite{Kyprianou_2006} Exercise 8.5. By Theorem 8.5 of   \cite{Kyprianou_2006}, the  Laplace transform of ${\sigma_0}$ in (\ref{zeta}) can be expressed as
\begin{align}
\zeta(x) = Z^{(r)}(x) - \frac r {\lapinv} W^{(r)}(x), \quad x > 0. \label{zeta_by_scale_function}
\end{align}
Regarding the smoothness of the scale function, 
Assumption \ref{assump_atom} guarantees that $W^{(r)}(x)$ is differentiable on $(0,\infty)$ (see, e.g., \cite{Chan_2009}). By \eqref{zeta_by_scale_function}, Laplace transform function $\zeta$ is also differentiable on $(0,\infty)$, and so are the functions $h, g, f$ in \eqref{hx}-\eqref{fx}.  
In this paper, we need the twice differentiability for the case of unbounded variation.
\begin{assumption}  \label{assump_scale_twice}For the case $X$ is of unbounded variation, we assume that $W^{(r)}$ is twice differentiable on $(0, \infty)$. 
\end{assumption}
This assumption is automatically satisfied if $\nu > 0$  as in \cite{Chan_2009}, and the same property holds for $\zeta, h, g$, and $f$.  While this is not guaranteed for the unbounded variation case with $\nu = 0$, it is an assumption commonly needed when the verification of optimality requires the infinitesimal generator.  

Moreover, as in (8.18) of \cite{Kyprianou_2006},
\begin{align}
 \frac {W^{(r)'}(y)} {W^{(r)}(y)} \leq \frac {W^{(r)'}(x)} {W^{(r)}(x)} \quad \textrm{and} \quad \frac {W^{'}_{\Phi(r)}(y)} {W_{\Phi(r)}(y)} \leq \frac {W^{'}_{\Phi(r)}(x)} {W_{\Phi(r)}(x)},  \quad y > x > 0, \label{assumeW}
\end{align}
and, using \eqref{scale_function_asymptotic}, we deduce that
\begin{align}
\frac  {W^{(r)'}(x)} {W^{(r)}(x)} = \frac  {\Phi(r) e^{\Phi(r)x}W_{\Phi(r)}(x) + e^{\Phi(r)x}W_{\Phi(r)}'(x)} {e^{\Phi(r)x}W_{\Phi(r)}(x)} = \frac  {\Phi(r) W_{\Phi(r)}(x) + W_{\Phi(r)}'(x)} {W_{\Phi(r)}(x)} \xrightarrow{x \uparrow \infty}  {\Phi(r)}. \label{asymptotics_ratio_with_derivative}
\end{align}

In applying the scale function to compute $v_{A,B}(x)$, we first consider the case $0 < A < B < \infty$ and then extend to the cases $A \downarrow 0$ and $B \uparrow \infty$, namely,
\begin{align}
v_{A, \infty}(x) := \lim_{B \uparrow \infty}v_{A,B}(x) \quad \textrm{and} \quad v_{0+, B}(x) := \lim_{A \downarrow 0}v_{A,B}(x). \label{v_limit}
\end{align} For $0<A<x < B<\infty$, define
\begin{align}
\begin{split}
\Upsilon(x;A,B) &:= \Big( \frac \pcheck r - \costb \Big) \E^x\left[  e^{-r (\sigma_A \wedge \tau_B)} 1_{\{\tau_B < \sigma_A \}} \right]  +\Big( \frac \pcheck r + \costs \Big) \E^x\left[  e^{-r (\sigma_A \wedge \tau_B)} 1_{\{\tau_B > \sigma_A \, \textrm{or} \;  \sigma_A \wedge \tau_B = {\sigma_0} \}} \right] \\ &\; +(\acheck-\costs) \E^x\left[  e^{-r (\sigma_A \wedge \tau_B)}1_{\{\sigma_A \wedge \tau_B = {\sigma_0} \}} \right]. \end{split} \label{definition_upsilon}
\end{align}
We observe that $v_{A,B}(x) - h(x)$ and $v_{A,B}(x) - g(x)$ are similar and they  possess the common term $\Upsilon(x;A,B)$.
\begin{lemma} \label{lemma_delta_b} For $0<A<x < B<\infty$, 
\begin{align} \label{delta_by_upsilon}
\begin{split}
v_{A,B}(x) - h(x)  &=\Upsilon(x;A,B)  - \Big( \frac \pcheck r - \costb \Big), \\
v_{A,B}(x) - g(x)  &=\Upsilon(x;A,B)  - \Big( \frac \pcheck r + \costs \Big),
\end{split}
\end{align}
and
\begin{align}
\Upsilon(x;A,B) &=  W^{(r)}(x-A) \frac {\Psi(A,B)} {W^{(r)}(B-A)} + \Big( \frac \pcheck r + \costs \Big)  Z^{(r)}(x-A) - \left( \acheck - \costs \right) \kappa(x;A),  \label{upsilon_in_terms_of_scale_function}
\end{align}
 where
\begin{align}
\Psi(A,B) &:= \Big( \frac \pcheck r - \costb \Big) - \Big( \frac \pcheck r +\costs \Big) Z^{(r)} (B-A) + \left( \acheck - \costs \right) \kappa(B;A), \quad 0 < A < B < \infty, \label{def_Psi} \\
\begin{split}
\kappa(x;A) &:= \int_A^{\infty} \Pi(\diff u) \int_0^{u\wedge x-A} \diff z W^{(r)}(x-z-A) \\&\; = \frac 1 r \int_A^{\infty} \Pi(\diff u)  \left[ Z^{(r)}(x-A) - Z^{(r)}(x-u) \right], \quad x > A > 0.  \end{split}\label{def_kappa}
\end{align}
\end{lemma}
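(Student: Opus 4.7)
The plan is to reduce both identities to a strong Markov decomposition of $\lap$ and then invoke the classical two-sided fluctuation identities for spectrally negative \lev processes. Set $\tau_{AB}:=\tau_B\wedge\sigma_A$ and decompose the sample space into the three events (i) $\{\tau_B<\sigma_A\}$, on which $X_{\tau_B}=B$ by the absence of upward jumps; (ii) $\{\sigma_A<\tau_B\}$, on which $X_{\sigma_A}\in(0,A]$; and (iii) $\{\sigma_A=\tau_B\}$, on which necessarily $\sigma_A=\tau_B={\sigma_0}$ and $X_{{\sigma_0}}\le 0$, so $h(X_{\tau_B})=g(X_{\sigma_A})=0$ by the factor $1_{\{x>0\}}$. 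Substituting $h(y)=(\pcheck/r-\costb)-(\pcheck/r+\acheck)\lap(y)$ and $g(y)=(\pcheck/r+\costs)-(\pcheck/r+\acheck)\lap(y)$ into \eqref{vbx} produces
\[
v_{A,B}(x)=(\pcheck/r-\costb)E_1+(\pcheck/r+\costs)E_2-(\pcheck/r+\acheck)\E^x[e^{-r\tau_{AB}}\lap(X_{\tau_{AB}})1_{\{X_{\tau_{AB}}>0\}}],
\]
with $E_1:=\E^x[e^{-r\tau_B}1_{\{\tau_B<\sigma_A\}}]$ and $E_2:=\E^x[e^{-r\sigma_A}1_{\{\sigma_A<\tau_B\}}]$. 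Since $\{X_{\tau_{AB}}>0\}=\{\tau_{AB}<{\sigma_0}\}$, the strong Markov property at $\tau_{AB}$ gives $\E^x[e^{-r\tau_{AB}}\lap(X_{\tau_{AB}})1_{\{\tau_{AB}<{\sigma_0}\}}]=\lap(x)-E_3$, where $E_3:=\E^x[e^{-r\tau_{AB}}1_{\{\tau_{AB}={\sigma_0}\}}]$. Collecting terms gives $v_{A,B}(x)=\Upsilon(x;A,B)-(\pcheck/r+\acheck)\lap(x)$, and \eqref{delta_by_upsilon} follows at once by subtracting the closed forms of $h$ and $g$.

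Next I would evaluate $E_1$, $E_2+E_3$, and $E_3$ individually via scale-function identities. Since $A>0$, we may write $\sigma_A=\tau_A^-$ and $\tau_{AB}=\tau_A^-\wedge\tau_B^+$ in terms of the usual first-passage times; the standard two-sided exit identities then give
\[
E_1=\frac{W^{(r)}(x-A)}{W^{(r)}(B-A)},\qquad E_2+E_3=Z^{(r)}(x-A)-\frac{Z^{(r)}(B-A)}{W^{(r)}(B-A)}W^{(r)}(x-A).
\]
For the overshoot piece $E_3=\E^x[e^{-r\tau_A^-}1_{\{X_{\tau_A^-}\le 0,\,\tau_A^-<\tau_B^+\}}]$, I would apply the compensation formula to the Poisson random measure of jumps of $X$ together with the $r$-resolvent density
\[
u^{(r)}(x,y)=\frac{W^{(r)}(x-A)W^{(r)}(B-y)}{W^{(r)}(B-A)}-W^{(r)}(x-y),\quad A<x,y<B,
\]
of $X$ killed upon exit from $(A,B)$. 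This yields $E_3=\int_A^B u^{(r)}(x,y)\bar\Pi(y)\diff y$ with $\bar\Pi(y):=\Pi([y,\infty))$; a Fubini swap in \eqref{def_kappa} identifies $\int_A^z W^{(r)}(z-y)\bar\Pi(y)\diff y=\kappa(z;A)$, whence $E_3=\kappa(B;A)W^{(r)}(x-A)/W^{(r)}(B-A)-\kappa(x;A)$.

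Inserting these three identities into \eqref{definition_upsilon} with the coefficients $(\pcheck/r-\costb)$, $(\pcheck/r+\costs)$, and $(\acheck-\costs)$ on $E_1$, $E_2+E_3$, and $E_3$ respectively, and grouping the terms multiplying $W^{(r)}(x-A)/W^{(r)}(B-A)$ into $\Psi(A,B)$ of \eqref{def_Psi}, delivers \eqref{upsilon_in_terms_of_scale_function} directly. The alternative expression for $\kappa$ in \eqref{def_kappa} drops out of $Z^{(r)}(z)-Z^{(r)}(z')=r\int_{z'}^{z}W^{(r)}(w)\diff w$ together with a second application of Fubini. The main technical obstacle is the overshoot identity for $E_3$: one must verify via Assumption \ref{assump_atom} (non-atomic $\Pi$) together with the absence of upward jumps that creeping over $A$ contributes nothing to the event $\{X_{\tau_A^-}\le 0\}$, so that the Poisson integral against $\Pi$ captures all the relevant mass and the resolvent-density calculation is justified.
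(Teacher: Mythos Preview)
Your proposal is correct and, for the scale-function identity \eqref{upsilon_in_terms_of_scale_function}, follows exactly the paper's route: the two-sided exit formulas for $E_1$ and $E_2+E_3$, and the compensation formula combined with the killed resolvent density for $E_3$, with a Fubini swap to recognise $\kappa$.

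For \eqref{delta_by_upsilon} you take a slightly different path. The paper does not substitute the closed forms of $h,g$ and invoke the Markov property of $\lap$; instead it goes back to the integral representation in \eqref{v_expectation} and computes $v_{A,B}(x)-h(x)$ by directly rearranging the discounted cash-flow integrals until $\Upsilon$ emerges. Your route via the identity $\E^x[e^{-r\tau_{AB}}\lap(X_{\tau_{AB}})1_{\{X_{\tau_{AB}}>0\}}]=\lap(x)-E_3$ is more compact and makes the appearance of the $(\pcheck/r+\acheck)E_3$ term in $\Upsilon$ transparent; the paper's approach has the advantage that it does not need an additional appeal to the strong Markov property (it was already used once in Proposition~\ref{prop-V}).

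One remark on your final sentence: the ``technical obstacle'' you flag is not actually an issue. If $X$ creeps downward through level $A$, then $X_{\sigma_A}=A>0$, so creeping contributes nothing to $\{X_{\sigma_A}\le 0\}$ regardless of Assumption~\ref{assump_atom}; the event $\{\tau_{AB}=\sigma_0\}$ can occur only via a jump from $(A,B)$ directly into $(-\infty,0]$, and this is exactly what the compensation formula picks up. The non-atomic hypothesis on $\Pi$ plays no role at this step.
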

The function $\Psi(A,B)$ as in \eqref{def_Psi}  will play a crucial role in the continuous and smooth fit as we discuss in Subsection \ref{subsection_cont_smooth_fit} below and also in the proof of the existence of a pair  $(A^*,B^*)$ as in Subsection \ref{subsection_existence}.

Now we extend our definition of $v_{A,B}$ for $A=0+$ and $B = \infty$ as in \eqref{v_limit}, and then derive the strategies that attain them.
As we shall see in Corollary \ref{A_0_unbounded_variation} below, our candidate threshold level for the seller $A^*$ is always strictly positive if $X^d$ is of unbounded variation whether or not there is a Gaussian component.  For this reason, we consider the limit as $A \downarrow 0$ only when \eqref{bounded_var_cond} is satisfied.

In view of \eqref{delta_by_upsilon}, the limits in \eqref{v_limit} can be obtained by extending $\Upsilon(x;A,B)$ with $A \downarrow 0$ and $B \uparrow \infty$; namely we take limits in \eqref{upsilon_in_terms_of_scale_function}.   Here $\Psi$ as in \eqref{def_Psi} explodes as $B \uparrow \infty$ and hence we define an extended version of ${\Psi (A,B)}/ {W^{(r)}(B-A)}$ by, for any $0 \leq A < B \leq \infty$ (with the assumption $\int_0^1 u \Pi (\diff u) < \infty$ for $A = 0$),
\begin{align} \label{gamma_a_definition}
  \widehat{\Psi}(A,B)  &:= \left\{  \begin{array}{ll}  \frac 1 {W^{(r)}(B-A)} \left[\left( \frac \pcheck r - \costb \right) - \left( \frac \pcheck r +\costs \right) Z^{(r)} (B-A) + \left( \acheck - \costs \right) \kappa(B;A) \right], & B < \infty, \\ \frac 1 {\Phi(r)} \left( - (\pcheck + r \costs) + (\acheck - \costs) \rho(A) \right),  &  B = \infty, \end{array} \right.
\end{align}
where\begin{align*}
\rho(A) := \int_A^\infty \Pi (\diff u)  \left( 1 -  e^{-\lapinv (u-A)} \right) = \int_0^\infty \Pi (\diff u + A)  \left( 1 -  e^{-\lapinv u} \right), \quad A \geq 0
\end{align*}
and
\begin{align}
\kappa(x;0) := \int_0^{\infty} \Pi(\diff u) \int_0^{u\wedge x} \diff z W^{(r)}(x-z) = \frac 1 r \int_0^{\infty} \Pi(\diff u)  \left[ Z^{(r)}(x) - Z^{(r)}(x-u) \right], \quad x > 0. \label{kappa_zero}
\end{align}
Here, $\rho(0) = \int_0^\infty \Pi (\diff u)  \left( 1 -  e^{-\lapinv u} \right)$
is finite if and only if \eqref{bounded_var_cond} holds.
Clearly, $\widehat{\Psi} (A,B) = \frac {\Psi (A,B)} {W^{(r)}(B-A)}$ when $0 < A < B < \infty$.
We shall confirm the convergence results and other auxiliary results below.

%

\begin{lemma}\label{convergence_kappa} For any fixed $x > 0$,
\begin{enumerate}
\item $\kappa(x;A)$ is monotonically decreasing in $A$ on $(0,x)$,
\item  if $\int_0^1 u \Pi (\diff u) < \infty$, then $\kappa(x;0) = \lim_{A \downarrow 0} \kappa(x;A) < \infty$,
\item for every $A > 0$ (extended to  $A\geq 0$ if $\int_0^1 u \Pi (\diff u) < \infty$),
$\frac {\kappa(x; A)} {W^{(r)}(x-A)} \xrightarrow{x \uparrow \infty} \frac {\rho(A)} {\Phi(r)}$.
\end{enumerate}
\end{lemma}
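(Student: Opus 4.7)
My plan is to first recast $\kappa(x;A)$ into a form in which all three claims become essentially transparent. In the inner integral of \eqref{def_kappa}, I substitute $y = z + A$ to rewrite it as $\int_A^{u \wedge x} W^{(r)}(x - y) \diff y$, then swap the order of integration (Fubini is legitimate since the integrand is nonnegative). Using that $\Pi$ is atomless (Assumption \ref{assump_atom}) and writing $\overline{\Pi}(y) := \Pi((y, \infty))$, this yields
\begin{align*}
\kappa(x; A) = \int_A^x W^{(r)}(x - y)\, \overline{\Pi}(y) \diff y, \quad x > A \geq 0.
\end{align*}

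Part (1) is then immediate: as $A$ increases the integration domain $[A, x]$ shrinks while the nonnegative integrand is unchanged, so $\kappa(x; \cdot)$ is decreasing. For part (2), monotone convergence gives $\kappa(x; A) \uparrow \int_0^x W^{(r)}(x - y)\, \overline{\Pi}(y) \diff y$ as $A \downarrow 0$; this limit is finite because $W^{(r)}$ is bounded on $[0, x]$ and a further Fubini calculation yields
\begin{align*}
\int_0^x \overline{\Pi}(y) \diff y = \int_{(0, \infty)} (u \wedge x)\, \Pi(\diff u) < \infty
\end{align*}
under the bounded-variation hypothesis $\int_0^1 u\, \Pi(\diff u) < \infty$ (the large-$u$ part is controlled by $x\, \Pi((1, \infty))$, which is always finite).

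For part (3), I exploit the factorisation $W^{(r)}(z) = e^{\Phi(r) z} W_{\Phi(r)}(z)$ from \eqref{w_phi}: dividing inside the integral by $W^{(r)}(x - A)$ gives
\begin{align*}
\frac{W^{(r)}(x - y)}{W^{(r)}(x - A)} = e^{-\Phi(r)(y - A)}\, \frac{W_{\Phi(r)}(x - y)}{W_{\Phi(r)}(x - A)},
\end{align*}
and since $W_{\Phi(r)}$ is nondecreasing with finite limit $1/\phi'(\Phi(r))$ by \eqref{scale_function_asymptotic}, the ratio on the right is bounded by $1$ and converges pointwise to $1$ as $x \uparrow \infty$. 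Writing the integral over $[A, x]$ as an integral over $[A, \infty)$ against $\mathbf 1_{\{y \leq x\}}$, the full integrand is dominated by $e^{-\Phi(r)(y - A)} \overline{\Pi}(y)$, and a last Fubini computation gives
\begin{align*}
\int_A^\infty e^{-\Phi(r)(y - A)} \overline{\Pi}(y) \diff y = \frac{1}{\Phi(r)} \int_A^\infty \bigl(1 - e^{-\Phi(r)(u - A)}\bigr) \Pi(\diff u) = \frac{\rho(A)}{\Phi(r)},
\end{align*}
which is finite automatically for $A > 0$ and, when $A = 0$, thanks to $1 - e^{-\Phi(r) u} \sim \Phi(r) u$ near the origin together with $\int_0^1 u\, \Pi(\diff u) < \infty$. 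Dominated convergence then delivers the claimed limit. The only delicate point is securing the integrability of the dominating function at the left endpoint $y = A = 0$ in the extended case, which is exactly where the bounded-variation hypothesis is used.
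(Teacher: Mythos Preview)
Your proof is correct and in fact somewhat more streamlined than the paper's. The key difference is your initial Fubini swap to recast $\kappa(x;A) = \int_A^x W^{(r)}(x-y)\,\overline{\Pi}(y)\,\diff y$, after which all three parts become nearly immediate: the shrinking domain gives (1), monotone convergence plus the direct integrability of $\overline{\Pi}$ on $[0,x]$ gives (2), and a single dominated-convergence argument gives (3). The paper instead works with the original double integral throughout: for (1) it computes $\partial_A \kappa(x;A) = -W^{(r)}(x-A)\,\Pi(A,\infty)$ directly; for (2) it bounds the inner $z$-integral exponentially via $W_{\Phi(r)} \leq 1/\phi'(\Phi(r))$ to obtain $\kappa(x;A) \leq \frac{e^{\Phi(r)x}}{\Phi(r)\,\phi'(\Phi(r))}\,\rho(0)$ and then applies dominated convergence; for (3) it passes through the $Z^{(r)}$-representation of $\kappa$ and invokes \eqref{convergence_z_by_w}. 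Both routes rest on the same structural facts---the factorization $W^{(r)} = e^{\Phi(r)\cdot}W_{\Phi(r)}$ and the monotonicity and finite limit of $W_{\Phi(r)}$---but your single-integral form packages them more economically and avoids the detour through $Z^{(r)}$.
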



\begin{lemma}\label{remark_asymptotics_Psi}
\begin{enumerate}
\item We have $\lim_{B \uparrow \infty}\widehat{\Psi}(A,B) = \widehat{\Psi}(A,\infty)$ for every $A > 0$ (extended to  $A\geq 0$ if $\int_0^1 u \Pi (\diff u) < \infty$).
\item  When $\int_0^1 u \Pi (\diff u) < \infty$, for every $0 < B < \infty$ and $0 < B \leq \infty$, respectively,
\begin{align}
\lim_{A \downarrow 0}\Psi(A,B) =\Big( \frac \pcheck r - \costb \Big) - \Big( \frac
\pcheck r +\costs \Big) Z^{(r)} (B) + \left( \acheck - \costs
\right) \kappa(B;0) =: \Psi(0, B), \label{Psi_big_zero}
\end{align}
and $\widehat{\Psi}(0,B)  =  \lim_{A \downarrow 0}\widehat{\Psi}(A,B)$.
\item For every $A > 0$ (extended to  $A\geq 0$ if $\int_0^1 u \Pi (\diff u) < \infty$), $\Psi(A,A+) < 0$.
\end{enumerate}
\end{lemma}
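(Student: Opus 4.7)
My plan is to verify the three claims by substituting the explicit representations \eqref{def_Psi}--\eqref{kappa_zero} into \eqref{gamma_a_definition} and passing to each limit using the scale-function asymptotics together with the properties of $\kappa$ from Lemma~\ref{convergence_kappa}. For part~(i), I would rewrite $\widehat{\Psi}(A,B)$ as a sum of three $B$-dependent ratios and let $B \uparrow \infty$ term by term. The constant term $(\pcheck/r - \costb)/W^{(r)}(B-A)$ vanishes since $W^{(r)}(B-A) = e^{\Phi(r)(B-A)} W_{\Phi(r)}(B-A) \uparrow \infty$ by \eqref{scale_function_asymptotic}. The ratio $Z^{(r)}(B-A)/W^{(r)}(B-A)$ tends to $r/\Phi(r)$ by \eqref{convergence_z_by_w}, and $\kappa(B;A)/W^{(r)}(B-A)$ tends to $\rho(A)/\Phi(r)$ by Lemma~\ref{convergence_kappa}(iii). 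Collecting the terms reproduces the $B=\infty$ line of \eqref{gamma_a_definition}.

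For part~(ii) with $0<B<\infty$, the only $A$-dependence in $\Psi(A,B)$ is through $\kappa(B;A)$, and Lemma~\ref{convergence_kappa}(ii) gives $\kappa(B;A) \to \kappa(B;0)$, hence $\Psi(A,B) \to \Psi(0,B)$. Since $W^{(r)}(B-A) \to W^{(r)}(B) > 0$ (using $W^{(r)}(0) = 1/\mu > 0$ from \eqref{lemma_zero} in the bounded-variation regime), dividing delivers the limit for $\widehat{\Psi}$. For $B=\infty$ the claim reduces to showing $\rho(A) \to \rho(0)$, which I would obtain by dominated convergence on $\rho(A) = \int_0^\infty (1 - e^{-\Phi(r)(u-A)}) \I{u>A}\,\Pi(\diff u)$: the integrand is monotonically dominated by $1 - e^{-\Phi(r) u}$, which is $\Pi$-integrable precisely under the bounded-variation hypothesis $\int_0^1 u\,\Pi(\diff u) < \infty$ (and hence $\rho(0)<\infty$).

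For part~(iii), since $Z^{(r)}(B-A) \to 1$ as $B \downarrow A$, the claim reduces to $\kappa(A+;A) = 0$, after which $\Psi(A,A+) = -(\costb + \costs) < 0$ by the standing assumption $\costb + \costs > 0$. When $A>0$, the pointwise bound $Z^{(r)}(B-u) \geq 1$ for $u > A$ gives $\kappa(B;A) \leq \frac{\Pi(A,\infty)}{r}(Z^{(r)}(B-A) - 1)$, which vanishes because $\Pi(A,\infty)$ is finite. The main obstacle is the $A=0$ case, where $\Pi(0,\infty)$ may be infinite; the bound above is then useless. There I would split the outer integral at $u = B$ and use the monotonicity of $W^{(r)}$ to estimate $Z^{(r)}(B) - Z^{(r)}(B-u) \leq r u W^{(r)}(B)$ for $u \leq B$ and $Z^{(r)}(B) - 1 \leq r B W^{(r)}(B)$ for $u > B$, obtaining $\kappa(B;0) \leq W^{(r)}(B)\bigl(\int_0^B u\,\Pi(\diff u) + B\,\Pi(B,\infty)\bigr)$; each summand vanishes as $B \downarrow 0$, the first by $\int_0^1 u\,\Pi(\diff u) < \infty$ and the second by dominated convergence applied to $B\,\I{u>B}$ with dominating function $(u \wedge 1)\,\Pi(\diff u)$, integrable under bounded variation.
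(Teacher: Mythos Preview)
Your argument is correct, with one minor slip in part~(ii): $\Psi(A,B)$ also depends on $A$ through $Z^{(r)}(B-A)$, not only through $\kappa(B;A)$. This is harmless since $Z^{(r)}$ is continuous, so $Z^{(r)}(B-A) \to Z^{(r)}(B)$ as $A \downarrow 0$, and the conclusion stands. Note also that the hypothesis $\int_0^1 u\,\Pi(\diff u) < \infty$ does not force $X$ to be of bounded variation (a Gaussian component may be present), so your aside about $W^{(r)}(0) = 1/\mu$ need not apply; but all you actually need is $W^{(r)}(B) > 0$ for $B > 0$, which always holds.

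Parts~(i) and~(ii) match the paper's proof exactly. For part~(iii) the paper takes a different route: it invokes the uniform exponential bound $\kappa(x;A) \leq \frac{e^{\Phi(r)(x-A)}}{\Phi(r)\phi'(\Phi(r))}\rho(A)$ from \eqref{bound_kappa_by_rho} (established in the proof of Lemma~\ref{convergence_kappa}) to dominate the integrand of $\kappa(B;A)$ uniformly for $B$ near $A$, and then applies dominated convergence once to get $\kappa(B;A) \to 0$, treating $A>0$ and $A=0$ in a single stroke. Your approach instead uses the elementary pointwise estimates $Z^{(r)}(B)-Z^{(r)}(B-u)\leq r\,u\,W^{(r)}(B)$ and $Z^{(r)}(B)-1 \leq r\,B\,W^{(r)}(B)$, which avoids appealing to the exponential bound but requires a case split at $A=0$; both reach the same conclusion $\Psi(A,A+) = -(\gamma_b+\gamma_s) < 0$.
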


Using the above, for $0 < A < x$, we obtain the limit
\begin{align}
\Upsilon(x;A,\infty) := \lim_{B \uparrow \infty}\Upsilon(x;A,B) = W^{(r)}(x-A) \widehat{\Psi}(A,\infty) + \Big( \frac \pcheck r + \gamma_s \Big) Z^{(r)}(x-A) - \left( \acheck - \gamma_s \right) \kappa(x;A),   \label{upsilon_infty}
\end{align}
and, for $0 < x < B \leq \infty$,
\begin{align}
\Upsilon(x; 0+,B)  := \lim_{A \downarrow 0}\Upsilon(x; A,B)  =  W^{(r)}(x) \widehat{\Psi}(0,B) + \Big( \frac \pcheck r + \costs \Big)  Z^{(r)}(x) - \left( \acheck - \costs \right) \kappa(x;0). \label{upsilon_zero_via_scale}
\end{align}
In summary, we have expressed $v_{A,B}$ including its limits in \eqref{v_limit} in terms of the scale function.

\begin{remark} \label{remark_smoothness}We note that $v_{A,B}(x)$ is $C^1(A,B)$ and in particular $C^2(A,B)$ when $X$ is of unbounded variation.  Indeed, $\kappa(x;A)$ is $C^1(A,B)$ and in particular $C^2(A,B)$ when $X$ is of unbounded variation.  See also the discussion immediately before and after Assumption \ref{assump_scale_twice} for the same smoothness property on $(0,\infty) \backslash [A,B]$.
\end{remark}

We now construct the strategies that achieve $v_{A, \infty}(x)$ and $v_{0+, B}(x)$.  As the following remark shows, the interpretation of the former is fairly intuitive and it is attained when the buyer never exercises and his strategy is $\sigma_0$.
\begin{remark} \label{remark_asymptotics}
By \eqref{zeta_by_scale_function} and Lemma 3.4 of \cite{Leung_Yamazaki_2010}, respectively, we have, for any $A > 0$, $\E^x\left[  e^{-r \sigma_A} \right] = Z^{(r)} (x-A) - \frac r {\Phi(r)} W^{(r)}(x-A)$ and $ \E^x\left[  e^{-r \sigma_A}1_{\{\sigma_A  = {\sigma_0}  < \infty \}} \right] = {W^{(r)}(x-A)}   \frac {\rho(A)} {\Phi(r)} - \kappa(x;A)$
and hence it can be confirmed from \eqref{upsilon_infty} that
\begin{align*}
\Upsilon(x;A,\infty)  = \Big( \frac \pcheck r + \costs \Big) \E^x\left[  e^{-r \sigma_A} \right]+  (\acheck-\costs) \E^x\left[  e^{-r \sigma_A}1_{\{\sigma_A  = {\sigma_0} < \infty\}} \right],
\end{align*}
which corresponds to the value when the buyer's strategy is $\sigma_0$ and the seller's strategy is $\sigma_A$.
\end{remark}

On the other hand, $v_{0+, B}(x)$ is slightly more difficult to understand.
Suppose we substitute $A=0$ directly into \eqref{definition_upsilon} (or the seller never exercises and her strategy is $\sigma_0$), we obtain
\begin{align*}
\Upsilon(x;0,B) := \Big( \frac \pcheck r - \gamma_b \Big) \E^x \left[ e^{-r \tau_B} 1_{\{\tau_B < {\sigma_0}, \, \tau_B < \infty \}}\right] + \Big( \frac \pcheck r + \acheck \Big) \E^x \left[ e^{-r \tau_B} 1_{\{\tau_B = {\sigma_0} < \infty\}}\right], \quad 0 < B \leq \infty.
\end{align*}
As shown  in Remark \ref{lemma_creeping} below,  $\Upsilon(x;0,B)$ matches $\Upsilon(x;0+,B)$ if and only if there is not a Gaussian component. Upon the existence of Gaussian component, there is a positive probability of continuously down-crossing (creeping) zero, and the seller tends to exercise \emph{immediately before} it reaches zero rather than not exercising at all.

\begin{remark} \label{lemma_creeping} The right-hand limit  $\Upsilon(x;0+,B) := \lim_{A \downarrow 0}\Upsilon(x;A,B)$ is given by
\begin{align}
\label{upsilon_zero}\Upsilon(x;0+,B) = \Upsilon(x;0,B) - \left( \acheck - \gamma_s \right) \E^x \left[ e^{-r \tau_B}  1_{\{X_{\tau_B} = 0, \, \tau_B < \infty\}} \right], \quad 0 < x < B \leq \infty.
\end{align}
Therefore, $\Upsilon(x;A,B) \xrightarrow{A \downarrow 0} \Upsilon(x;0,B)$ if and only if the Gaussian coefficient $\nu = 0$.
\end{remark}


Upon the existence of a Gaussian component, $\Upsilon(x;0,B) > \Upsilon(x;0+,B)$, but there does not exist a seller's strategy that attains $v_{0+,B}$.  However, for any $\varepsilon > 0$, the $\varepsilon$-optimal strategy (when the buyer's strategy is $\tau_B$) can be attained by choosing a sufficiently small level.  Without a Gaussian component, $\Upsilon(x;0,B) = \Upsilon(x;0+,B)$ and the seller may choose ${\sigma_0}$.

\subsection{Continuous and Smooth Fit} \label{subsection_cont_smooth_fit}
We shall now find the candidate thresholds $A^*$ and $B^*$ by continuous
and smooth fit. As we will show below,
 the continuous and smooth fit conditions \eqref{fit_contin}-\eqref{fit_smooth} will yield the equivalent
conditions $\Psi(A^*,B^*) = \psi(A^*,B^*)=0$ where
\begin{align*}
\psi(A,B) := \frac \partial {\partial B}\Psi(A,B) =   - W^{(r)} (B-A) \left( \pcheck +\costs r \right)  + \left( \acheck - \costs \right) \int_A^\infty \Pi(\diff u)  \left( W^{(r)}(B-A) - W^{(r)}(B-u) \right),
\end{align*}
for all $0 < A < B < \infty$.  Here the second equality holds because for every $x > A > 0$
\begin{align}
\begin{split}
Z^{(r)'}(x-A) = r W^{(r)}(x-A) \quad \textrm{and} \quad \kappa'(x;A) = \int_A^\infty \Pi (\diff u) \left( W^{(r)}(x-A) - W^{(r)}(x-u) \right), \end{split}\label{derivative_z_kappa}
\end{align}
where the latter holds because $Z^{(r)'}(x) = r W^{(r)}(x)$ on $\R \backslash \{0\}$ and $Z^{(r)}$ is continuous on $\R$.


As in the case of $\Psi(A,\cdot)$, it can be seen that $\psi(A,\cdot)$ also tends to explode as $B \uparrow \infty$ with $A$ fixed.  For this reason, we also define the extended version of ${\psi (A,B)}/{W^{(r)}(B-A)}$ by, for any $0 \leq A < B \leq \infty$ (with the assumption $\int_0^1 u \Pi (\diff u) < \infty$ for $A = 0$),
\begin{align}
\label{gamma_b_definition}
\widehat{\psi}(A,B) &:= \left\{ \begin{array}{ll} -\left( \pcheck + \costs r \right) +  \left( \acheck - \costs \right)\int_{A}^{\infty} \Pi (\diff u) \left( 1 - \frac {W^{(r)}(B-u)} {W^{(r)}(B-A)} \right), & B < \infty, \\ -\left( \pcheck + r \costs \right)  +  (\acheck - \costs)  \rho(A), & B = \infty. \end{array} \right.
\end{align}
The convergence results as $A \downarrow 0$ and $B \uparrow \infty$ as well as some monotonicity properties are discussed below.

\begin{lemma}\label{lemma_gamma_B}
\begin{enumerate}
\item  For fixed $0 < B \leq \infty$, $\widehat{\psi}(A,B)$ is decreasing  in $A$ on $(0,B)$, and in particular when $\int_0^1 u \Pi (\diff u) < \infty$, $\widehat{\psi}(0,B) = \lim_{A \downarrow 0}\widehat{\psi}(A,B)$.
\item For fixed $A > 0$ (extended to  $A\geq 0$ if $\int_0^1 u \Pi (\diff u) < \infty$), $\widehat{\psi}(A,B)$ is decreasing in $B$ on $(A,\infty)$ and $\widehat{\psi}(A,B) \downarrow \widehat{\psi}(A,\infty)$ as $B \uparrow \infty$.
\item The relationship $\psi(0,B) = \partial \Psi(0,B)/ {\partial B}$ also holds for any $0 < B < \infty$ given $\int_0^1 u \Pi (\diff u) < \infty$ where $\Psi(0,B)$ is defined as in \eqref{Psi_big_zero} and 
\begin{align*}
 \psi(0, B) &:=W^{(r)}(B)\Big[-(\tilde{p}+\gamma_sr)+(\tilde{\alpha}-\gamma_s)\int_0^\infty
\Pi(\diff u)\Big(1-\frac{W^{(r)}(B-u)}{W^{(r)}(B)}\Big)\Big].
\end{align*}
\end{enumerate}
\end{lemma}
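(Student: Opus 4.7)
The plan is to treat the three parts in turn, each following fairly directly from the explicit expression for $\widehat{\psi}$ in \eqref{gamma_b_definition} and from the scale-function facts already recorded in Subsection~\ref{subsection_exp_scale}.

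For part~(1), the decisive ingredient is that the standing assumption $\tilde{\alpha}-\gamma_s>0$ makes the coefficient in front of the integral positive. When $A$ increases, the domain of integration $(A,\infty)$ shrinks and, because $W^{(r)}$ is strictly increasing on $(0,\infty)$, the denominator $W^{(r)}(B-A)$ shrinks as well. The latter raises the ratio $W^{(r)}(B-u)/W^{(r)}(B-A)$ and hence lowers the non-negative integrand $1-W^{(r)}(B-u)/W^{(r)}(B-A)$ pointwise. Both effects drive the integral down, giving $\widehat{\psi}(\cdot,B)$ decreasing on $(0,B)$. The limit as $A\downarrow 0$ in the presence of $\int_0^1 u\,\Pi(\diff u)<\infty$ then follows by monotone convergence once one notes that, for small $u$, the integrand is bounded by $u\,W^{(r)'}(B)/W^{(r)}(B)$, which is integrable against $\Pi$ under the hypothesis.

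For part~(2), the main task is to show that for each fixed $u>A$ the map $B\mapsto W^{(r)}(B-u)/W^{(r)}(B-A)$ is \emph{increasing} on $(u,\infty)$; for $B\le u$ it is identically zero, so monotonicity in $B$ overall will follow. After factoring $W^{(r)}=e^{\Phi(r)\cdot}W_{\Phi(r)}$ via \eqref{w_phi}, the exponentials cancel in the ratio and
\[
\frac{\diff}{\diff B}\log\frac{W_{\Phi(r)}(B-u)}{W_{\Phi(r)}(B-A)}=\frac{W'_{\Phi(r)}(B-u)}{W_{\Phi(r)}(B-u)}-\frac{W'_{\Phi(r)}(B-A)}{W_{\Phi(r)}(B-A)},
\]
which is non-negative by the log-concavity-type inequality in \eqref{assumeW} applied at $B-u<B-A$. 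Consequently $1-W^{(r)}(B-u)/W^{(r)}(B-A)$ is decreasing in $B$, and with $\tilde{\alpha}-\gamma_s>0$ we recover the monotonicity of $\widehat{\psi}(A,\cdot)$. For the limit, \eqref{scale_function_asymptotic} gives $W_{\Phi(r)}(B-u)/W_{\Phi(r)}(B-A)\to 1$, so the ratio converges to $e^{-\Phi(r)(u-A)}$; monotone convergence then identifies the limit integral as $\rho(A)$, reducing $\widehat{\psi}(A,B)$ to $\widehat{\psi}(A,\infty)$.

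For part~(3), I would differentiate $\Psi(0,B)$ from \eqref{Psi_big_zero} term by term in $B$. Using $Z^{(r)'}=rW^{(r)}$ on $\mathbb{R}\setminus\{0\}$ and the $A=0$ analogue of the identity in \eqref{derivative_z_kappa},
\[
\kappa'(B;0)=\int_0^\infty \Pi(\diff u)\bigl(W^{(r)}(B)-W^{(r)}(B-u)\bigr),
\]
and then factoring $W^{(r)}(B)$ out of the bracket reproduces exactly the formula for $\psi(0,B)$ stated in the lemma. The one technical point, and the step I expect to require the most care, is justifying the differentiation of $\kappa(\cdot;0)$ under the integral sign near $u=0$; this is licensed by the estimate $W^{(r)}(B)-W^{(r)}(B-u)\le u\,W^{(r)'}(B)$ for small $u$, whose integrability against $\Pi$ is precisely what the hypothesis $\int_0^1 u\,\Pi(\diff u)<\infty$ provides.
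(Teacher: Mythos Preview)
Your proposal is correct and follows essentially the same route as the paper: the same pointwise monotonicity of $W^{(r)}(B-u)/W^{(r)}(B-A)$ in $A$ and in $B$ (the latter via the $W_{\Phi(r)}$ factorisation and \eqref{assumeW}), and the same term-by-term differentiation of $\Psi(0,B)$ using \eqref{derivative_z_kappa}. One small imprecision: in part~(1) the mean-value bound should read $W^{(r)}(B)-W^{(r)}(B-u)\le u\,\sup_{t\in[B-\epsilon,B]}W^{(r)'}(t)$ rather than $u\,W^{(r)'}(B)$, since $W^{(r)'}$ need not be monotone; this is exactly how the paper (via its auxiliary Lemma on the finiteness of $\int_0^\infty\Pi(\diff u)\bigl(1-W^{(r)}(B-u)/W^{(r)}(B)\bigr)$) handles it, and your argument goes through unchanged with this correction. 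Your use of monotone convergence in place of the paper's dominated convergence is a harmless simplification.
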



\begin{figure}[htbp]
\begin{center}
\begin{minipage}{1.0\textwidth}
\centering
\begin{tabular}{cc}
\includegraphics[scale=0.6]{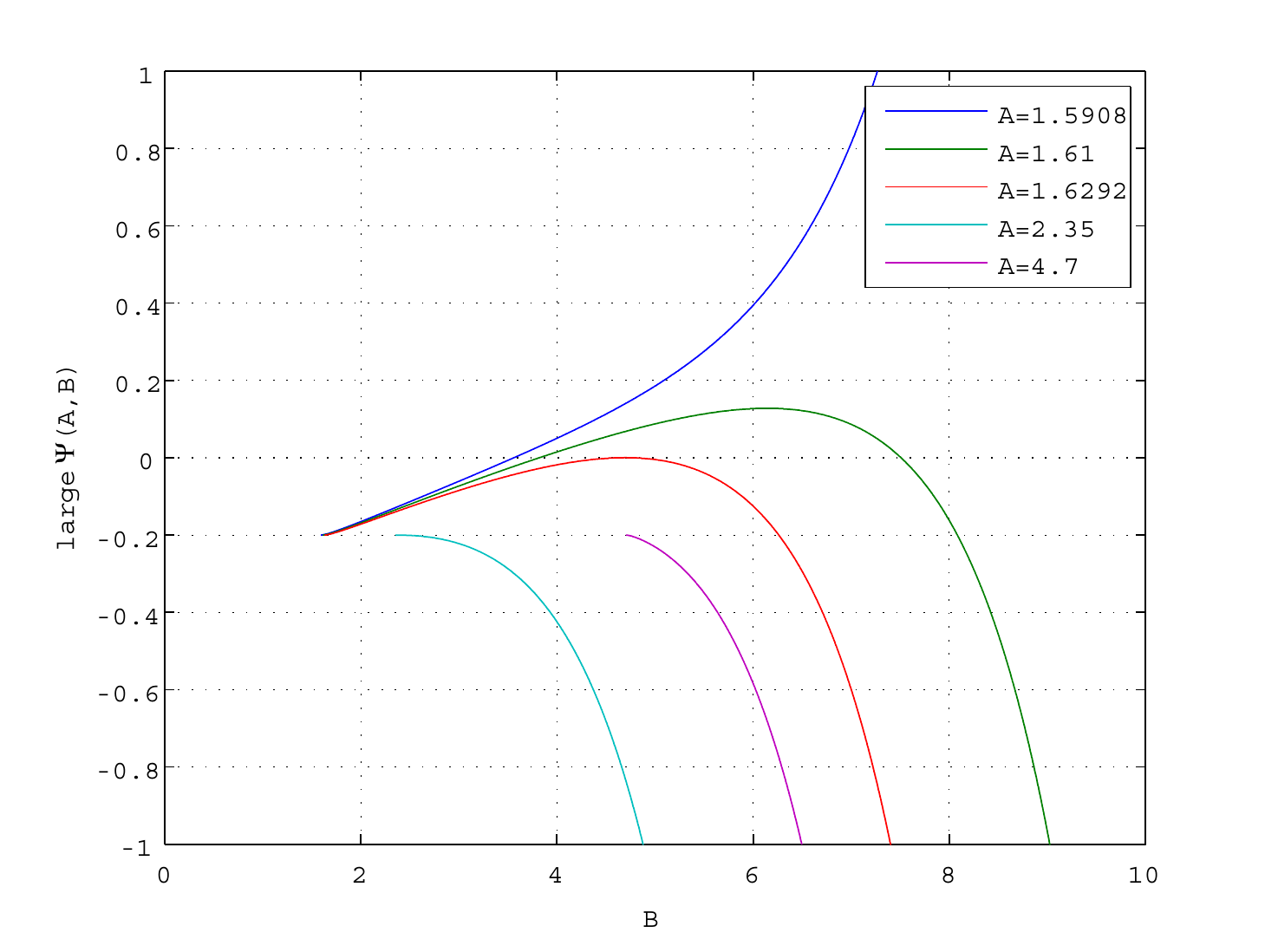}  & \includegraphics[scale=0.6]{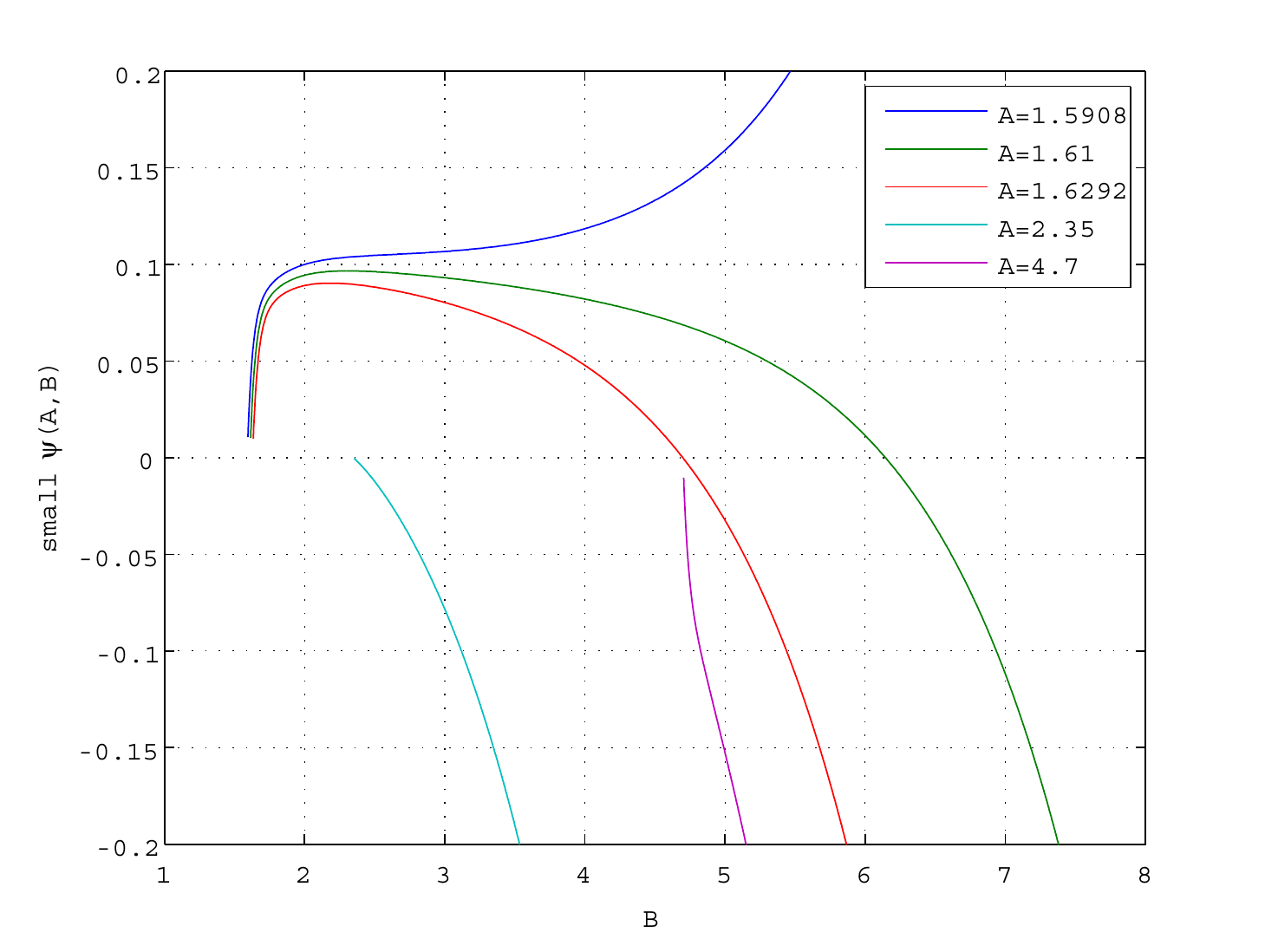} \\
$\Psi(A,B)$  & $\psi(A,B)$ \\
\includegraphics[scale=0.6]{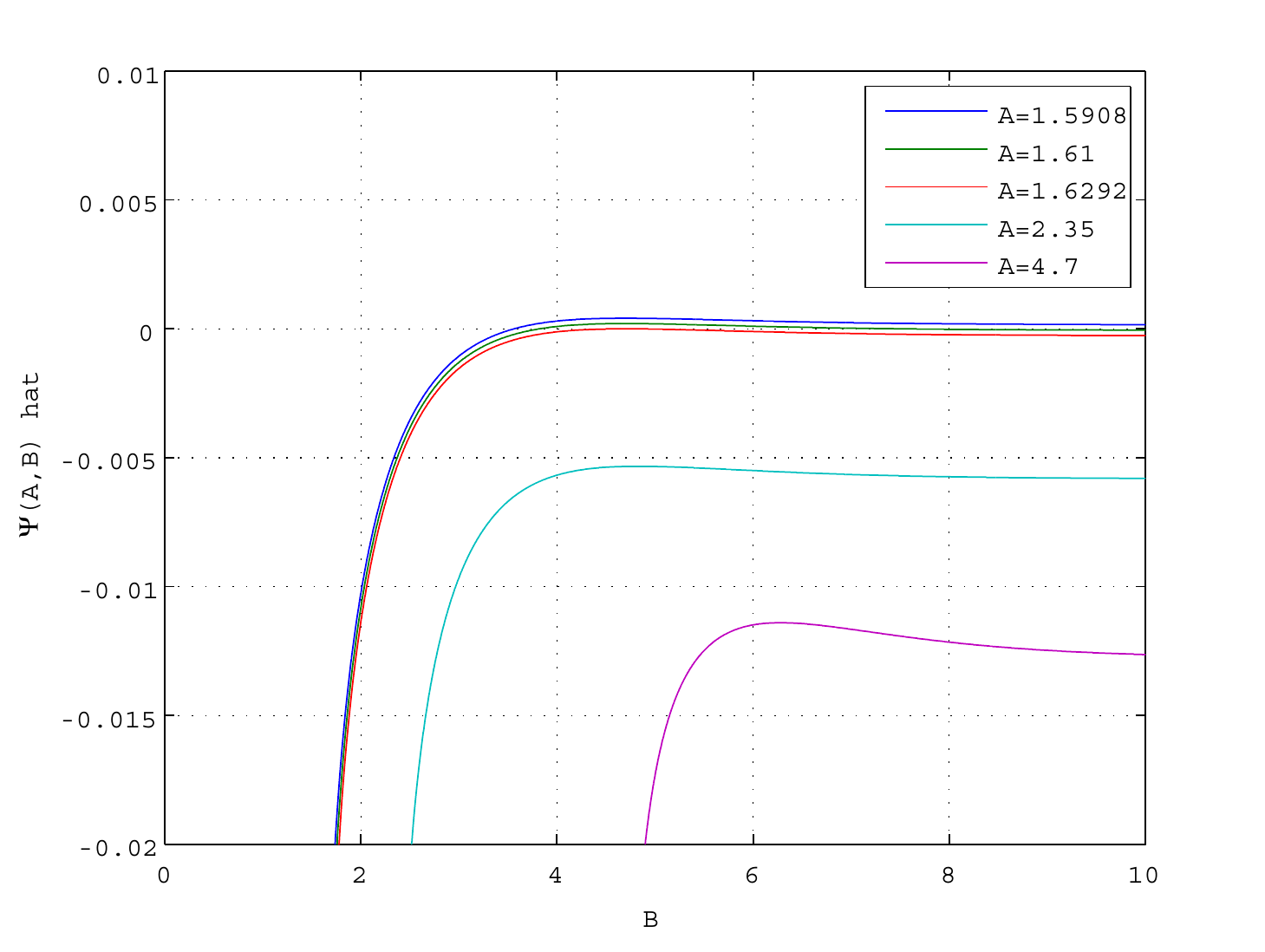}  & \includegraphics[scale=0.6]{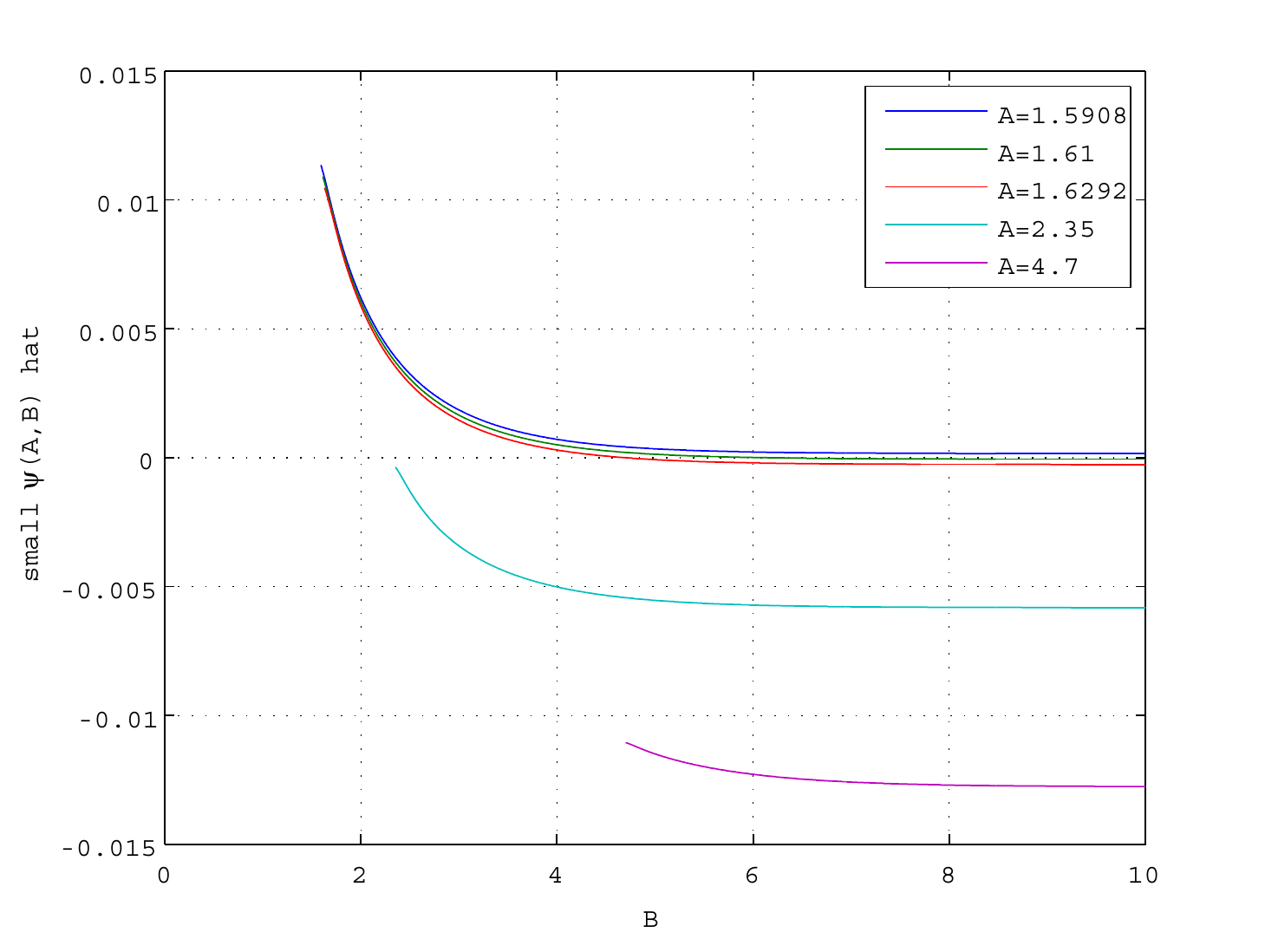} \\
$\widehat{\Psi}(A,B)$  & $\widehat{\psi}(A,B)$
\end{tabular}
\end{minipage}
\caption{\small Illustration of $\Psi(A,B)$,  $\widehat{\Psi}(A,B)$, $\psi(A,B)$, and $\widehat{\psi}(A,B)$ as functions of $B$. } \label{figure_psi}
\end{center}
\end{figure}

Figure \ref{figure_psi} gives numerical plots of $\Psi(A,\cdot)$, $\widehat{\Psi}(A,\cdot)$, $\psi(A,\cdot)$  and $\widehat{\psi}(A,\cdot)$ for various values of $A > 0$.   Lemma \ref{lemma_gamma_B}-(1,2) and the fact that $\widehat{\psi}(A,B) \geq 0 \Longleftrightarrow \psi(A,B) \geq 0$ imply that, given a fixed $A$, there are three possible behaviors for $\Psi$:
\begin{enumerate}
\item[(a)] For small $A$, $\Psi(A,B)$ is monotonically increasing in $B$.
\item[(b)] For large $A$, $\Psi(A,B)$ is monotonically decreasing in $B$.
\item[(c)] Otherwise $\Psi(A,B)$ first increases  and then decreases in $B$.
\end{enumerate}
{The behavior of $\Psi$ has implications for the existence and uniqueness of $A^*$ and $B^*$, as shown in Theorem \ref{lemma_classification_b} and Lemma \ref{lemma_b_bar} below.} Besides, it can be confirmed that $\widehat{\Psi}(A,\cdot)$ and $\widehat{\psi}(A,\cdot)$ converge as $B \uparrow \infty$ as in Lemmas \ref{remark_asymptotics_Psi}-(1) and \ref{lemma_gamma_B}-(2).  We shall see that the continuous/smooth fit conditions \eqref{fit_contin}-\eqref{fit_smooth} require (except for the case $A^*=0$ or $B^* = \infty$) that $\widehat{\Psi}(A^*,B^*)=\widehat{\psi}(A^*,B^*)=0$, or equivalently $\Psi(A^*,B^*)=\psi(A^*,B^*)=0$. This is illustrated by the line corresponding to $A = 1.6292$ in Figure  \ref{figure_psi}.

We begin with establishing the continuous fit condition.

\underline{Continuous fit at $B$:}   Continuous fit at $B$ is satisfied automatically for all cases since $v_{A,B}(B-)-h(B)$ exists and
\begin{align}
v_{A,B}(B-)-h(B) = \Upsilon(B-; A, B) - \Big( \frac \pcheck r - \costb \Big) =0, \quad 0 < A < B < \infty, \label{eq_continuous_fit_B}
\end{align}
which also holds when $A=0+$ and $v_{0+,B}(B-)-h(B)  = 0$ given $\int_0^1 u \Pi (\diff u) < \infty$.
This is also clear from the fact that a spectrally negative \lev process always creeps upward and hence $B$ is regular for $(B,\infty)$ for any arbitrary level $B > 0$ (see \cite[p.212]{Kyprianou_2006}).

\underline{Continuous fit at $A$:} We examine the limit of $v_{A,B}(x)-g(x)$ as $x\downarrow A$, namely,
\begin{align}
v_{A,B}(A+)-g(A)  =W^{(r)}(0) \widehat{\Psi}(A,B), \quad 0 < A < B \leq \infty. \label{eq_continuous_fit_A}
\end{align}
In view of \eqref{lemma_zero}, continuous fit at $A$ holds automatically for the unbounded variation case.  For the bounded variation case, the continuous fit condition is equivalent to
\begin{align}
\widehat{\Psi}(A,B)  = 0. \label{fit_A}
\end{align}


We now pursue the  smooth fit condition.   Substituting \eqref{derivative_z_kappa} into the derivative of \eqref{upsilon_in_terms_of_scale_function}, we obtain
\begin{align}
v_{A,B}'(x)-h'(x) = v_{A,B}'(x)-g'(x) = \Upsilon'(x;A,B) = W^{(r)'}(x-A) \widehat{\Psi}(A,B)  - \psi(A,x), \label{deltaprimes}
\end{align}
for every $0 < A < x < B \leq \infty$ (extended to $A=0+$ when $\int_0^1 u \Pi(\diff u) < \infty$). \\
\underline{Smooth fit at $B$:} With  \eqref{deltaprimes},  the smooth fit condition $v_{A,B}'(B-)-h'(B) = 0$ at $B < \infty$ amounts to
\begin{align*}
\frac \partial {\partial B} \widehat{\Psi}(A,B)  = 0
\end{align*}
because
\begin{align*}
W^{(r)'}(B-A) \widehat{\Psi}(A,B)  - \psi(A,B) = - W^{(r)}(B-A) \Big(\widehat{\psi}(A,B) - \frac {W^{(r)'}(B-A)} {W^{(r)}(B-A)} \widehat{\Psi}(A,B) \Big)
\end{align*}
and
\begin{align}
\frac \partial {\partial B} \widehat{\Psi}(A,B) = \widehat{\psi}(A,B) - \frac {W^{(r)'}(B-A)} {W^{(r)}(B-A)} \widehat{\Psi}(A,B). \label{Psi_hat_derivative}
\end{align}

 For the case $A=0+$ and $\int_0^1 u \Pi (\diff u) < \infty$,  the smooth fit condition $v_{0+,B}'(B-)-h'(B) =0$ requires $\frac \partial {\partial B} \widehat{\Psi}(0,B)=0$, which is well-defined by Lemmas \ref{remark_asymptotics_Psi}-(2) and \ref{lemma_gamma_B}-(1) and \eqref{Psi_hat_derivative}.

%
%
%
\underline{Smooth fit at $A$:}  Assuming that it has paths of unbounded variation ($W^{(r)}(0)=0$), then we obtain
\begin{align*}
v_{A,B}'(A+)-g'(A)  &= W^{(r)'}(0+) \widehat{\Psi}(A,B), \quad 0 < A < B \leq \infty.
\end{align*}
Therefore, \eqref{fit_A} is also a sufficient condition for smooth fit at $A$ for the unbounded variation case.


We conclude that
\begin{enumerate}
\item if $\widehat{\Psi}(A,B)=0$, then continuous fit  at $A$ holds for the bounded variation case and both continuous and smooth fit  at $A$ holds for the unbounded variation case;
\item if $\frac \partial {\partial B} \widehat{\Psi}(A,B) =0$, then both continuous and smooth fit conditions at $B$  hold for all cases.
\end{enumerate}
If both $\widehat{\Psi}(A,B)=0$ and $\frac \partial {\partial B} \widehat{\Psi}(A,B)  = 0$ are satisfied, then $\widehat{\psi}(A,B)=0$ automatically follows by \eqref{Psi_hat_derivative}.



\subsection{Existence and Identification of $(A^*,B^*)$} \label{subsection_existence}In the previous subsection, we have derived the defining equations for  the candidate pair $(A^*,B^*)$. Nevertheless, the computation of $(A^*,B^*)$ is  non-trivial and depends on the behaviors of functions $\Psi(A,B)$ and $\psi(A,B)$. In this subsection, we prove the existence of  $(A^*,B^*)$ and provide a procedure to calculate their values.

Recall from Lemma \ref{lemma_gamma_B}-(1) that $\widehat{\psi}(A,\infty)$ is decreasing in $A$ and observe that $\widehat{\psi}(A,A+) := \lim_{x \downarrow A}\widehat{\psi}(A,x) = -(\pcheck + r \gamma_s) + (\acheck - \gamma_s) \Pi(A,\infty)$ is also decreasing in $A$.  Hence, let $\underline{A}$ and $\overline{A}$ be the unique values such that
\begin{align}
\widehat{\psi}(\underline{A},\infty) &\equiv -\left( \pcheck + r \costs \right)  + (\acheck - \costs) \rho(\underline{A}) = 0, \label{psi_A_under_over}\\
\widehat{\psi}(\overline{A},\overline{A}+) &\equiv -\left( \pcheck + r \costs \right)  + (\acheck - \costs) \Pi (\overline{A},\infty)  = 0,\label{psi_A_under_over2}
 \end{align}
upon existence; we set the former zero if $\widehat{\psi}(A,\infty) < 0$ for all $A \geq 0$ and also set the latter zero if $\widehat{\psi}(A,A+) < 0$ for any $A \geq 0$.  Since $\rho(A) \downarrow 0$ and $\Pi(A,\infty) \downarrow 0$ as $A \uparrow \infty$, $\overline{A}$ and $\underline{A}$ are finite. In addition,  $\rho(A) < \Pi(A,\infty)$ implies that $\overline{A} \geq \underline{A}$.

Define for every $\underline{A} \leq A \leq \overline{A}$,
\begin{align} \label{def_b}
\begin{split}
&\underline{b}(A) := \inf \{ B > A: \widehat{\Psi}(A,B) \geq 0 \} \equiv \inf \left\{ B > A: \Psi(A,B) \geq 0 \right\}, \\  &\overline{b}(A) := \inf \{ B > A: \widehat{\psi}(A,B) \leq 0 \} \equiv \inf \left\{ B > A: \psi(A,B) \leq 0 \right\}, \\ &b(A) := \inf  \{ B > A:  \widehat{\Psi}(A,B) - \widehat{\psi}(A,B) \frac {W^{(r)}(B-A)} {W^{(r)'}(B-A)} \geq 0 \},
\end{split}
\end{align}
where we assume $\inf \emptyset = \infty$.  For $b(A)$ above,  we recall from  \eqref{Psi_hat_derivative} that
\begin{align}
\widehat{\Psi}(A,B) - \widehat{\psi}(A,B) \frac {W^{(r)}(B-A)} {W^{(r)'}(B-A)} = 0 \Longleftrightarrow  \frac \partial {\partial B} \widehat{\Psi}(A,B)  = 0. \label{equivalence_psi_hat}
\end{align}
  Also, using Lemmas \ref{remark_asymptotics_Psi}-(1) and \ref{lemma_gamma_B}-(2) and that  $\Phi(r) \widehat{\Psi}(A,\infty) = \widehat{\psi}(A,\infty)$ (see \eqref{asymptotics_ratio_with_derivative}, \eqref{gamma_a_definition}, and \eqref{gamma_b_definition}), we obtain the limit
\begin{align}
\lim_{B \uparrow \infty}\Big( \widehat{\Psi}(A,B) - \widehat{\psi}(A,B) \frac {W^{(r)}(B-A)} {W^{(r)'}(B-A)}\Big) = 0. \label{eq_behavior_gamma}
\end{align}


Next, we show that there always exists a pair $(A^*,B^*)$ belonging to one of the following four cases:
\begin{description}
\item[case 1] $0 < A^*  < B^* < \infty$ with $B^* = \underline{b}(A^*) =  \overline{b}(A^*) < \infty$;
\item[case 2] $0 < A^* < B^* = \infty$ with $B^* = \underline{b}(A^*) =  \overline{b}(A^*) = \infty$ and $\widehat{\Psi}(A^*,\infty) = 0$;
\item[case 3] $0 = A^*  < B^* < \infty$ with $B^* = b(0) \leq \underline{b}(0)$;
\item[case 4] $0 = A^* < B^* = \infty$ with $\underline{b}(0) = \infty$ and $b(0) = \infty$.
\end{description}

\begin{theorem} \label{lemma_classification_b}
\begin{enumerate}
\item If $\underline{A} > 0$ and $\underline{b}(\underline{A}) < \infty$, then there exists $A^* \in (\underline{A},\overline{A})$ such that $B^* = \underline{b}(A^*) = \overline{b}(A^*) < \infty$.  This corresponds to \textbf{case 1}.
\item If $\underline{A} > 0$ and $\underline{b}(\underline{A}) = \infty$, then $A^*=\underline{A}$ and $B^*=\infty$ satisfy the condition for \textbf{case 2}.
\item If $\underline{A} = 0$, $\overline{A} > 0$, and  $\underline{b}(0) < \overline{b}(0)$, then there exists $A^* \in (0,\overline{A})$ such that $B^* = \underline{b}(A^*) = \overline{b}(A^*)$.  This corresponds to \textbf{case 1}.
\item Suppose (i) $\overline{A} = 0$ or (ii) $\underline{A} = 0$ and  $\underline{b}(0) \geq \overline{b}(0)$. If $b(0) < \infty$, then $A^*=0$ and $B^* = b(0)$ satisfy the condition for \textbf{case 3}. If $b(0) = \infty$, then $A^*=0$ and $B^* = \infty$ satisfy the condition for \textbf{case 4}.
\end{enumerate}
\end{theorem}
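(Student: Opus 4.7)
My plan is to perform a four-way case analysis driven by the signs of $\widehat{\psi}(A,\infty)$ and $\widehat{\psi}(A,A+)$, which in turn is equivalent to tracking the relative position of the two curves $B=\underline{b}(A)$ and $B=\overline{b}(A)$ in the $(A,B)$ plane. The main tools will be the monotonicities from Lemma \ref{lemma_gamma_B} and the boundary behaviour from Lemma \ref{remark_asymptotics_Psi}, together with the identity \eqref{Psi_hat_derivative} linking $\partial\widehat{\Psi}/\partial B$ to the pair $(\widehat{\Psi},\widehat{\psi})$ and the limiting relation $\widehat{\psi}(A,\infty)=\Phi(r)\,\widehat{\Psi}(A,\infty)$ that is immediate from \eqref{gamma_a_definition} and \eqref{gamma_b_definition}. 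A preliminary step, needed throughout, is to verify that on the interval of $A$ where they are finite, the maps $A\mapsto\underline{b}(A)$ and $A\mapsto\overline{b}(A)$ are continuous; I would prove this via the joint continuity of $\widehat{\Psi}$ and $\widehat{\psi}$ (ensured by Assumption \ref{assump_atom}) together with the implicit function theorem applied at the crossings, where the $B$-derivatives are nonzero by the strict monotonicity in Lemma \ref{lemma_gamma_B}.

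For parts (1) and (3), I would argue by sign change of $A\mapsto\underline{b}(A)-\overline{b}(A)$. At the lower endpoint ($A=\underline{A}$ in (1), $A=0$ in (3)), the identity $\widehat{\psi}(A,\infty)=0$ together with strict decrease in $B$ forces $\widehat{\psi}(A,B)>0$ for every finite $B$, so $\overline{b}(A)=\infty$; since $\underline{b}(A)<\infty$ by hypothesis, the difference is strictly negative at that endpoint. At $A=\overline{A}$, the definition \eqref{psi_A_under_over2} gives $\widehat{\psi}(\overline{A},\overline{A}+)=0$, hence $\overline{b}(\overline{A})=\overline{A}$, while Lemma \ref{remark_asymptotics_Psi}-(3) gives $\Psi(\overline{A},\overline{A}+)<0$ and therefore $\underline{b}(\overline{A})>\overline{A}=\overline{b}(\overline{A})$. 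By continuity and the intermediate value theorem there exists $A^*$ in the open interval with $\underline{b}(A^*)=\overline{b}(A^*)=:B^*<\infty$; since the common value is finite, \eqref{equivalence_psi_hat} converts it into the pair of equations $\widehat{\Psi}(A^*,B^*)=\widehat{\psi}(A^*,B^*)=0$ that characterise case 1.

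Part (2) is essentially a one-line calculation: $\widehat{\psi}(\underline{A},\infty)=0$ by \eqref{psi_A_under_over}, and the limiting relation then gives $\widehat{\Psi}(\underline{A},\infty)=0$, so $A^*=\underline{A}$ and $B^*=\infty$ satisfy the requirements of case 2. For part (4), I would set $A^*=0$ and study $F(B):=\widehat{\Psi}(0,B)-\widehat{\psi}(0,B)\,W^{(r)}(B)/W^{(r)'}(B)$; by \eqref{equivalence_psi_hat} the zeros of $F$ coincide with those of $\partial\widehat{\Psi}(0,\cdot)/\partial B$, so $b(0)=\inf\{B>0:F(B)\geq 0\}$. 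Under either sub-hypothesis (i) $\overline{A}=0$ or (ii) $\underline{A}=0$ with $\underline{b}(0)\geq\overline{b}(0)$, the sign-change argument from the previous paragraph is unavailable, which translates into $F(B)<0$ for small $B$; combined with the limit $F(B)\to 0$ at infinity from \eqref{eq_behavior_gamma}, this gives $b(0)\in(0,\infty]$, which is case 3 if $b(0)<\infty$ and case 4 otherwise. The required inequality $b(0)\leq\underline{b}(0)$ in case 3 then follows from the negativity of $\widehat{\Psi}(0,\cdot)$ on $(0,\underline{b}(0))$ together with the representation $F(B)=-(W^{(r)}(B)/W^{(r)'}(B))\,\partial_B\widehat{\Psi}(0,B)$ and the fact that $\widehat{\Psi}(0,\cdot)$ must locally stop increasing before it can rise to zero from below.

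The main technical obstacle will be the rigorous verification of continuity of $\underline{b}(A)$ and $\overline{b}(A)$, since infima of level sets can jump at tangential zeros. To handle this, I would invoke the trichotomy of possible shapes of $\Psi(A,\cdot)$ (monotonically increasing, monotonically decreasing, or first increasing then decreasing) observed after Figure \ref{figure_psi} to rule out tangential zeros in the interior of the admissible range of $A$, and then use the strict monotonicity of $\widehat{\psi}(A,\cdot)$ in $B$ to ensure $\overline{b}(A)$ is always a transverse root. A secondary subtlety, encountered in part (4), is to confirm that the two sub-hypotheses (i) and (ii) together cover precisely the regime that is not already handled by (1)--(3), so that the four-way classification is genuinely exhaustive.
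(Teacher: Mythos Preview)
Your approach is essentially the same as the paper's: both arguments proceed by analysing the shapes of $\Psi(A,\cdot)$ and $\psi(A,\cdot)$ at the two endpoints $A=\underline{A}$ (or $0$) and $A=\overline{A}$, and then invoke continuity in $(A,B)$ to locate $A^*$. Two points deserve comment.

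First, in part (3) your lower-endpoint argument asserts $\widehat{\psi}(0,\infty)=0$, but when $\underline{A}=0$ this may fail: the definition sets $\underline{A}=0$ also in the case $\widehat{\psi}(A,\infty)<0$ for every $A\geq 0$, so it can happen that $\widehat{\psi}(0,\infty)<0$ and hence $\overline{b}(0)<\infty$. This is harmless for your conclusion, since the hypothesis $\underline{b}(0)<\overline{b}(0)$ already delivers the negative sign of $\underline{b}-\overline{b}$ at $A=0$ that you need for the intermediate-value step; the paper sidesteps the issue by arguing instead that $\Psi(0,\cdot)$ attains a strictly positive global maximum at $B=\overline{b}(0)$ (again because $\underline{b}(0)<\overline{b}(0)$), and then reusing the $A=\overline{A}$ endpoint from (1).

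Second, in part (4) the paper's route to $b(0)\leq\underline{b}(0)$ is more direct than yours: on the interval $[\overline{b}(0),\underline{b}(0)]$ one has $\widehat{\psi}(0,B)\leq 0$, whence $F(B)=\widehat{\Psi}(0,B)-\widehat{\psi}(0,B)\,W^{(r)}(B)/W^{(r)'}(B)\geq\widehat{\Psi}(0,B)$; evaluating at $B=\underline{b}(0)$ (where $\widehat{\Psi}(0,\cdot)$ first reaches $0$) gives $F(\underline{b}(0))\geq 0$, so $b(0)\leq\underline{b}(0)$ immediately. Your version, appealing to ``$\widehat{\Psi}(0,\cdot)$ must locally stop increasing before it can rise to zero from below,'' is heuristically right but needs this inequality to be made precise. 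Your observation that the identity $\widehat{\psi}(A,\infty)=\Phi(r)\,\widehat{\Psi}(A,\infty)$ gives $\widehat{\Psi}(\underline{A},\infty)=0$ directly in part (2) is in fact cleaner than the paper's derivation.
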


In particular,  from  \eqref{bounded_var_cond} and \eqref{psi_A_under_over} we infer that $\int_0^1 u \Pi (\diff u) = \infty$ implies $\underline{A} > 0$.  This together with  Theorem \ref{lemma_classification_b} leads to the following corollary.

\begin{corollary}  \label{A_0_unbounded_variation} If $X^d$ as in \eqref{x_decomposition} has paths of unbounded variation, then $\int_0^1 u \Pi (\diff u) = \infty$ and $A^* > 0$.
\end{corollary}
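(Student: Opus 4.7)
The plan is to read the corollary as two parts: first the implication \textbf{unbounded variation} $\Rightarrow \int_0^1 u\,\Pi(\diff u) = \infty$, which is essentially a rewriting of \eqref{bounded_var_cond}, and then the implication that this in turn forces $A^* > 0$. The first part requires no work: the equivalence in \eqref{bounded_var_cond} says that $X^d$ has bounded variation iff $\int_0^1 u\,\Pi(\diff u) < \infty$, and the contrapositive gives exactly the first assertion.

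For the second part, I would reduce the problem to showing $\underline{A} > 0$, where $\underline{A}$ is defined by \eqref{psi_A_under_over} via $\widehat{\psi}(\underline{A},\infty) = -(\pcheck + r\costs) + (\acheck - \costs)\rho(\underline{A}) = 0$. Inspecting the four cases of Theorem \ref{lemma_classification_b}, parts (1) and (2) both deliver $A^* \geq \underline{A}$, while parts (3) and (4) apply only when $\underline{A} = 0$ (they require either $\overline{A}=0$, which forces $\underline{A}=0$ by $\overline{A} \geq \underline{A}$, or outright $\underline{A}=0$). So $\underline{A} > 0$ rules out cases (3)--(4) and places us in case (1) or (2), where automatically $A^* > 0$.

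To show $\underline{A} > 0$, the key computation is the behavior of $\rho$ at the origin. Using
\[
\rho(A) = \int_A^\infty \Pi(\diff u)\bigl(1 - e^{-\Phi(r)(u-A)}\bigr),
\]
the integrand near $u=A$ behaves like $\Phi(r)(u-A)$, so on $(A,A+1)$ the finiteness of $\rho(A)$ is controlled by $\int_A^{A+1}(u-A)\Pi(\diff u)$, which is finite for $A > 0$ but diverges in the limit $A \downarrow 0$ precisely when $\int_0^1 u\,\Pi(\diff u) = \infty$. Monotone convergence then yields $\rho(A) \uparrow \rho(0) = \infty$ as $A \downarrow 0$. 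Consequently $\widehat{\psi}(A,\infty) \to +\infty$ as $A \downarrow 0$, so by Lemma \ref{lemma_gamma_B}-(1) the monotone decreasing function $A \mapsto \widehat{\psi}(A,\infty)$ is strictly positive on some right-neighborhood of $0$; hence the unique root $\underline{A}$ (which exists since $\widehat{\psi}(A,\infty) \to -(\pcheck + r\costs) < 0$ as $A \uparrow \infty$) must be strictly positive.

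The main obstacle, though quite mild, is making the $\rho(A) \uparrow \infty$ step rigorous: one must separate the integral into $(A, A+1)$ and $(A+1,\infty)$, apply the elementary bound $1 - e^{-x} \geq \tfrac12 x$ for small $x$ to get a lower bound of the form $\tfrac{\Phi(r)}{2}\int_A^{A+1}(u-A)\Pi(\diff u)$, and invoke monotone convergence to pass to $A \downarrow 0$. Once this divergence is established, the rest of the argument is an application of the classification in Theorem \ref{lemma_classification_b}.
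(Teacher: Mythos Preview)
Your proposal is correct and follows essentially the same route as the paper: deduce $\int_0^1 u\,\Pi(\diff u)=\infty$ from \eqref{bounded_var_cond}, use this to force $\underline{A}>0$ via the divergence of $\rho$ at the origin, and then invoke the case classification in Theorem~\ref{lemma_classification_b} to conclude $A^*>0$. The paper's argument is compressed into a single sentence preceding the corollary because the key fact you spell out---that $\rho(0)$ is finite if and only if \eqref{bounded_var_cond} holds---was already stated immediately after the definition of $\rho$, so your lower-bound computation for $\rho(A)$ is an elaboration of something the paper takes as established.
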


\begin{remark} \label{remark_A_B_sign}
Note that $\underline{b}(A) = \overline{b}(A)$ implies $b(A)=\underline{b}(A) = \overline{b}(A)$ (even when they are $+\infty$; see \eqref{eq_behavior_gamma}). By the construction in \eqref{def_b}, $A^*$ and $B^*$ obtained above must satisfy:
\begin{enumerate}
\item For every $A^* < B < B^*$, $\widehat{\Psi}(A^*,B) < 0$ and  $\widehat{\Psi}(A^*,B) - \widehat{\psi}(A^*,B) \frac {W^{(r)}(B-A^*)} {W^{(r)'}(B-A^*)} < 0$.
\item If $A^* > 0$, then $\widehat{\Psi}(A^*,B^*) = 0$\, (continuous or smooth fit at $A^*$ is satisfied).
\item $\widehat{\Psi}(A^*,B^*) - \widehat{\psi}(A^*,B^*) \frac {W^{(r)}(B^*-A^*)} {W^{(r)'}(B^*-A^*)} = 0$\,  (continuous and smooth fit at $B^*$ is satisfied, see \eqref{equivalence_psi_hat}).
\end{enumerate}
\end{remark}


In Theorem \ref{lemma_classification_b}-(1,3), we need to further  identify $(A^*, B^*)$. To this end, we first observe
\begin{lemma}\label{lemma_b_bar}
(1)\, $\underline{b}(A)$ increases in $A$ on $(\underline{A},\overline{A})$,  and (2)\, $\overline{b}(A)$ decreases in $A$ on $(\underline{A},\overline{A})$.
\end{lemma}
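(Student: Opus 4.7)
I plan to establish both parts via the joint monotonicity of $\Psi$ and $\widehat\psi$ in their two arguments, reducing each claim to a comparison in the first variable.

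Part (2) follows quickly from Lemma~\ref{lemma_gamma_B}. For $A_1 < A_2$ in $(\underline A, \overline A)$ and any $B \in (A_2, \overline b(A_2))$, Lemma~\ref{lemma_gamma_B}-(2) gives $\widehat\psi(A_2, B) > 0$ (since $B$ lies strictly before the first zero of the decreasing map $\widehat\psi(A_2, \cdot)$), and Lemma~\ref{lemma_gamma_B}-(1) then gives $\widehat\psi(A_1, B) \ge \widehat\psi(A_2, B) > 0$, so $B < \overline b(A_1)$. Letting $B \uparrow \overline b(A_2)$ yields $\overline b(A_2) \le \overline b(A_1)$.

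Part (1) is the more substantial claim. I would first prove the key monotonicity: for fixed $B$, the map $A \mapsto \Psi(A, B)$ is non-increasing on $(\underline A, \overline A) \cap (0, B)$. Taking $A_1 < A_2 < B$ in that range and splitting the outer integral in the second expression of \eqref{def_kappa} at $u = A_2$ gives
\begin{align*}
\kappa(B; A_1) - \kappa(B; A_2) &= \tfrac{1}{r}\bigl[Z^{(r)}(B-A_1) - Z^{(r)}(B-A_2)\bigr]\,\Pi((A_2,\infty)) \\
&\quad + \tfrac{1}{r}\int_{A_1}^{A_2} \Pi(\diff u)\bigl[Z^{(r)}(B-A_1) - Z^{(r)}(B-u)\bigr].
\end{align*}
Substituting into \eqref{def_Psi} and collecting the coefficient of $Z^{(r)}(B-A_1) - Z^{(r)}(B-A_2)$, I arrive at
\begin{align*}
\Psi(A_1, B) - \Psi(A_2, B) &= \tfrac{1}{r}\bigl[Z^{(r)}(B-A_1) - Z^{(r)}(B-A_2)\bigr]\,\widehat\psi(A_2, A_2+) \\
&\quad + \tfrac{\acheck - \costs}{r} \int_{A_1}^{A_2} \Pi(\diff u)\bigl[Z^{(r)}(B-A_1) - Z^{(r)}(B-u)\bigr],
\end{align*}
where the identification $-(\pcheck + \costs r) + (\acheck - \costs)\Pi((A_2,\infty)) = \widehat\psi(A_2, A_2+)$ is read off the formula displayed immediately before \eqref{psi_A_under_over2}.

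For $A_2 \in (\underline A, \overline A)$, the map $A \mapsto \widehat\psi(A, A+)$ is strictly decreasing (through $\Pi((A,\infty))$) and vanishes at $\overline A$ by \eqref{psi_A_under_over2}, so $\widehat\psi(A_2, A_2+) > 0$. Together with the strict monotonicity of $Z^{(r)}$ and the standing assumption $\acheck > \costs$, both summands are non-negative, whence $\Psi(A_1, B) \ge \Psi(A_2, B)$. It then follows that $\{B > A_2 : \Psi(A_2, B) \ge 0\} \subseteq \{B > A_1 : \Psi(A_1, B) \ge 0\}$, giving $\underline b(A_2) \ge \underline b(A_1)$ (the case $\underline b(A_2) = \infty$ being trivial). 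The main technical step is the algebraic decomposition that collapses $\Psi(A_1, B) - \Psi(A_2, B)$ into the two manifestly non-negative summands; once that clean identity is isolated, the monotonicity of $\underline b$ is a one-line consequence.
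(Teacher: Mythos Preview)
Your proof is correct and follows essentially the same approach as the paper: both parts reduce to showing that $\Psi(A,B)$ is decreasing in $A$ on $(0,\overline A)$ (for part (1)) and invoking Lemma~\ref{lemma_gamma_B}-(1) (for part (2)). The only cosmetic difference is that the paper differentiates directly, obtaining $\partial_A \Psi(A,B) = -W^{(r)}(B-A)\,\widehat\psi(A,A+)$ via $\partial_A \kappa(B;A) = -W^{(r)}(B-A)\Pi(A,\infty)$, whereas you compute the finite difference $\Psi(A_1,B)-\Psi(A_2,B)$; both arguments hinge on the same sign condition $\widehat\psi(A,A+)>0$ for $A<\overline A$.
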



This lemma implies that (i) if $\overline{b}(A) > \underline{b}(A)$, then $A^*$ must lie on $(A,\overline{A})$ and (ii) if $\overline{b}(A) < \underline{b}(A)$, then $A^*$ must lie on $(\underline{A},A)$.  By Lemma \ref{lemma_b_bar} and Theorem \ref{lemma_classification_b}, the following algorithm, motivated by the bisection method, is guaranteed to output the pair $(A^*,B^*)$.  Here let $\varepsilon > 0$ be the error parameter.
\begin{description}
\item[Step 1] Compute $\underline{A}$ and $\overline{A}$.
\begin{description}
\item[Step 1-1] If (i) $\overline{A} = 0$ or (ii) $\underline{A} = 0$ and  $\underline{b}(0) \geq \overline{b}(0)$, then stop and conclude that this is \textbf{case 3} or \textbf{4} with $A^*=0$ and $B^* = b(0)$.
\item[Step 1-2] If $\underline{A} > 0$ and $\underline{b}(\underline{A}) = \infty$, then stop and conclude that this is \textbf{case 2} with $A^* = \underline{A}$ and $B^* = \infty$.
\end{description}
\item[Step 2]  Set $A = (\underline{A} + \overline{A})/2$.
\item[Step 3] Compute $\overline{b}(A)$ and $\underline{b}(A)$.
\begin{description}
\item[Step 3-1] If $|\overline{b}(A) - \underline{b}(A)| \leq \varepsilon$, then stop and conclude that this is \textbf{case 1} with $A^* = A$ and $B^* = \underline{b}(A)$ (or $B^* = \overline{b}(A)$).
\item[Step 3-2] If $|\overline{b}(A) - \underline{b}(A)| > \varepsilon$ and $\overline{b}(A) > \underline{b}(A)$, then set $\underline{A} = A$ and go back to \textbf{Step 2}.
\item[Step 3-3] If $|\overline{b}(A) - \underline{b}(A)| > \varepsilon$ and $\overline{b}(A) < \underline{b}(A)$, then set $\overline{A} = A$ and go back to \textbf{Step 2}.
\end{description}
\end{description}

\subsection{Verification of Equilibrium} \label{subsection_verification} We are now ready to prove  Theorems \ref{theorem_equilibrium}-\ref{theorem_equilibrium_zero}.  Our candidate value function for the Nash equilibrium  is given   by \eqref{vbx} and \eqref{v_limit} with $A^*$ and $B^*$ obtained by the procedure above. By Lemma \ref{lemma_delta_b},
\begin{align}
\begin{split}
v_{A^*,B^*}(x) &= \left\{ \begin{array}{ll} h(x), & x \geq B^* \\
h(x) + (v_{A^*,B^*}(x) - h(x)), & A^* < x < B^* \\
g(x), & x \leq A^*
\end{array}\right\} = - \Big(\frac{\pcheck}{r} +\acheck\Big) \zeta(x) + J(x)
\end{split} \label{value_function}
\end{align}
where
\begin{align} \label{def_J}
J(x) := \left\{ \begin{array}{ll} \frac{\pcheck}{r} - \costb, & x \geq B^*, \\
 \Upsilon(x; A^*, B^*),  & A^* < x < B^*, \\
\frac{\pcheck}{r} +\costs, & 0 < x \leq A^*, \\ \frac{\pcheck}{r} +\acheck & x \leq 0.
\end{array}\right.
\end{align}
When $A^* > 0$, $(\sigma_{A^*}, \tau_{B^*})$ is the candidate saddle point that attains $v_{A^*,B^*}(x)$.  When $A^*=0$, $v_{0+,B^*}(x)$ can be approximated by $(\sigma_{\varepsilon}, \tau_{B^*})$ for sufficiently small $\varepsilon > 0$.  The value of $\Upsilon(x; A^*, B^*)$ can be computed by \eqref{upsilon_in_terms_of_scale_function}, \eqref{upsilon_infty} and \eqref{upsilon_zero_via_scale}.

%


The proof of Theorems \ref{theorem_equilibrium}-\ref{theorem_equilibrium_zero} involves the crucial steps:
\begin{enumerate}[(i)]
\item Domination property
\begin{enumerate}
\item $\E^x \left[ e^{-r (\tau \wedge \sigma_{A^*})} v_{A^*,B^*}(X_{\tau \wedge \sigma_{A^*}}) 1_{\{\tau \wedge \sigma_{A^*} < \infty \}}\right] \geq v(x; \sigma_{A^*},\tau)$ for all $\tau \in \S$;
\item  $\E^x \left[ e^{-r (\sigma \wedge \tau_{B^*})} v_{A^*,B^*}(X_{\sigma \wedge \tau_{B^*}}) 1_{\{\sigma \wedge \tau_{B^*} < \infty \}}\right] \leq v(x;\sigma, \tau_{B^*})$ for all $\sigma \in \S$;
\end{enumerate}
\item Sub/super-harmonic property
\begin{enumerate}
\item $(\mathcal{L}-r) v_{A^*,B^*}(x) > 0$ for every $0 < x < A^*$;
\item $(\mathcal{L}-r) v_{A^*,B^*}(x) = 0$ for every $A^*< x < B^*$;
\item $(\mathcal{L}-r) v_{A^*,B^*}(x) < 0$ for every $x > B^*$.
\end{enumerate}
\end{enumerate} Here $\mathcal{L}$ is the infinitesimal generator associated with
the process $X$
\begin{align*}
\mathcal{L} f(x) &= c f'(x) + \frac 1 2 \nu^2 f''(x) + \int_0^\infty \left[ f(x-z) - f(x) +  f'(x) z 1_{\{0 < z < 1\}} \right] \Pi(\diff z)
\end{align*}
applied to any bounded and sufficiently smooth function $f$ that is $C^2$ when $X$ is of unbounded variation and $C^1$ otherwise.

After establishing (i)-(ii) above, we will apply them to establish \eqref{Nashv} by showing for the candidate optimal thresholds $(A^*,B^*)$ that
\begin{align}\label{NashvAB}
v(x; \sigma_{A^*}, \tau)  \leq v_{A^*,B^*}(x)\leq v(x; \sigma, \tau_{B^*}), \qquad \forall \sigma, \tau \in \S.
\end{align}

\begin{remark}\label{remark_laststep} In fact, it is sufficient to show \eqref{NashvAB}  holds  for all $\tau \in \S_{A^*}$ and $\sigma \in \S_{B^*}$, where
\begin{align}
\begin{split}
\S_{A^*} := \left\{ \tau \in \S:  X_{\tau} \notin (0,A^*]\; a.s.\right\} \quad \textrm{and} \quad \S_{B^*} := \left\{ \sigma \in \S:  X_{\sigma} \notin [B^*,\infty) \; a.s. \right\}.
\end{split} \label{set_stopping_times_modified}
\end{align}Indeed, for any candidate $\tau \in \S$, it follows that $v(x; \sigma_{A^*}, \tau)\leq v(x; \sigma_{A^*}, \hat{\tau})$ where $\hat{\tau} := \tau 1_{\{X_\tau \notin (0,A^*]\}} + {\sigma_0} 1_{\{X_\tau \in (0,A^*]\}}\in \S_{A^*}$, so the buyer's optimal exercise time $\tau^*$ must belong to $\S_{A^*}$. This is intuitive since  the seller will end the game as soon as $X$ enters $(0,A^*]$ and hence the buyer should not  needlessly stop in this interval and pay $\costb$.   Similar arguments apply to the use of $\S_{B^*}$. Then, using the same arguments as for \eqref{vbx}, we can again safely eliminate the $f (\cdot)$ term in  \eqref{definition_v} and write 
\begin{align*}
v(x; \sigma_{A^*}, \tau) &= \E^x \left[ e^{-r (\tau\wedge \sigma_{A^*})} \left( h(X_{\tau}) 1_{\{\tau < \sigma_{A^*} \}} + g(X_{\sigma_{A^*}}) 1_{\{\tau > \sigma_{A^*} \}} \right)  1_{\{\tau \wedge \sigma_{A^*} < \infty\}} \right], \quad \tau \in \S_{A^*}, \\
v(x; \sigma, \tau_{B^*}) &= \E^x \left[ e^{-r (\tau_{B^*}\wedge \sigma)} \left( h(X_{\tau_{B^*}}) 1_{\{\tau_{B^*} < \sigma \}} + g(X_{\sigma}) 1_{\{\tau_{B^*} > \sigma \}}  \right) 1_{\{\tau_{B^*} \wedge \sigma < \infty\}} \right], \quad \sigma \in \S_{B^*}.
\end{align*}
\end{remark}

We prove properties (i)-(ii) above using the following lemmas.

\begin{lemma} \label{lemma_domination}
For every $x \in (A^* , B^*)$, the following inequalities hold:
\begin{align}
\label{ineq_g}v_{A^*, B^*}(x) - g(x) &\leq 0, \\
\label{ineq_h}v_{A^*, B^*}(x) - h(x) &\geq 0,
\end{align}
where it is understood for the case $A^*=0$ and $\nu > 0$ that the above results hold with $A^* = 0+$.
\end{lemma}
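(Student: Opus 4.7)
The plan is to reduce both inequalities \eqref{ineq_g} and \eqref{ineq_h} to a single monotonicity argument. Since $h(x)-g(x)\equiv-(\costb+\costs)$ on $(0,\infty)$ by \eqref{hx} and \eqref{gx}, setting $\Delta(x):=v_{A^*,B^*}(x)-g(x)$ turns the two assertions into $\Delta(x)\le 0$ and $\Delta(x)\ge-(\costb+\costs)$. The continuous-fit identity \eqref{eq_continuous_fit_B} at $B^*$ already gives $\Delta(B^*-)=h(B^*)-g(B^*)=-(\costb+\costs)$, so once $\Delta$ is shown to be non-increasing on $(A^*,B^*)$ with $\Delta(A^*+)\le 0$, both inequalities follow simultaneously from the sandwich $-(\costb+\costs)\le\Delta(x)\le 0$.

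By \eqref{deltaprimes}, together with Remark \ref{remark_smoothness} to justify the pointwise derivative, I have the factorisation
\begin{align*}
\Delta'(x)=W^{(r)}(x-A^*)\Big[\tfrac{W^{(r)'}(x-A^*)}{W^{(r)}(x-A^*)}\widehat{\Psi}(A^*,B^*)-\widehat{\psi}(A^*,x)\Big].
\end{align*}
In the main case $A^*>0$, Remark \ref{remark_A_B_sign}(2) gives $\widehat{\Psi}(A^*,B^*)=0$, and combining Remark \ref{remark_A_B_sign}(3) with \eqref{equivalence_psi_hat} yields $\widehat{\psi}(A^*,B^*)=0$. Lemma \ref{lemma_gamma_B}(2) then shows $\widehat{\psi}(A^*,\cdot)$ is decreasing, so $\widehat{\psi}(A^*,x)\ge 0$ on $(A^*,B^*]$, whence $\Delta'(x)\le 0$; and \eqref{eq_continuous_fit_A} provides $\Delta(A^*+)=W^{(r)}(0)\widehat{\Psi}(A^*,B^*)=0$. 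The degenerate sub-case $B^*=\infty$ would be handled using the limits in Lemmas \ref{remark_asymptotics_Psi}(1) and \ref{lemma_gamma_B}(2) together with Remark \ref{remark_asymptotics} for the asymptotic identification of $v_{A^*,\infty}$.

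The more delicate case is $A^*=0$, which by Corollary \ref{A_0_unbounded_variation} forces $X^d$ to be of bounded variation, so that $W^{(r)}(0)>0$ and $\int_0^1 u\,\Pi(\diff u)<\infty$. Here Remark \ref{remark_A_B_sign}(1) combined with the construction of $b(0)$ in \eqref{def_b} yields $\widehat{\Psi}(0,B^*)\le 0$ (taking the limit from $B\uparrow B^*$), while \eqref{equivalence_psi_hat} at $B^*$ gives
\begin{align*}
\tfrac{W^{(r)'}(B^*)}{W^{(r)}(B^*)}\widehat{\Psi}(0,B^*)-\widehat{\psi}(0,B^*)=0.
\end{align*}
For any $x\in(0,B^*)$, I would then invoke the monotonicity of $W^{(r)'}/W^{(r)}$ from \eqref{assumeW} to get $W^{(r)'}(x)/W^{(r)}(x)\ge W^{(r)'}(B^*)/W^{(r)}(B^*)$, which, since $\widehat{\Psi}(0,B^*)\le 0$, reverses direction upon multiplication; together with $\widehat{\psi}(0,x)\ge\widehat{\psi}(0,B^*)$ from Lemma \ref{lemma_gamma_B}(2) this shows the bracket is $\le 0$, hence $\Delta'(x)\le 0$. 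The boundary value $\Delta(0+)=W^{(r)}(0)\widehat{\Psi}(0,B^*)\le 0$ closes the argument.

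The main obstacle I anticipate is precisely this $A^*=0$ case, where neither factor of the derivative vanishes individually and both depend on $x$. The key insight is that the monotonicity of $W^{(r)'}/W^{(r)}$ recorded in \eqref{assumeW} exactly matches the monotonicity of $\widehat{\psi}(0,\cdot)$, so the bracket's maximum over $(0,B^*)$ is attained at $B^*$, where the smooth-fit identification \eqref{equivalence_psi_hat} makes it vanish; this is what pins down the non-positivity of $\Delta'$ throughout the interval.
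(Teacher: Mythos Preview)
Your argument is correct and takes a genuinely different route from the paper's. The paper proves \eqref{ineq_g} and \eqref{ineq_h} separately by varying the \emph{thresholds} while keeping $x$ fixed: for \eqref{ineq_g} it shows $A\mapsto v_{A,B^*}(x)-g(x)$ is increasing on $(A^*,x)$ and then evaluates at $A=x$ via \eqref{eq_continuous_fit_A}, after first establishing $\widehat{\Psi}(A,B^*)\le 0$ for all $A\in(A^*,B^*)$; for \eqref{ineq_h} it shows $B\mapsto v_{A^*,B}(x)-h(x)$ is increasing on $(x,B^*)$ and evaluates at $B=x$ via \eqref{eq_continuous_fit_B}. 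You instead fix $(A^*,B^*)$ and vary $x$, obtaining both inequalities from a single monotonicity of $\Delta$. Your approach is more economical, and in fact your core computation---showing $\Upsilon'(x;A^*,B^*)\le 0$ via \eqref{assumeW} and Remark~\ref{remark_A_B_sign}---is exactly what the paper later reproves inside the proof of Lemma~\ref{lemma_generator}(3) to show $J$ is decreasing. The paper's route, on the other hand, yields the auxiliary monotonicity in the thresholds as a byproduct and handles $B^*=\infty$ without any extra limit analysis, since the inequality is obtained before passing $B\uparrow\infty$.

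One point deserves more care than you give it: when $B^*=\infty$ your sandwich needs a finite lower anchor for $\Delta$, and the vague appeal to Remark~\ref{remark_asymptotics} is not quite enough, since $\lim_{x\to\infty}\Delta(x)=-(\pcheck/r+\gamma_s)$ is not obviously $\ge -(\gamma_b+\gamma_s)$. The clean fix is to note the identity
\[
\Delta(x)+(\gamma_b+\gamma_s)=W^{(r)}(x-A^*)\bigl[\widehat{\Psi}(A^*,B^*)-\widehat{\Psi}(A^*,x)\bigr],
\]
which follows directly from \eqref{upsilon_in_terms_of_scale_function} and \eqref{def_Psi}; then Remark~\ref{remark_A_B_sign}(1) together with \eqref{Psi_hat_derivative} gives $\partial_B\widehat{\Psi}(A^*,B)>0$ on $(A^*,B^*)$, so the bracket is nonnegative for every $x<B^*$ (including $B^*=\infty$), and \eqref{ineq_h} follows. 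This closes the only loose end in your outline.
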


Applying this lemma and the definitions of $\S_{A^*}$ and $\S_{B^*}$ in \eqref{set_stopping_times_modified} of Remark \ref{remark_laststep}, we obtain
\begin{lemma} \label{lemma_domination2}
Fix $x > 0$.
\begin{enumerate}
\item For every $\tau \in \S_{A^*}$, when $A^* > 0$
\begin{align*}
g(X_{\sigma_{A^*}}) 1_{\{\sigma_{A^*} < \tau \}} + h(X_{\tau}) 1_{\{\tau < \sigma_{A^*}\}} \leq v_{A^*,B^*}(X_{\sigma_{A^*} \wedge \tau}), \quad \p^x\textrm{-a.s.} \; \textrm{on } \{\sigma_{A^*} \wedge \tau < \infty\},
\end{align*}
and when $A^* = 0$,
\begin{align*}
- (\acheck - \gamma_s)  1_{\{X_{\tau} = 0\}} + h(X_{\tau}) 1_{\{\tau < {\sigma_0} \}} \leq v_{0+,B^*}(X_{\tau}), \quad \p^x\textrm{-a.s.} \; \textrm{on } \{\tau < \infty\}.
\end{align*}

\item For every $\sigma \in \S_{B^*}$,
\begin{align*}
g(X_{\sigma}) 1_{\{\sigma < \tau_{B^*} \}} + h(X_{\tau_{B^*}}) 1_{\{\tau_{B^*} < \sigma \}} \geq v_{A^*,B^*}(X_{\sigma \wedge \tau_{B^*}}), \quad \p^x\textrm{-a.s.} \; \textrm{on } \{\sigma \wedge \tau_{B^*} < \infty\},
\end{align*}
where it is understood for the case $A^*=0$ and $\nu > 0$ that the above holds with $A^* = 0+$.
\end{enumerate}
\end{lemma}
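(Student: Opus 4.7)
My plan is to prove Lemma \ref{lemma_domination2} by a case-by-case analysis on the value of $X$ at the relevant stopping times, using the explicit piecewise form of $v_{A^*, B^*}$ in \eqref{value_function}--\eqref{def_J} together with Lemma \ref{lemma_domination}, which supplies the key inequality in the continuation region $(A^*, B^*)$.

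I would begin with part (1) in the case $A^* > 0$. Fix $\tau \in \S_{A^*}$ and partition $\{\sigma_{A^*} \wedge \tau < \infty\}$ into the events $\{\sigma_{A^*} < \tau\}$, $\{\tau < \sigma_{A^*}\}$, and $\{\sigma_{A^*} = \tau\}$. On the first event we have $X_{\sigma_{A^*}} \leq A^*$; the definition \eqref{def_J} gives $v_{A^*, B^*}(X_{\sigma_{A^*}}) = g(X_{\sigma_{A^*}})$ when $0 < X_{\sigma_{A^*}} \leq A^*$, while for $X_{\sigma_{A^*}} \leq 0$ the identity $\zeta \equiv 1$ on $(-\infty, 0]$ combined with \eqref{value_function} yields $v_{A^*, B^*}(X_{\sigma_{A^*}}) = 0 = g(X_{\sigma_{A^*}})$ (the latter thanks to the indicator $1_{\{x > 0\}}$ in $g$). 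On the second event, $X_s \in (A^*, \infty)$ for $s < \sigma_{A^*}$ and $X_\tau \notin (0, A^*]$ (from $\tau \in \S_{A^*}$) together force $X_\tau > A^*$; then \eqref{ineq_h} handles $A^* < X_\tau < B^*$ and the identity $v_{A^*, B^*} = h$ on $[B^*, \infty)$ from \eqref{def_J} handles $X_\tau \geq B^*$. On the diagonal event $\{\sigma_{A^*} = \tau\}$ both indicators vanish and the constraints $X_{\sigma_{A^*}} \leq A^*$ and $X_\tau \notin (0, A^*]$ combine to force $X_{\sigma_{A^*}} \leq 0$, so $v_{A^*, B^*}(X_{\sigma_{A^*}}) = 0$ and the inequality is immediate.

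Part (2) follows by a dual argument. Fix $\sigma \in \S_{B^*}$, so $X_\sigma \notin [B^*, \infty)$ a.s., and partition analogously. On $\{\sigma < \tau_{B^*}\}$ the path has not yet exited $(0, B^*)$, so $X_\sigma \in (0, B^*)$; if $X_\sigma \leq A^*$ then $v_{A^*, B^*}(X_\sigma) = g(X_\sigma)$ by \eqref{def_J}, and if $A^* < X_\sigma < B^*$ then \eqref{ineq_g} gives $v_{A^*, B^*}(X_\sigma) \leq g(X_\sigma)$. On $\{\tau_{B^*} < \sigma\}$, spectral negativity of $X$ forces exit from $(0, B^*)$ to occur either by continuous up-crossing (creeping) of $B^*$, where $v_{A^*, B^*}(X_{\tau_{B^*}}) = h(X_{\tau_{B^*}})$ by \eqref{def_J}, or by a downward jump into $(-\infty, 0]$, where $h$ and $v_{A^*, B^*}$ both vanish. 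On $\{\sigma = \tau_{B^*}\}$ the constraint $X_\sigma < B^*$ combined with $X_{\tau_{B^*}} \notin (0, B^*)$ forces $X_{\tau_{B^*}} \leq 0$, and both sides vanish.

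The main obstacle is the $A^* = 0$ subcase of part (1). Here $\S_0 = \S$ since $(0, 0]$ is empty, so no constraint is placed on $\tau$. The key identification is $v_{0+, B^*}(0) = -(\acheck - \gamma_s)$, which follows directly from the expected-payoff formula for $v_{0+, B^*}$ introduced in the discussion preceding Theorem \ref{theorem_equilibrium_zero}: at $x = 0$ one has $\tau_{B^*} = 0$ and $X_{\tau_{B^*}} = 0$, so $v_{0+, B^*}(0) = h(0) - (\acheck - \gamma_s) = -(\acheck - \gamma_s)$. This matches exactly the creeping-penalty term $-(\acheck - \gamma_s) 1_{\{X_\tau = 0\}}$ on the left-hand side. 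Once this boundary value is in hand, the case analysis on $\{\tau < \sigma_0\}$ runs exactly as in the $A^* > 0$ argument (using Lemma \ref{lemma_domination} and $v_{0+, B^*} = h$ on $[B^*, \infty)$), while on $\{\tau = \sigma_0, X_\tau = 0\}$ equality holds by construction and on $\{\tau = \sigma_0, X_\tau < 0\}$ both sides vanish, closing the proof.
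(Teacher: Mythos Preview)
Your proof is correct and follows essentially the same approach as the paper's: both use Lemma~\ref{lemma_domination} together with the piecewise description \eqref{value_function}--\eqref{def_J} to establish the pointwise inequalities by cases. Your write-up is in fact more careful than the paper's terse version, in that you explicitly treat the diagonal event $\{\sigma_{A^*}=\tau\}$ (resp.\ $\{\sigma=\tau_{B^*}\}$) and verify that the constraints from $\S_{A^*}$ (resp.\ $\S_{B^*}$) force $X\leq 0$ there, which is implicit in the paper's final equality; your identification $v_{0+,B^*}(0)=-(\acheck-\gamma_s)$ for the $A^*=0$ case is exactly what the paper relies on as well.
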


\begin{lemma} \label{lemma_generator}
\begin{enumerate}
\item When $A^* > 0$, we have $(\mathcal{L}-r) v_{A^*,B^*}(x) > 0$ for every $0 < x < A^*$.
\item We have  $(\mathcal{L}-r) v_{A^*,B^*}(x) = 0$ for every $A^* < x < B^*$.
\item When $B^* < \infty$, we have $(\mathcal{L}-r) v_{A^*,B^*}(x) < 0$ for every $ x > B^*$.
\end{enumerate}
\end{lemma}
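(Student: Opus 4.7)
The common reduction is $v_{A^*,B^*}(x) = -(\pcheck/r + \acheck)\zeta(x) + J(x)$ together with $(\mathcal{L}-r)\zeta = 0$ on $(0,\infty)$, which yields $(\mathcal{L}-r)v_{A^*,B^*}(x) = (\mathcal{L}-r)J(x)$ for $x > 0$.  I handle the three assertions in the order (2), (1), (3).

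For Part (2), fix $x \in (A^*,B^*)$.  By the strong Markov property and the integral representation \eqref{vbx}, $v_{A^*,B^*}(x) = \E^x[e^{-r\theta}v_{A^*,B^*}(X_\theta)]$ for any stopping time $\theta$ bounded above by the exit from $(A^*,B^*)$.  Combined with the regularity from Remark \ref{remark_smoothness} (and Assumption \ref{assump_scale_twice} for the unbounded variation case), It\^o's/Dynkin's formula applied to $e^{-rt}v_{A^*,B^*}(X_t)$ yields $(\mathcal{L}-r)v_{A^*,B^*}(x) = 0$.

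For Part (1), $v_{A^*,B^*} = g$ on $(0,A^*)$.  Spectral negativity is essential: from $x \in (0,A^*)$ all jumps are downward into $(0,A^*) \cup (-\infty,0]$, on which $v$ and $g$ agree (both vanish on $\{y \leq 0\}$).  Hence $(\mathcal{L}-r)v(x) = (\mathcal{L}-r)g(x)$.  Writing $g = \tilde g + (\acheck - \costs)\mathbf{1}_{\{\cdot \leq 0\}}$ with $\tilde g(y) = \pcheck/r + \costs - (\pcheck/r + \acheck)\zeta(y)$, applying $(\mathcal{L}-r)\zeta \equiv 0$ on $(0,\infty)$, and computing the jump contribution of the indicator as $(\acheck - \costs)\Pi(x,\infty)$ (using Assumption \ref{assump_atom}), I get
\[ (\mathcal{L}-r)g(x) = -(\pcheck + r\costs) + (\acheck - \costs)\Pi(x,\infty) = \widehat{\psi}(x,x+). \]
By \eqref{psi_A_under_over2} and the monotonicity of $\Pi(\cdot,\infty)$, $\widehat{\psi}(A,A+)$ strictly decreases in $A$ and vanishes at $\overline{A}$, so Theorem \ref{lemma_classification_b}'s bound $A^* \leq \overline{A}$ delivers $\widehat{\psi}(x,x+) > 0$ on $(0,A^*)$.

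Part (3) is the main technical challenge.  For $x > B^*$, $v = h$ locally, but unlike Part (1), $v > h$ on $(A^*,B^*)$.  Setting $\Delta := v - h \geq 0$ (supported on $(0,B^*)$, equal to $\costs + \costb$ on $(0,A^*]$, vanishing to first order at $B^*$ by continuous and smooth fit), the same computation adapted from Part (1) gives
\[ (\mathcal{L}-r)v(x) = -(\pcheck - r\costb) + (\acheck + \costb)\Pi(x,\infty) + \int_0^{B^*}\Delta(u)\,\Pi(x - \diff u), \qquad x > B^*. \]
Part (2) plus continuity of $(\mathcal{L}-r)v$ at $B^*$ gives the boundary identity $\pcheck - r\costb = (\acheck + \costb)\Pi(B^*,\infty) + \int_0^{B^*}\Delta(u)\,\Pi(B^* - \diff u)$, so subtracting,
\[ (\mathcal{L}-r)v(x) = -(\acheck + \costb)\Pi(B^*,x] + \int_0^{B^*}\Delta(u)\bigl[\Pi(x - \diff u) - \Pi(B^* - \diff u)\bigr]. \]
The first term is nonpositive.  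The \textbf{hard part} is signing the bracket as strictly nonpositive.  My plan is to expand $\Delta$ on $(A^*,B^*)$ using the explicit scale-function form from \eqref{upsilon_in_terms_of_scale_function} (simplified by $\widehat{\Psi}(A^*,B^*) = 0$ in case 1), invoke $\widehat{\psi}(A^*,B^*) = 0$ to eliminate cross-terms, and exploit the monotonicity in \eqref{assumeW} together with the tail monotonicity of $\Pi$, plus the strict positivity $\pcheck > r\costb$ forced by $B^* < \infty$ through the structure of $h$.  Cases with $A^* = 0$ or $B^* = \infty$ from Theorem \ref{lemma_classification_b} require the alternative defining equations (e.g.\ $\partial_B\widehat{\Psi}(A^*,B)|_{B^*} = 0$ when $\widehat{\Psi}(A^*,B^*) \neq 0$); Part (3) is vacuous when $B^* = \infty$.
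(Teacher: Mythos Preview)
Parts (1) and (2) are correct and follow the paper closely.

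Part (3) has two real gaps. First, your claim that $(\mathcal{L}-r)v$ is \emph{continuous} at $B^*$ fails when $\nu>0$: continuous and smooth fit control $v$ and $v'$ at $B^*$, but say nothing about $v''$, so the $\tfrac12\nu^2 v''$ term can jump. The paper obtains only the inequality $(\mathcal{L}-r)v(B^*+)\le (\mathcal{L}-r)v(B^*-)=0$, by proving $v''(B^*-)\ge h''(B^*)$ via a contradiction argument against \eqref{ineq_h}. Your ``boundary identity'' is therefore only an inequality in the Gaussian case.

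Second, and more importantly, your plan to sign the bracket $\int_0^{B^*}\Delta(u)\bigl[\Pi(x-\diff u)-\Pi(B^*-\diff u)\bigr]$ via ``tail monotonicity of $\Pi$'' cannot work as written: that would need monotonicity of the L\'evy \emph{density}, which is not assumed. The paper's route is different. It shows that $J$ in \eqref{def_J} is \emph{non-increasing on all of $(-\infty,B^*)$}: the jump at $0$ goes the right way because $J(0-)=\pcheck/r+\acheck>\pcheck/r+\costs=J(0+)$, and on $(A^*,B^*)$ one proves $\Upsilon'(x;A^*,B^*)\le 0$ directly from \eqref{deltaprimes}, \eqref{assumeW}, and Remark~\ref{remark_A_B_sign}. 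Then for $x>B^*$,
\[
(\mathcal{L}-r)J(x)=\int_{x-B^*}^\infty\bigl[J(x-u)-(\tfrac{\pcheck}r-\costb)\bigr]\,\Pi(\diff u)-(\pcheck-r\costb),
\]
and since the nonnegative integrand is non-increasing in $x$ for each fixed $u$, the whole expression is non-increasing on $(B^*,\infty)$; combined with the boundary inequality this finishes the proof. Your $\Delta=v-h$ is \emph{not} globally monotone (it jumps \emph{up} at $0$ from $0$ to $\costb+\costs$), which is precisely why the paper works with $J$ instead: the two differ by $(\acheck+\costb)1_{\{\cdot\le 0\}}$, converting the upward jump of $\Delta$ into the downward jump of $J$. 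The monotonicity of $\Upsilon$ on $(A^*,B^*)$ is the real heart of Part (3), and your sketch neither isolates nor proves it.
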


The domination property (i) holds by applying discounting and expectation in Lemma \ref{lemma_domination2}.  The sub/super-harmonic property (ii) is implied by Lemma \ref{lemma_generator}. By Ito's lemma, this shows that the stopped processes $e^{-r(t \wedge \sigma_{A^*})}v_{A^*,B^*}(X_{t \wedge \sigma_{A^*}})$ and $e^{-r(t \wedge \tau_{B^*})}v_{A^*,B^*}(X_{t \wedge \tau_{B^*}})$ are, respectively, a supermartingale and  a submartingale.  In turn, we apply them to show $v_{A^*,B^*}(x) \geq v(x; \sigma_{A^*},\tau)$ for any arbitrary $\tau \in \S_{A^*}$, and $v_{A^*,B^*}(x) \leq v(x; \sigma,\tau_{B^*})$ for any arbitrary $\sigma \in \S_{B^*}$, that is, the Nash equilibrium. We provide the details of the proofs for Theorems \ref{theorem_equilibrium}-\ref{theorem_equilibrium_zero} in the Appendix.

\section{Exponential Jumps and Numerical Examples}\label{section_numer}
In this section, we consider   spectrally negative \lev processes with i.i.d.\ exponential jumps and provide some numerical examples to  illustrate the buyer's and seller's optimal exercise strategies and the impact of step-up/down fees on the game value.  The results obtained here can be extended easily to the hyperexponential case using the explicit expression of the scale function obtained by \cite{Egami_Yamazaki_2010_2}, and can be used to approximate for a general case with a completely monotone density (see, e.g., \cite{Egami_Yamazaki_2010_2,Feldmann_1998}).  Here, however, we focus on a rather simple case for more intuitive interpretation of our numerical results. 


\subsection{Spectrally Negative \lev Processes with Exponential Jumps}
Let $X$ be a spectrally negative \lev process of the form
\begin{equation}
  X_t  - X_0=\mu t+\nu B_t - \sum_{n=1}^{N_t} Z_n, \quad 0\le t <\infty. \notag
\end{equation}
Here $B=\{B_t; t\ge 0\}$ is a standard Brownian motion, $N=\{N_t; t\ge 0\}$ is a Poisson process with arrival rate $\lambda$, and  $Z = \left\{ Z_n; n = 1,2,\ldots \right\}$ is an i.i.d.\ sequence of exponential   random variables with density function $f (z)  :=  \eta e^{- \eta z}$, $z > 0$,
for some $0 < \eta < \infty$.  Its Laplace exponent (\ref{laplace_spectrally_negative}) is given by
\begin{align*}
\phi(s) = \mu s + \frac 1 2 \nu^2 s^2  - \lambda
\frac s {\eta + s}.
\end{align*}

For our examples,  we assume $\nu > 0$.  In this case, there are two \emph{negative} solutions to the equation $\phi(s) = r$ and their absolute values $\{\xi_{i,r}; i = 1, 2\}$ satisfy the interlacing condition: $0 < \xi_{1,r} < \eta < \xi_{2,r}  < \infty$.
For this process,  the scale function is given by for every $x \geq 0$
\begin{align}
\begin{split}
W^{(r)}(x) &=  \sum_{i = 1}^{2} C_i \left[ e^{\Phi(r) x} - e^{-\xi_{i,r}x} \right],
\end{split} \label{scale_hyperexponential}
\end{align}
for some $C_1$ and $C_2$ (see \cite{Egami_Yamazaki_2010_2} for their expressions).
In addition, applying \eqref{scale_hyperexponential} to \eqref{w_phi} yields
\begin{align*}
W_{\Phi(r)}(x) &=  \sum_{i = 1}^{2} C_i \left[ 1 - e^{-(\Phi(r)+\xi_{i,r})x} \right],
\end{align*}
with the limit $W_{\Phi(r)}(\infty) = \sum_{i=1}^{2} C_i$, which equals $(\phi'(\Phi(r)))^{-1}$ by \eqref{scale_function_asymptotic}.


 Recall that, in contrast to $\psi(A,B)$ and $\Psi(A,B)$,  $\widehat{\psi}(A,B)$ and $\widehat{\Psi}(A,B)$ do not explode. Therefore, they are used to compute  the optimal thresholds $A^*$ and $B^*$ and the value function $V$. Below we provide the formulas for $\widehat{\psi}(A,B)$ and $\widehat{\Psi}(A,B)$. The computations are very tedious but straightforward, so we omit the proofs here.

In summary, for  $B > A\geq0$, we have
\begin{multline*}
\widehat{\psi}(A,B)
=  -\left( \pcheck + \costs r \right) + \left( \acheck - \costs \right) \lambda e^{-\eta A} - \frac { W_{\Phi(r)} (\infty)} {W_{\Phi(r)} (B-A)} (\acheck - \costs) \lambda   \frac {\eta} {\Phi(r)+\eta}  e^{- \eta A} \\
+ \frac { \acheck - \costs } {W_{\Phi(r)} (B-A)} \lambda  e^{-\Phi(r)(B-A)} \sum_{i = 1}^{2}  C_i \Big[ \frac {\eta} {\Phi(r)+\eta} e^{-\eta B} + \frac {\eta} {\xi_{i,r}-\eta} \left( e^{-\eta B } - e^{-\xi_{i,r}(B-A)- \eta A}\right)\Big]
\end{multline*}and
\begin{align*}
\widehat{\Psi}(A,B) = \frac 1 {W_{\Phi(r)}(B-A)} \Big[    W_{\Phi(r)}(\infty)  \Big( \lambda e^{-\eta A}  \frac {\acheck - \costs } {\Phi(r)+\eta} - \frac {\pcheck  +\costs r} {\Phi(r)} \Big) + {e^{-\Phi(r) (B-A)}} \varrho(A,B) \Big]
\end{align*}
where
\begin{multline*}
\varrho(A,B) :=  \left(\acheck - \costs \right) \lambda  \sum_{i = 1}^{2} C_i  \Big( e^{-\eta B} \Big[ - \frac 1 {\Phi(r)+\eta}  + \frac 1 {\xi_{i,r}-\eta}  \Big] - e^{-\eta A - \xi_{i,r}(B-A)} \frac 1 {\xi_{i,r}-\eta}\Big) \\
- \left( \pcheck + \costs r\right) \sum_{i = 1}^{2} C_i  \Big[ -\frac 1 {\Phi(r)}   + \frac 1 {\xi_{i,r}} \left( e^{-\xi_{i,r}(B-A)} - 1  \right) \Big] - \left( \costb + \costs\right).
\end{multline*}
Also, setting $B=\infty$ and $B=A+$, \eqref{psi_A_under_over}-\eqref{psi_A_under_over2} yield
\begin{align*}
\widehat{\psi}(A,\infty) =  -\left( \pcheck + \costs r \right) +  \lambda   (\acheck - \costs)\frac {\Phi(r)} {\Phi(r)+\eta}  e^{- \eta A} \quad \textrm{and} \quad
\widehat{\psi}(A,A+) =  -\left( \pcheck + \costs r \right)  + \left( \acheck - \costs \right) \lambda   e^{-\eta A}.
\end{align*}

\subsection{Numerical Results}

Let us denote the step-up/down ratio by $q:=\hat{p}/p = \hat{\alpha} / \alpha.$   We consider     four contract specifications:
\begin{enumerate}
\item[(C)] cancellation game with $q=0$\,\, (position canceled at exercise),
\item[(D)] step-down game with $q=0.5$\,\, (position halved at exercise),
\item[(V)] vanilla CDS with $q=1.0$\,\, (position unchanged at exercise),
\item[(U)] step-up game with $q = 1.5$\,\, (position raised at exercise).
\end{enumerate}
The model parameters are $r =0.03$, $\lambda = 1.0$, $\eta = 2.0$, $\nu = 0.2$,  $\alpha=1$, $x=1.5$ and $\gamma_s = \gamma_b = 1000$\,bps, unless specified otherwise. We also choose $\mu$ so that the risk-neutral condition $\phi(1) = r$ is satisfied.

Figure \ref{figure_value_x} shows for all four cases the contract value $V$ to the buyer as a function of $x$ given a fixed premium rate.  It is decreasing in $x$  since   default  is less likely for higher value of $x$.  For the cancellation game, $V$ takes the constant values $\costs=1000$\,bps for $x\leq A^*$ and $-\costb=-1000$\,bps for $ x\geq B^*$ since in these regions immediate cancellation with a fee is optimal.
\begin{figure}[htbp]
\begin{center}
\begin{minipage}{1.0\textwidth}
\centering
\begin{tabular}{c}
\includegraphics[scale=0.58]{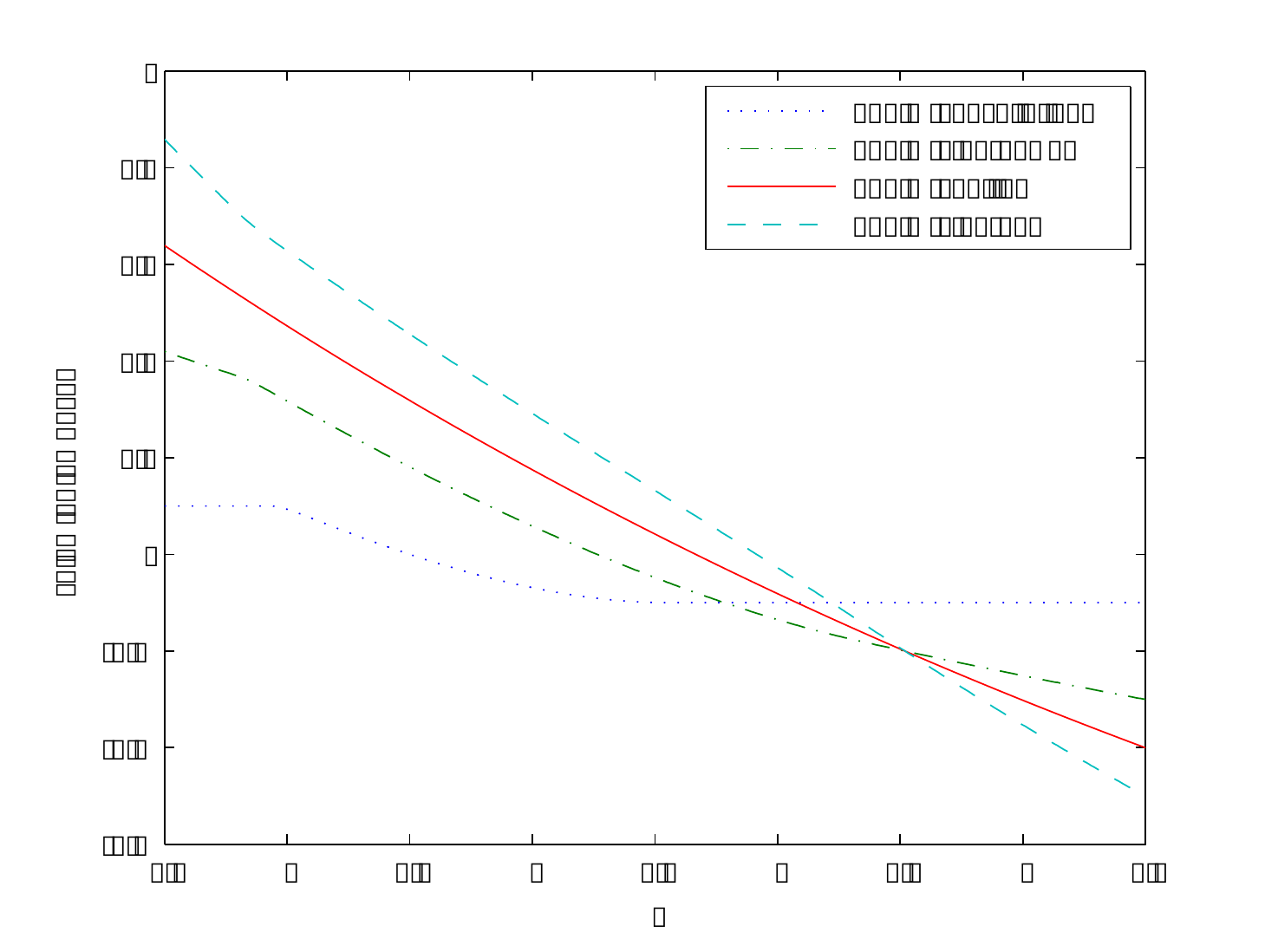}
\end{tabular}
\end{minipage}
\caption{\small The value for the buyer $V(x;\sigma_{A^*},\tau_{B^*})$ as a function of $x$. Here $r=0.03$, $p=0.05$, $\mu = 0.1352$, $\lambda = 1.0$, $\eta = 2.0$, $\nu = 0.2$, and $\gamma_b = \gamma_s = 1000$\,bps.} \label{figure_value_x}
\end{center}
\end{figure}

\begin{figure}[ht]
\centering
\includegraphics[scale=0.58]{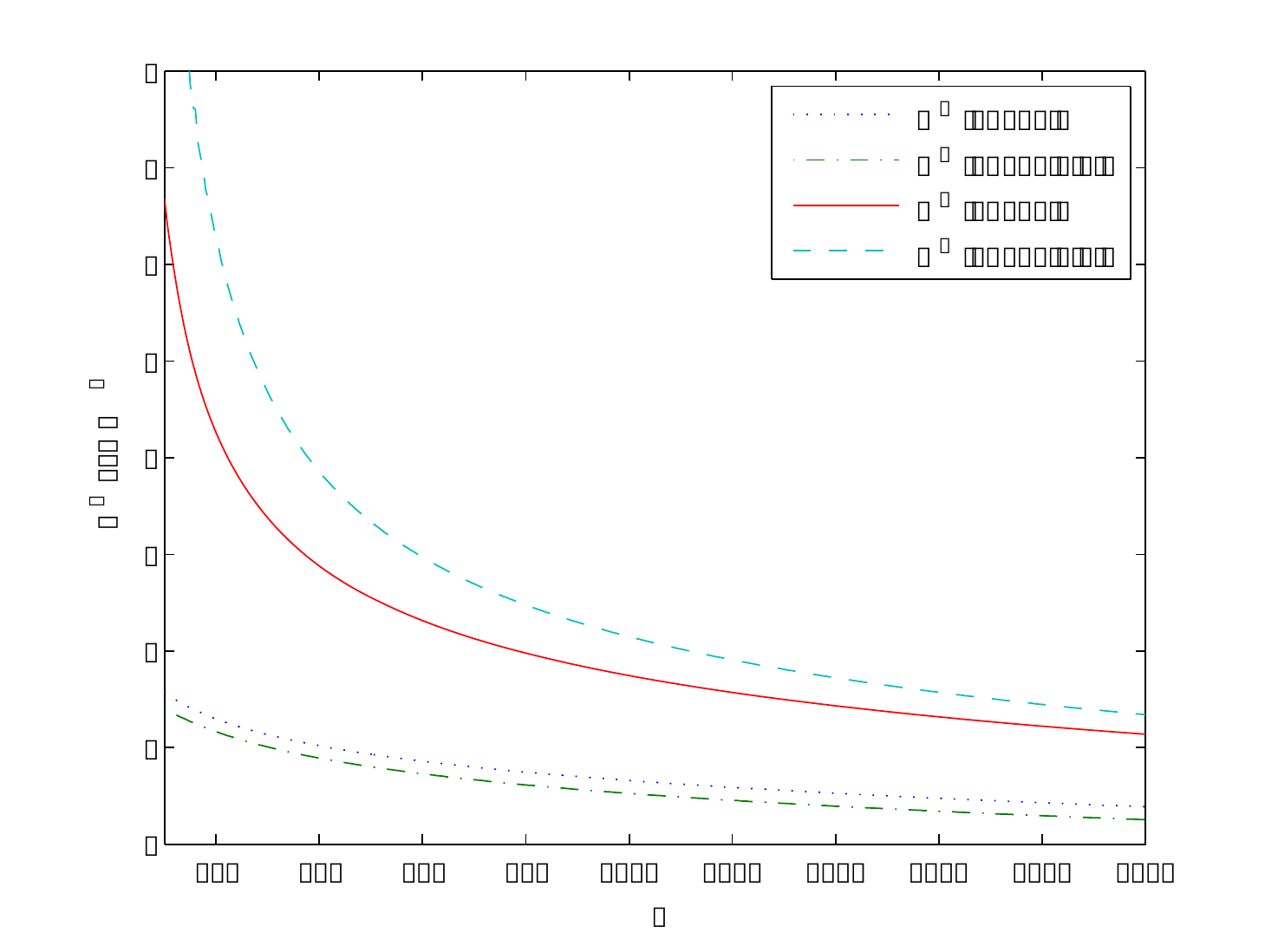}
   \includegraphics[scale=0.58]{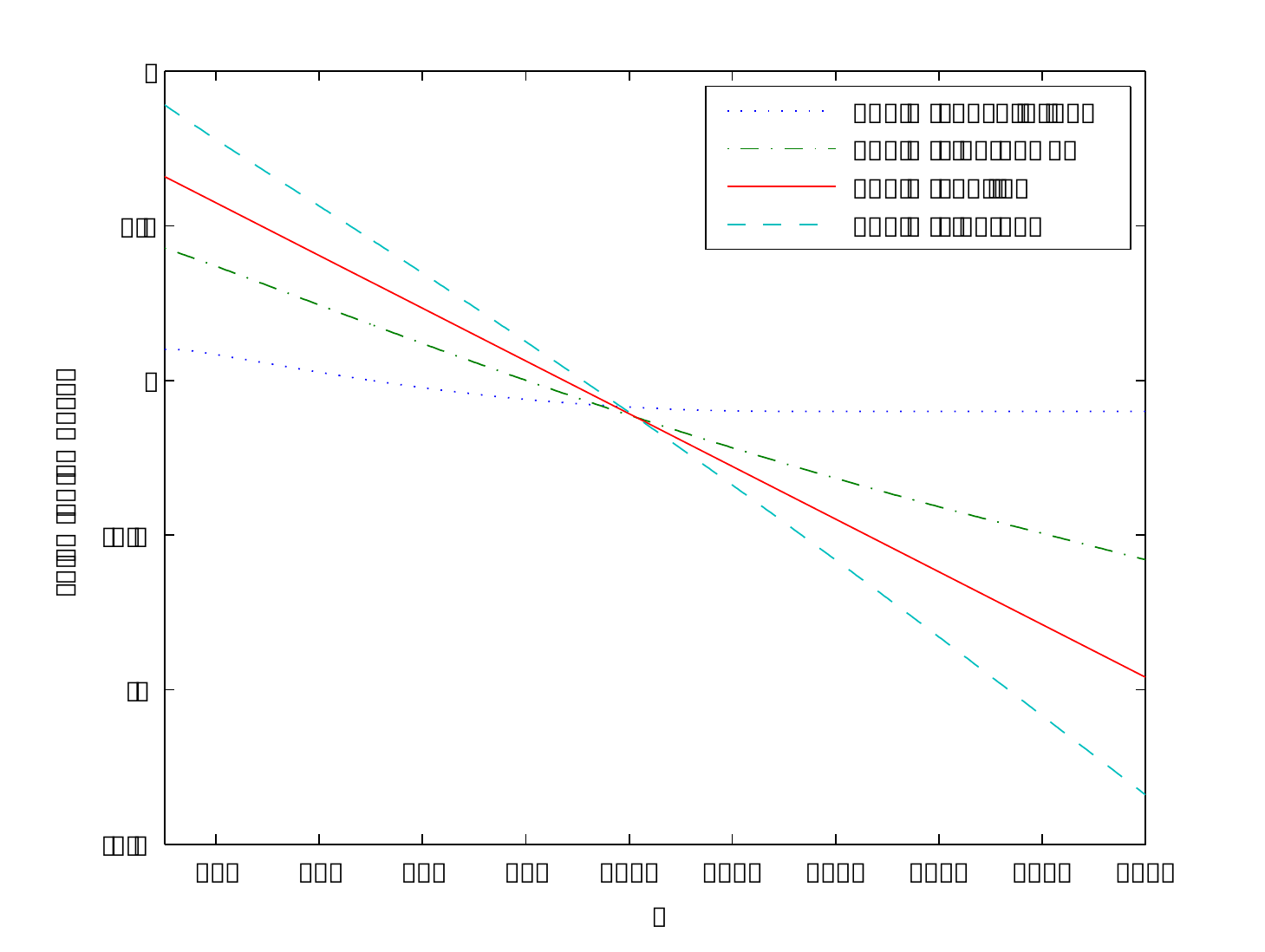}
\caption{\small (Left) Optimal threshold levels $A^*$ and $B^*$ and (right) the value for the buyer with respect to $p$. The parameters are $r=0.03$, $x=1.5$, $\mu = 0.3433$, $\lambda = 0.5$, $\eta = 2.0$, $\nu = 0.2$, and $\gamma_b = \gamma_s = 1000$\,bps.}\label{figure_wrt_p}
\end{figure}

\begin{figure}
\centering
\includegraphics[scale=0.5]{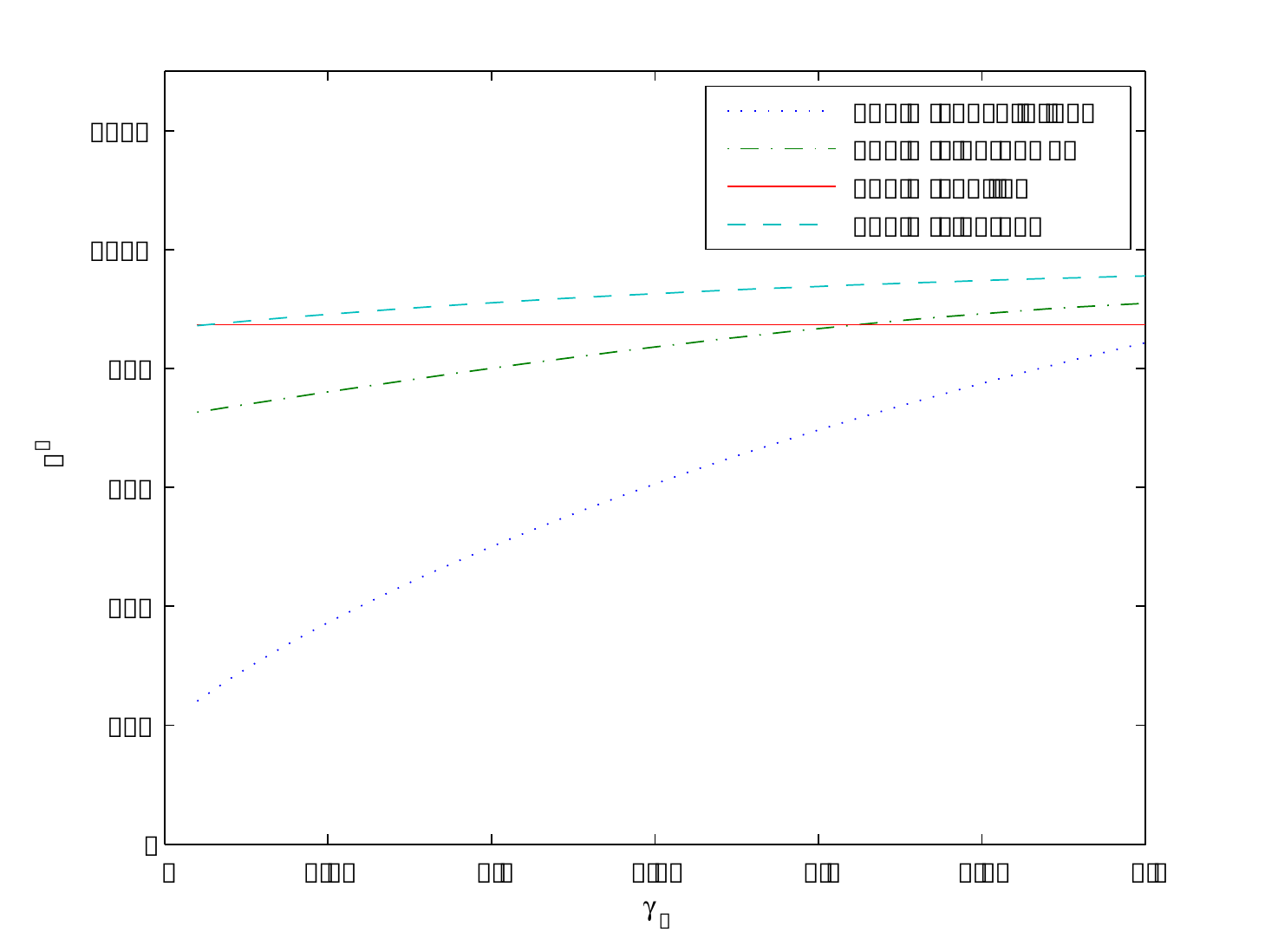}
\includegraphics[scale=0.5]{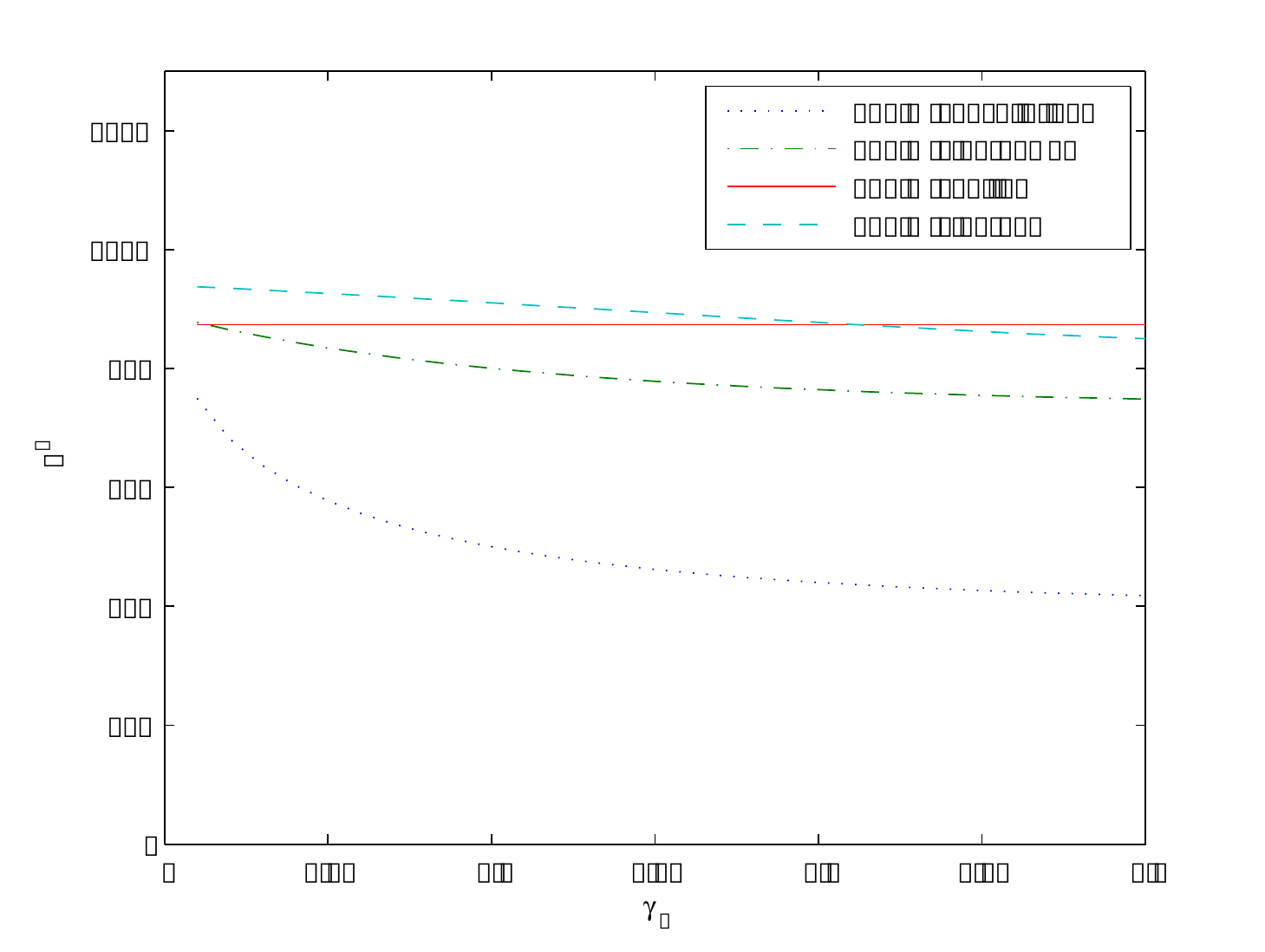}
\caption{\small The equilibrium premium $p^*$ with respect to $\gamma_s$ (left) and $\gamma_b$ (right). Here $r=0.03$, $x=1.5$, $\mu = 0.3433$, $\lambda = 1.0$, $\eta = 2.0$, $\nu = 0.2$, and $\gamma_b = \gamma_s = 1000$\,bps unless specified otherwise.} \label{figure_wrt_gamma}
\end{figure}

In Figure \ref{figure_wrt_p}, we show the optimal thresholds $A^*$ and $B^*$ and the value $V$ with respect to $p$.  The symmetry argument discussed in Section \ref{section_model} applies to the cases   (D)  and (U).   As a result, the $A^*$ in (D) is identical to the $B^*$ in (U), and the $B^*$ in (D) is identical to the $A^*$ in (U).   In all four cases, both $A^*$ and $B^*$ are decreasing in $p$. In other words, as $p$ increases, the buyer tends to exercise earlier  while the seller tends to delay exercise. Intuitively, a higher premium makes waiting more costly for the buyer but more profitable for the seller. The value $V$ in the cancellation game stays constant when $p$ is sufficiently small because the seller would exercise immediately; it also becomes flat when $p$ is sufficiently high because the buyer would exercise immediately.

Note that the value function $V$ and the optimal stopping strategies $(\sigma^*, \tau^*)$ depend on the premium rate $p$.  In particular, we call $p^*$ the \emph{equilibrium premium} rate if it yields $V(x;\sigma^*(p^*),\tau^*(p^*)) = 0$, where we emphasize the  saddle point $(\sigma^*(p^*),\tau^*(p^*))$ corresponds  to $p^*$. Hence, under the equilibrium premium rate, the default swap game starts at value zero, implying no cash transaction between the protection buyer and seller at contract initiation.
As illustrated in  Figure \ref{figure_wrt_p}-(b), the value $V$ (from the buyer's perspective) is always decreasing in $p$. Using a  bisection method, we numerically determine the equilibrium premium $p^*$ so that $V=0$.
We illustrate in Figure \ref{figure_wrt_gamma}  the equilibrium premium $p^*$ as a function of $\gamma_b$ and $\gamma_s$.
As is intuitive, the equilibrium premium $p^*$ is   increasing in $\gamma_s$ and decreasing in $\gamma_b$.
\section{Conclusions}\label{section_conclude}
We  have discussed the valuation of a default swap contract where the protection buyer and seller can alter the respective position  once prior to default.  This new contractual feature  drives  the protection buyer/seller to consider the optimal timing to control credit risk exposure. The valuation problem involves the analytical and numerical studies of  an optimal stopping game with early termination from default. Under a perpetual setting, the investors' optimal stopping rules are characterized by their respective exercise  thresholds, which can be quickly determined in a general class of spectrally negative \lev credit risk models.

For future research, it is most natural to consider the default swap game under a finite horizon and/or different credit risk models. The default swap game  studied in this paper can potentially be applied to approximate its finite-maturity version  using the maturity randomization (Canadization) approach (see \cite{carr_Canada,kypri_Canada}).  Another interesting extension  is to allow for multiple adjustments by the buyer and/or seller prior to default, which  can be modeled as stochastic games with multiple stopping opportunities. Finally,  the step-up and step-down features also arise in other derivatives, including interest rate swaps.

\appendix
\section{Proofs}
\begin{proof}[Proof of Proposition \ref{prop-V}]
First, by a rearrangement of integrals and \eqref{ap}, the expression inside
the expectation in \eqref{def_V} can be written as
\begin{align*}
&1_{\{ \tau \wedge \sigma < \infty \}} \Big[   \int_{\tau\wedge \sigma}^{\sigma_0} e^{-rt} \pcheck\, \diff t - \int_0^{\sigma_0} e^{-rt} p\, \diff t  + e^{-r {\sigma_0}} \left( -\acheck 1_{\{\tau\wedge \sigma < {\sigma_0}\}}  + \alpha \right) \\ &\qquad + 1_{\{\tau \wedge \sigma < {\sigma_0} \}} e^{-r (\tau \wedge \sigma)}  \left( -\costb 1_{\{\tau \leq \sigma \}} + \costs 1_{\{\tau \geq \sigma \}} \right) \Big] + 1_{\{ \tau \wedge \sigma = \infty \}} \Big( - \int_0^\infty e^{-rt} p\, \diff t\Big) \\
&= 1_{\{ \tau \wedge \sigma < \infty \}} \Big(\int_{\tau\wedge \sigma}^{\sigma_0} e^{-rt} \pcheck \,\diff t - e^{-r {\sigma_0}} \acheck 1_{\{\tau\wedge \sigma < {\sigma_0}\}} + 1_{\{\tau \wedge \sigma < {\sigma_0} \}} e^{-r (\tau \wedge \sigma)}  \left( -\costb 1_{\{\tau \leq \sigma \}} + \costs 1_{\{\tau \geq \sigma \}} \right)  \Big)  \\ &- \int_0^{{\sigma_0}} e^{-rt} p\, \diff t + e^{-r {\sigma_0}} \alpha \\
&= 1_{\{ \tau \wedge \sigma < \infty, \; \tau\wedge \sigma < {\sigma_0} \}} \Big(\int_{\tau\wedge \sigma}^{\sigma_0} e^{-rt} \pcheck \,\diff t - e^{-r {\sigma_0}} \acheck +  e^{-r (\tau \wedge \sigma)}  \left( -\costb 1_{\{\tau \leq \sigma \}} + \costs 1_{\{\tau \geq \sigma \}} \right) \Big)  - \int_0^{{\sigma_0}} e^{-rt} p\, \diff t + e^{-r {\sigma_0}} \alpha.
\end{align*}
Taking expectation, \eqref{def_V} simplifies to
\begin{align}\label{v_expectation}V(x; \sigma, \tau) &= \E^x \Big[1_{\{ \tau \wedge \sigma < \infty, \; \tau\wedge \sigma < {\sigma_0} \}} \Big(\int_{\tau\wedge \sigma}^{\sigma_0} e^{-rt} \pcheck \,\diff t - e^{-r {\sigma_0}} \acheck + e^{-r (\tau \wedge \sigma )} \left( -\costb 1_{\{\tau \leq \sigma \}} +  \costs 1_{\{\tau \geq \sigma \}} \right) \Big) \Big]\\ &-\E^x \Big[ \int_0^{\sigma_0} e^{-rt} p\, \diff t \Big]  +\alpha\, \E^x \left[ e^{-r {\sigma_0}}  \right]. \notag
\end{align}
Here, the last two terms   do not depend on $\tau$ nor $\sigma$ and they
constitute $C(x;p,\alpha) $. Next,  using the fact that $\{ \tau \wedge
\sigma  < {\sigma_0},  \tau \wedge \sigma < \infty \} =\{ X_{\tau \wedge
\sigma} >  0, \; \tau \wedge \sigma < \infty \}$ for every $\tau,\sigma \in
\mathcal{S}$ and the strong Markov property of $X$ at time $\tau \wedge
\sigma$, we express the first term as
\begin{align}
&\E^x \Big[ 1_{\{ \tau \wedge \sigma < \infty, \; \tau\wedge \sigma < {\sigma_0} \}} \Big( \E^x \Big[  \int_{\tau\wedge \sigma}^{\sigma_0} e^{-rt} \pcheck \,\diff t - e^{-r {\sigma_0}} \acheck \Big| \mathcal{F}_{\tau \wedge \sigma}\Big]  + e^{-r(\tau \wedge \sigma)}\left( -  \costb 1_{\{\tau \leq \sigma \}} + \costs 1_{\{\tau \geq \sigma \}} \right) \Big) \Big] \notag \\
&= \E^x \left[ 1_{\{ \tau \wedge \sigma < \infty, \; \tau\wedge \sigma < {\sigma_0} \}} e^{-r(\tau \wedge \sigma)} \left( h(X_{\tau}) 1_{\{\tau < \sigma\}} + g(X_{\sigma}) 1_{\{\tau > \sigma\}} + f(X_{\tau}) 1_{\{\tau = \sigma\}} \right) \right] \notag\\
&= \E^x \left[  e^{-r(\tau \wedge \sigma)} \left( h(X_{\tau}) 1_{\{\tau < \sigma\}} + g(X_{\sigma}) 1_{\{\tau > \sigma\}} + f(X_{\tau}) 1_{\{\tau = \sigma \}}  \right) 1_{\{ \tau \wedge \sigma < \infty \}}\right]=v(x; \sigma, \tau), \notag
\end{align}
where the second  equality holds because (i) $\tau < \sigma$ or $\tau > \sigma$ implies $\tau \wedge \sigma < {\sigma_0}$, and  (ii) by $f(X_{\sigma_0})=0$ we have $f(X_{\tau}) 1_{\{\tau = \sigma, \tau \wedge \sigma < {\sigma_0} \}}=f(X_{\tau}) 1_{\{\tau = \sigma\}}$ a.s. \end{proof}

\begin{proof}[Proof of Proposition \ref{prop-sym}]
First, we deduce from \eqref{hx}-\eqref{fx} that
\begin{align*}h(x; \pcheck, \acheck, \costb) &= -g(x;- \pcheck, -\acheck, \costb), \\
g(x; \pcheck, \acheck, \costs) & = -h(x;- \pcheck, -\acheck, \costs), \\
f(x; \pcheck, \acheck, \costb, \costs) &=  -f(x;- \pcheck, -\acheck, \costs, \costb).
\end{align*}
Substituting these equations to \eqref{definition_v} of Proposition \ref{prop-V}, it follows, for every $\tau,\sigma \in \S$, that
 \begin{align*}v(x; \sigma, \tau; \pcheck,  \acheck, \costb, \costs)  &= -\E^x \left[ e^{-r (\tau\wedge \sigma)} \left( h(X_{\sigma}; -\pcheck, -\acheck, \costs) 1_{\{\sigma < \tau \}} + g(X_{\tau}; -\pcheck, -\acheck, \costb) 1_{\{\tau < \sigma \}}  \right. \right. \\ & \qquad \left. \left. + f(X_{\tau\wedge \sigma}; -\pcheck, -\acheck, \costs, \costb) 1_{\{\tau = \sigma \}}\right) 1_{\{\tau \wedge \sigma < \infty\}}\right]\\
 &= -  v(x;  \tau,\sigma; -\pcheck, -\acheck, \costs, \costb) .\end{align*} \end{proof}

\begin{proof}[Proof of Lemma \ref{lemma_delta_b}]
Recall  that $v$ is given by the first expectation of \eqref{v_expectation}, and note that $\sigma_A \wedge \tau_B = \infty$ implies  ${\sigma_0} = \infty$. For every $x \in (A,B)$, $v(x;A,B) - h(x)$ equals
\begin{align*}
&\E^x\Big[ 1_{\{\sigma_A \wedge \tau_B < \infty\}} \Big(\int_{\sigma_A \wedge \tau_B}^{{\sigma_0}}
e^{-rt}\pcheck \,\diff t  - e^{-r {\sigma_0}}\acheck 1_{\{\sigma_A \wedge \tau_B <{\sigma_0}\}} + e^{-r (\sigma_A \wedge \tau_B)} \left(-\costb 1_{\{\tau_B < \sigma_A\}} + \costs 1_{\{\tau_B > \sigma_A \}} \right) \Big) \Big] \\ & \qquad - \E^x \Big[ \int_0^{\sigma_0} e^{-rt} \pcheck \, \diff t\,  - e^{-r {\sigma_0}} \acheck \, \Big] + \costb \\
&= \E^x\Big[ 1_{\{\sigma_A \wedge \tau_B < \infty\}} \Big(-\int_0^{\sigma_A \wedge \tau_B}
e^{-rt}\pcheck \,\diff t  + e^{-r {\sigma_0}}\acheck 1_{\{\sigma_A \wedge \tau_B = {\sigma_0}\}} + e^{-r (\sigma_A \wedge \tau_B)} \left(-\costb 1_{\{\tau_B < \sigma_A\}} + \costs 1_{\{\tau_B > \sigma_A \}} \right) \Big) \\ &\qquad  - 1_{\{\sigma_A \wedge \tau_B = \infty\}} \Big( \int_0^{\sigma_0} e^{-rt} \pcheck \, \diff t\,  - e^{-r {\sigma_0}} \acheck \Big) \, \Big] + \costb \\
&= \E^x\Big[  1_{\{\sigma_A \wedge \tau_B < \infty\}} e^{-r (\sigma_A \wedge \tau_B)}\left( \acheck 1_{\{\sigma_A \wedge \tau_B = {\sigma_0}\}} - \costb 1_{\{\tau_B < \sigma_A\}} + \costs 1_{\{\tau_B > \sigma_A \}} \right) - \int_0^{\sigma_A \wedge \tau_B} e^{-rt} \pcheck \, \diff t\, \Big] + \costb,
\end{align*}
which equals $\Upsilon(x;A,B) - \frac {\pcheck} r + \costb$.
Since $g(x) = h(x) + \costs + \costb$, $\forall x > 0$, the second
claim of \eqref{delta_by_upsilon}  is immediate.

The proof of the second claim amounts to proving the following: for  $0 < A < x < B < \infty$,
\begin{align}
\begin{split}
\E^x\left[  e^{-r (\sigma_A \wedge \tau_B)} 1_{\{\tau_B < \sigma_A \}} \right]  &= \frac {W^{(r)}(x-A)} {W^{(r)}(B-A)}, \\
\E^x\left[  e^{-r (\sigma_A \wedge \tau_B)} 1_{\{\tau_B > \sigma_A \, \textrm{or} \;  \sigma_A \wedge \tau_B = {\sigma_0} \}} \right] &=  Z^{(r)} (x-A) - Z^{(r)} (B-A) \frac {W^{(r)}(x-A)} {W^{(r)}(B-A)},\\
\E^x\left[  e^{-r (\sigma_A \wedge \tau_B)}1_{\{\sigma_A \wedge \tau_B = {\sigma_0} \}} \right] &=  \frac {W^{(r)}(x-A)} {W^{(r)}(B-A)}  \kappa(B;A) - \kappa(x;A).
\end{split} \label{about_lambda}
\end{align}
The first two equalities follow directly from  the
property of the scale function (see, for example, Theorem 8.1 of
\cite{Kyprianou_2006}).  Notice here that $\tau_B < \sigma_A$ if and only if it up-crosses $B$ before down-crossing $A$ while $\tau_B > \sigma_A$ or $\sigma_A \wedge \tau_B = {\sigma_0}$ if and only if it down-crosses $A$ before up-crossing $B$.
%


 For the third equality, we require the overshoot distribution that is again obtained via the scale function. Let  $N$ be the Poisson random measure for the jumps of $-X$ and $\overline{X}$ and $\underline{X}$ be the running maximum and minimum, respectively, of $X$. By compensation formula (see e.g.\ Theorem 4.4 of \cite{Kyprianou_2006}), we have
\begin{align}
\begin{split}
\E^x\left[  e^{-r (\sigma_A \wedge \tau_B)}1_{\{\sigma_A \wedge \tau_B = {\sigma_0} \}} \right]
&= \E^x \Big[ \int_0^\infty  \int_0^\infty N(\diff t \times \diff u) e^{ -r t} 1_{\{\overline{X}_{t-} < B, \; \underline{X}_{t-} > A, \; X_{t-} - u < 0 \}}\Big] \\
&= \E^x \Big[ \int_0^\infty \diff t e^{ -r t}  \int_0^\infty \Pi (\diff u) 1_{\{\overline{X}_{t-} < B, \; \underline{X}_{t-} > A, \; X_{t-} - u < 0 \}}\Big] \\ &= \int_0^\infty \Pi (\diff u) \int_0^\infty \diff t \left[ e^{ -r t} \p^x \{X_{t-} < u, \sigma_A \wedge \tau_B \geq t \} \right].
\end{split} \label{lambda_3_construction}
\end{align}
Recall that, as in Theorem 8.7 of \cite{Kyprianou_2006}, the resolvent measure for the spectrally negative \lev process killed upon exiting $[0,a]$ is given by
\begin{align*}
\int_0^\infty \diff t  \left[ e^{-rt}\p^x \left\{ X_{t-} \in \diff y, \sigma_0 \wedge \tau_a > t\right\} \right] = \diff y \left[ \frac {W^{(r)}(x)W^{(r)}(a-y)} {W^{(r)}(a)} -  W^{(r)} (x-y)\right], \quad y > 0.
\end{align*}
Hence
\begin{align*}
\int_0^\infty \diff t  \left[ e^{-rt} \p^x \left\{ X_{t-} \in \diff y, \sigma_A \wedge \tau_B > t\right\} \right] &= \int_0^\infty \diff t \left[ e^{-rt} \p^{x-A} \left\{ X_{t-} \in \diff (y-A), \sigma_0 \wedge \tau_{B-A} > t\right\} \right]  \\ &= \diff y \left[ \frac {W^{(r)}(x-A)W^{(r)}(B-y)} {W^{(r)}(B-A)} -  W^{(r)} (x-y)\right],
\end{align*}
when $y > A$, and it is zero otherwise. Therefore, for $u > A$, we have
\begin{multline*}
\int_0^\infty \diff t \left[ e^{ -r t} \p^x \{X_{t-} < u, \sigma_A \wedge \tau_B > t \} \right] = \int_A^{u} \diff y \left[ \frac {W^{(r)}(x-A)W^{(r)}(B-y)} {W^{(r)}(B-A)} -  W^{(r)} (x-y)\right] \\
\begin{aligned}
&= \int_0^{u-A} \diff z \left[ \frac {W^{(r)}(x-A)W^{(r)}(B-z-A)} {W^{(r)}(B-A)} -  W^{(r)} (x-z-A)\right] \\
&=  \frac {W^{(r)}(x-A)} {W^{(r)}(B-A)} \int_0^{u\wedge B-A} \diff z W^{(r)}(B-z-A) - \int_0^{u\wedge x-A} \diff z W^{(r)}(x-z-A)
\end{aligned}
\end{multline*}
since $W^{(r)}$ is zero on $(-\infty,0)$.  Therefore, $\E^x\left[  e^{-r (\sigma_A \wedge \tau_B)}1_{\{\sigma_A \wedge \tau_B = {\sigma_0} \}} \right]= \frac {W^{(r)}(x-A)} {W^{(r)}(B-A)}  \kappa(B;A) - \kappa(x;A)$.
Finally, substituting \eqref{about_lambda} in \eqref{definition_upsilon}, the proof is complete.
\end{proof}

\begin{proof}[Proof of Lemma \ref{convergence_kappa}]
(1) The monotonicity is clear because $\partial  \kappa(x;A) / {\partial A} = - W^{(r)}(x-A) \Pi(A,\infty) < 0$ for any $x > A > 0$.

(2) By \eqref{scale_function_asymptotic}, we have for any $u > A$
\begin{multline*}
\int_0^{u\wedge x-A} \diff z W^{(r)}(x-z-A) = \int_0^{u\wedge x-A} \diff z e^{\Phi(r)(x-z-A)}W_{\Phi(r)}(x-z-A) \\ \leq \frac 1 {\phi'(\lapinv)} \int_0^{u-A} \diff z e^{\Phi(r)(x-z-A)} = \frac {e^{\Phi(r)(x-A)}} {\Phi(r) \phi'(\lapinv)} \left( 1 - e^{-\Phi(r)(u-A)} \right).
\end{multline*}
Therefore,
\begin{align}
\kappa(x;A) \leq \frac {e^{\Phi(r)(x-A)}} {\Phi(r) \phi'(\lapinv)} \rho(A) \leq \frac {e^{\Phi(r)x}} {\Phi(r) \phi'(\lapinv)} \rho(0). \label{bound_kappa_by_rho}
\end{align}
Using this with the dominated convergence theorem yields the limit:
\begin{align*}
\kappa(x;0) = \lim_{A \downarrow 0}\frac 1 r \int_0^{\infty} \Pi(\diff u + A)  \left[ Z^{(r)}(x-A) - Z^{(r)}(x-A-u) \right] =  \frac 1 r \int_0^{\infty} \Pi(\diff u)  \left[ Z^{(r)}(x) - Z^{(r)}(x-u) \right],
\end{align*}
which is finite.

(3)
For all $x > A \geq 0$
\begin{align*}
\frac {\kappa(x;A)} {W^{(r)}(x-A)} = \int_A^{\infty} \Pi(\diff u) \int_0^{u\wedge x-A} \diff z \frac {W^{(r)}(x-z-A)} {W^{(r)}(x-A)} \leq \int_A^{\infty} \Pi(\diff u) \int_0^{u\wedge x-A}  e^{-\Phi(r) z} \diff z \leq \frac {\rho(A)} {\Phi(r)}.
\end{align*}
Therefore,  the dominated convergence theorem yields the limit:
\begin{align*}
\lim_{x \uparrow \infty}\frac {\kappa(x;A)} {W^{(r)}(x-A)} = \frac 1 r \int_A^{\infty} \Pi(\diff u)  \lim_{x \uparrow \infty} \frac { Z^{(r)}(x-A) - Z^{(r)}(x-u) } {W^{(r)}(x-A)} = \frac {\rho(A)} {\Phi(r)}
\end{align*}
where the last equality holds by \eqref{convergence_z_by_w}, $Z^{(r)}(x-A)/W^{(r)}(x-A) \xrightarrow{x \uparrow \infty} r/\Phi(r)$ and
\begin{align*}
\lim_{x \uparrow \infty}\frac {Z^{(r)}(x-u)} {W^{(r)}(x-A)} = \lim_{x \uparrow \infty} e^{- \Phi(r) (u-A) }\frac {Z^{(r)}(x-u)} {W^{(r)}(x-u)} \frac {W_{\Phi(r)}(x-u)} {W_{\Phi(r)}(x-A)} = e^{- \Phi(r) (u-A) } \frac r {\lapinv}.
\end{align*}
\end{proof}

\begin{proof}[Proof of Lemma \ref{remark_asymptotics_Psi}](1) It is immediate by Lemma \ref{convergence_kappa}-(3) and \eqref{convergence_z_by_w}.
(2) By Lemma \ref{convergence_kappa}-(2) and because $\rho(A) \xrightarrow{A \downarrow 0}\rho(0)$, the convergence indeed holds.
(3) By \eqref{bound_kappa_by_rho}, the dominated convergence theorem yields
\begin{align*}
\lim_{B \downarrow A} \Psi(A,B) = \lim_{B \downarrow A}\Big[ \Big( \frac \pcheck r - \costb \Big) - \Big( \frac \pcheck r +\costs \Big) Z^{(r)} (B-A) + \left( \acheck - \costs \right) \kappa(B;A) \Big] = - (\gamma_b + \gamma_s) < 0.
\end{align*}
\end{proof}

\begin{proof}[Proof of Remark \ref{lemma_creeping}]
By Theorem 8.1 of \cite{Kyprianou_2006}, we obtain the limits:
\begin{align*}
\lim_{A \downarrow 0} \E^x\left[  e^{-r (\sigma_A \wedge \tau_B)} 1_{\{\tau_B < \sigma_A, \tau_B \wedge \sigma_A < \infty \}} \right]  &= \E^x \left[ e^{-r \tau_B} 1_{\{\tau_B < {\sigma_0}, \, \tau_B < \infty \}}\right], \\ \lim_{A \downarrow 0}\E^x\left[  e^{-r (\sigma_A \wedge \tau_B)} 1_{\{\tau_B > \sigma_A \, \textrm{or} \;  \sigma_A \wedge \tau_B = {\sigma_0} \}} 1_{\{ \tau_B \wedge \sigma_A < \infty \}} \right] &=\E^x\left[  e^{-r \tau_B} 1_{\{\tau_B = {\sigma_0} < \infty \}}\right].
\end{align*}
By the construction of $\E^x\left[  e^{-r (\sigma_A \wedge \tau_B)}1_{\{\sigma_A \wedge \tau_B = {\sigma_0} < \infty \}} \right]$ as seen in \eqref{lambda_3_construction} above, we deduce that
\begin{align*}
\lim_{A \downarrow 0}\E^x\left[  e^{-r (\sigma_A \wedge \tau_B)}1_{\{\sigma_A \wedge \tau_B = {\sigma_0} < \infty \}} \right] &=\E^x\left[  e^{-r \tau_B} 1_{\{X_{\tau_B} < 0, \tau_B < \infty \}}\right] \\&= \E^x\left[  e^{-r \tau_B} 1_{\{\tau_B = {\sigma_0} < \infty \}}\right] - \E^x\left[  e^{-r \tau_B} 1_{\{X_{\tau_B} = 0, \, \tau_B < \infty \}}\right].
\end{align*}
Applying these to the definition \eqref{definition_upsilon} yields:
\begin{align*}
\Upsilon(x;0+,B) = \Upsilon(x;0,B) - \left( \acheck - \gamma_s \right) \E^x \left[ e^{-r \tau_B}  1_{\{X_{\tau_B} = 0, \, \tau_B < \infty \}} \right].
\end{align*}
By \cite{Kyprianou_2006} Exercise 7.6, a spectrally negative \lev process creeps downward, or $\p \left\{ X_{\sigma_0} = 0 \, | {\sigma_0} < \infty \right\} > 0$, if and only if there is a Gaussian component.  This completes the proof.
\end{proof}

\begin{proof}[Proof of Lemma \ref{lemma_gamma_B}]  We first show the following.
\begin{lemma} \label{lemma_bounded_psi_zero}
If $\int_0^1 u \Pi (\diff u) < \infty$, then we have  $\int_0^{\infty} \Pi (\diff u) \big( 1 - \frac {W^{(r)}(B-u)} {W^{(r)}(B)} \big) < \infty$  for any $0 < B < \infty$.
\end{lemma}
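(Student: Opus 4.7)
The plan is to split the integral $\int_0^\infty \Pi(\diff u)\bigl(1 - W^{(r)}(B-u)/W^{(r)}(B)\bigr)$ at the point $u = B$ and bound each piece separately.

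First, for the tail $u \geq B$, we have $W^{(r)}(B-u) = 0$ since $W^{(r)}$ vanishes on $(-\infty,0)$, so the integrand equals $1$. Hence this contribution is $\Pi[B,\infty)$, which is finite since $\Pi$ is a L\'evy measure (any L\'evy measure assigns finite mass to the complement of any neighborhood of zero).

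For the near-zero piece $u \in (0,B)$, I would establish a linear bound $1 - W^{(r)}(B-u)/W^{(r)}(B) \leq C u$ with $C := W^{(r)'}(B)/W^{(r)}(B)$. By Assumption \ref{assump_atom}, $W^{(r)}$ is differentiable on $(0,\infty)$, so for each $u \in (0,B)$ the mean value theorem supplies $\xi_u \in (B-u,B)$ with $W^{(r)}(B) - W^{(r)}(B-u) = u\,W^{(r)'}(\xi_u)$. The log-concavity-type inequality \eqref{assumeW} gives $W^{(r)'}(\xi_u)/W^{(r)}(\xi_u) \leq W^{(r)'}(B)/W^{(r)}(B)$, and combined with the monotonicity $W^{(r)}(\xi_u) \leq W^{(r)}(B)$ this yields $W^{(r)'}(\xi_u) \leq W^{(r)'}(B)$, whence the desired bound
\begin{equation*}
1 - \frac{W^{(r)}(B-u)}{W^{(r)}(B)} \;\leq\; \frac{W^{(r)'}(B)}{W^{(r)}(B)}\,u, \qquad 0 < u < B.
\end{equation*}

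Integrating gives $\int_0^B \Pi(\diff u)\bigl(1 - W^{(r)}(B-u)/W^{(r)}(B)\bigr) \leq C \int_0^B u\,\Pi(\diff u)$. To see the right side is finite, split at $u = 1$: on $(0,1]$ the hypothesis $\int_0^1 u\,\Pi(\diff u) < \infty$ applies directly, and on $[1,B]$ (vacuous if $B \leq 1$) we have $\int_1^B u\,\Pi(\diff u) \leq B\,\Pi[1,B) < \infty$, again because $\Pi$ is a L\'evy measure. Combining the two pieces completes the proof. I do not expect a real obstacle here; the only subtle point is confirming that the bound on $W^{(r)'}$ near $B$ is legitimate, which is where \eqref{assumeW} enters cleanly and avoids having to invoke the stronger Assumption \ref{assump_scale_twice}.
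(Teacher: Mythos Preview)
Your overall strategy (split at $u=B$, handle the tail by $\Pi(B,\infty)<\infty$, handle small $u$ via the mean value theorem) is exactly the paper's, but your execution of the near-zero piece contains a genuine error in the direction of the inequality \eqref{assumeW}. That inequality says $W^{(r)'}/W^{(r)}$ is \emph{decreasing}: for $y>x>0$ one has $W^{(r)'}(y)/W^{(r)}(y)\le W^{(r)'}(x)/W^{(r)}(x)$. Since $\xi_u<B$, applying it with $x=\xi_u$, $y=B$ gives
\[
\frac{W^{(r)'}(B)}{W^{(r)}(B)}\;\le\;\frac{W^{(r)'}(\xi_u)}{W^{(r)}(\xi_u)},
\]
the reverse of what you wrote. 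Thus the conclusion $W^{(r)'}(\xi_u)\le W^{(r)'}(B)$ does not follow; indeed $W^{(r)'}$ need not be monotone (e.g.\ by \eqref{lemma_zero}, $W^{(r)'}(0+)$ can be $+\infty$ or large, so $W^{(r)'}(\xi_u)$ can exceed $W^{(r)'}(B)$ when $u$ is close to $B$). Your uniform bound $1-W^{(r)}(B-u)/W^{(r)}(B)\le (W^{(r)'}(B)/W^{(r)}(B))\,u$ for all $u\in(0,B)$ is therefore unjustified.

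The fix is to avoid controlling $W^{(r)'}(\xi_u)$ over all of $(0,B)$. Choose $0<\epsilon<B$ and split $(0,B)=(0,\epsilon]\cup(\epsilon,B)$. On $(\epsilon,B)$ the integrand is bounded by $1$ and $\Pi(\epsilon,B)<\infty$. On $(0,\epsilon]$ the mean-value point $\xi_u$ lies in the compact interval $[B-\epsilon,B]\subset(0,\infty)$, on which $W^{(r)'}$ is bounded (Assumption \ref{assump_atom} gives differentiability, and continuity of $W^{(r)'}$ there suffices). Then $\int_0^\epsilon(W^{(r)}(B)-W^{(r)}(B-u))\,\Pi(\diff u)\le \sup_{t\in[B-\epsilon,B]}W^{(r)'}(t)\int_0^\epsilon u\,\Pi(\diff u)<\infty$ by hypothesis. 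This is precisely how the paper argues, and it sidesteps any monotonicity claim on $W^{(r)'}$.
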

\begin{proof}
Fix $B > 0$. We have
\begin{align}
\int_0^\infty \Pi(\diff u) \Big( 1 - \frac {W^{(r)} (B-u)} {W^{(r)} (B)}\Big) = \Pi(B,\infty) + \frac 1 {W^{(r)}(B)} \int_0^B \Pi(\diff u) \left( W^{(r)}(B) - W^{(r)} (B-u)\right). \label{int_scale_function_ratio}
\end{align}
For any $0 < \epsilon < B$, we have by the mean value theorem,
\begin{align*}
\int_0^\epsilon (W^{(r)}(B) - W^{(r)} (B-u)) \Pi (\diff u) \leq \int_0^\epsilon u \sup_{t \in [B-\epsilon, B]} W^{(r)'}(t) \Pi (\diff u)
\end{align*}
which is finite because $\sup_{t \in [B-\epsilon, B]} W^{(r)'}(t) < \infty$ and $\int_0^1 u \Pi (\diff u) < \infty$.  Hence we conclude.\end{proof}

(1) Suppose $B < \infty$. Since $W^{(r)}(B-u) / W^{(r)}(B-A)$ is increasing in $A$ on $(0,B)$, it follows that
\begin{align*}
\frac \partial {\partial A}\widehat{\psi}(A,B) =   -\left( \acheck - \costs \right)\int_{A}^{B}  \Pi (\diff u) \frac \partial {\partial A} \left[ \frac {W^{(r)}(B-u)} {W^{(r)}(B-A)} \right] < 0, \quad 0 < A < B,
\end{align*}
and $\widehat{\psi}$ is decreasing in $A$ on $(0,B)$.  The result for $B = \infty$ is immediate because $\rho(A)$ is decreasing.

For the convergence result for $B < \infty$ (when $\int_0^1 u \Pi (\diff u) < \infty$), we have
\begin{align*}
\int_A^\infty \Pi(\diff u) \left[ 1 - \frac {W^{(r)}(B-u)} {W^{(r)}(B-A)} \right] \leq \frac 1 {W^{(r)}(B-A)} \int_0^\infty \Pi (\diff u) \left[ W^{(r)}(B) - W^{(r)}(B-u)\right],
\end{align*}
which is bounded by Lemma \ref{lemma_bounded_psi_zero}.  Hence by the dominated convergence theorem,
\begin{align*}
\lim_{A \downarrow 0}\int_A^\infty \Pi(\diff u) \left[ 1 - \frac {W^{(r)}(B-u)} {W^{(r)}(B-A)} \right] &= \int_0^\infty \lim_{A \downarrow 0} \Pi(\diff u+A) \left[ 1 - \frac {W^{(r)}(B-u-A)} {W^{(r)}(B-A)} \right] \\ &= \int_0^\infty \Pi(\diff u) \left[ 1 - \frac {W^{(r)}(B-u)} {W^{(r)}(B)} \right].
\end{align*}
The convergence result for $B = \infty$ is clear because $\rho(A) \xrightarrow{A \downarrow 0} \rho(0)$.

(2) Suppose $A > 0$.  Look at \eqref{gamma_b_definition} and consider the derivative with respect to $B$,
\begin{align*}
\frac \partial {\partial B} \widehat{\psi}(A,B) =- \left( \acheck - \costs \right) \left[ \pi(B) \frac {W^{(r)}(0)} {W^{(r)}(B-A)} + \int_{A}^B \Pi (\diff u) \frac \partial {\partial B} \frac {W^{(r)}(B-u)} {W^{(r)}(B-A)} \right]
\end{align*}
where $\pi$ is the density of $\Pi$.  Moreover, for all $A < u < B$,
\begin{align*}
 \frac \partial {\partial B} \frac {W^{(r)}(B-u)} {W^{(r)}(B-A)}& =  e^{-\Phi(r) (u-A)}\frac \partial {\partial B} \frac {W_{\Phi(r)}(B-u)} {W_{\Phi(r)}(B-A)} \\ &= e^{-\Phi(r) (u-A)} \frac {W_{\Phi(r)}'(B-u) W_{\Phi(r)}(B-A) - W_{\Phi(r)}(B-u) W_{\Phi(r)}'(B-A)} {(W_{\Phi(r)}(B-A))^2} \ge 0,
\end{align*}
 by \eqref{assumeW}. Therefore, $\widehat{\psi}(A,B)$ is decreasing in $B$.  This result can be extended to $A=0$ as in part (1).

For the convergence result for $A > 0$, the dominated convergence theorem yields
\begin{align*}
\lim_{B \rightarrow \infty}\int_A^\infty \Pi (\diff u) \left( 1 - \frac {W^{(r)}(B-u)} {W^{(r)}(B-A)} \right) = \int_A^\infty \Pi (\diff u)  \lim_{B \rightarrow \infty} \left( 1 - \frac {W^{(r)}(B-u)} {W^{(r)}(B-A)} \right)    = \rho(A),
\end{align*}
where the last equality holds by \eqref{w_phi}-\eqref{scale_function_asymptotic}.

When $A = 0$, it also holds by applying the dominated convergence theorem.  Indeed, \eqref{int_scale_function_ratio} is bounded in $B$ on $[B_0, \infty)$ for any $B_0 > 0$.  To see this, for any $0 < \varepsilon < B$
\begin{align*}
&\frac 1 {W^{(r)}(B)} \int_0^B \Pi(\diff u) \left( W^{(r)}(B) - W^{(r)} (B-u)\right)  \\ &=
\frac {e^{\Phi(r) B}} {W^{(r)}(B)} \left( \int_0^B \Pi (\diff u) W_{\Phi(r)}(B) \left[ 1 - e^{-\Phi(r)u}\right] + \int_0^B \Pi (\diff u) e^{-\Phi(r)u} \left[ W_{\Phi(r)}(B) - W_{\Phi(r)}(B-u)\right] \right) \\
&\leq \frac {e^{\Phi(r) B}} {W^{(r)}(B)} \left( W_{\Phi(r)}(B) \rho(0) + W_{\Phi(r)}(B) \Pi(\varepsilon, B) + \alpha(B; \varepsilon) \right), 
\end{align*}
with $\alpha(B; \varepsilon) :=\int_0^\varepsilon \Pi (\diff u) \left[ W_{\Phi(r)}(B) - W_{\Phi(r)}(B-u)\right]$.  Moreover for any $B > B_0 > \varepsilon$, by the mean value theorem, $\alpha(B; \varepsilon) \leq \int_0^\varepsilon u \sup_{t \geq B_0 - \varepsilon} W_{\Phi(r)}'(t) \Pi (\diff u)$
which is finite because $\sup_{t \geq B_0 - \varepsilon} W_{\Phi(r)}'(t) < \infty$ and $\int_0^1 u \Pi (\diff u) < \infty$.  This together with $W^{(r)}(x) \sim e^{\Phi(r)x}/\phi'(\Phi(r))$ as $x \uparrow \infty$ shows that \eqref{int_scale_function_ratio} is bounded in $B$ on $[B_0, \infty)$.


(3)  The derivative of \eqref{kappa_zero} can go into the integral by the dominated convergence theorem because $\frac 1 r \int_0^\infty \Pi(\diff u) \left| Z^{(r)'}(B) - Z^{(r)'}(B-u)\right| =  \int_0^\infty \Pi(\diff u) \left( W^{(r)}(B) - W^{(r)}(B-u)\right) < \infty$ by Lemma \ref{lemma_bounded_psi_zero}.  Therefore, the result follows.
\end{proof}

%

\begin{proof}[Proof of Theorem \ref{lemma_classification_b}]
(1) In view of (a)-(c) in Subsection \ref{subsection_cont_smooth_fit}, we shall show that (i) $\Psi(\underline{A},B)$ monotonically increases while (ii) $\Psi(\overline{A},B)$ monotonically decreases in $B$.

(i) By the assumption $\underline{A} > 0$, we have $\widehat{\psi}(\underline{A},\infty) = 0$.  This coupled with the fact that $\widehat{\psi}(\underline{A},B)$ is decreasing in $B$ by Lemma \ref{lemma_gamma_B}-(2) shows that $\widehat{\psi}(\underline{A},B) > 0$ or $\psi(\underline{A},B) > 0$ for every $B > \underline{A}$ and hence $\Psi(\underline{A},B)$ is monotonically increasing in $B$ on $(\underline{A},\infty)$ (recall $\psi(\underline{A},B) = \partial  \Psi (\underline{A},B) / {\partial B}$).  Furthermore, $\underline{b}(\underline{A}) < \infty$ implies that $\widehat{\Psi}(\underline{A},\infty) > 0$ (note $\widehat{\Psi}(\underline{A},B) > 0 \Longleftrightarrow \Psi(\underline{A},B) > 0$). This together with $W^{(r)}(B-\underline{A}) \xrightarrow{B \uparrow \infty} \infty$ implies that $\Psi(\underline{A},B)$ is monotonically increasing in $B$ to $+\infty$.

(ii) Because $\overline{A} \geq \underline{A}$, we obtain $\overline{A} > 0$ and hence $\widehat{\psi}(\overline{A},\overline{A}+) = 0$. This together with the fact that $\widehat{\psi}(\overline{A},B)$ is decreasing in $B$ by Lemma \ref{lemma_gamma_B}-(2) shows that  $\widehat{\psi}(\overline{A},B) < 0$, or $\psi(\overline{A},B) < 0$, for every $B > \overline{A}$.  Consequently, $\Psi(\overline{A},B)$ is monotonically decreasing in $B$ on $(\overline{A},\infty)$. Furthermore, because $\Psi(\overline{A},\overline{A}+) < 0$ by Lemma \ref{remark_asymptotics_Psi}-(3), $\Psi(\overline{A},B)$ never up-crosses the level zero.

By (i) and (ii) and the continuity of $\Psi$ and $\psi$ with respect to both $A$ and $B$, there must exist $A^* \in (\underline{A},\overline{A})$ and $B^* \in (A^*,\infty)$ such that $B^* = \underline{b}(A^*) = \overline{b}(A^*)$ (with $\Psi(A^*,B^*)=\psi(A^*,B^*)=0$).

(2) Using the same argument as in (1)-(i) above, $\Psi(\underline{A},B)$ is increasing in $B$ on $(\underline{A},\infty)$.  Moreover, the assumption $\underline{b}(\underline{A}) = \infty$ means that $-\infty < \Psi(\underline{A},\underline{A}+) \leq \lim_{B \uparrow \infty}\Psi(\underline{A},B) \leq 0$. This together with $W^{(r)}(B-A) \xrightarrow{B \uparrow \infty} \infty$ shows $\widehat{\Psi}(\underline{A},\infty)=0$. By \eqref{asymptotics_ratio_with_derivative} and \eqref{eq_behavior_gamma}, $\widehat{\psi}(\underline{A},\infty)=0$ and this implies that $\widehat{\psi}(\underline{A},B)> 0$ for all $B \in (\underline{A},\infty)$ by virtue of Lemma \ref{lemma_gamma_B}-(2), and hence $\overline{b}(\underline{A}) = \infty$.  

(3) Recall Lemma \ref{lemma_gamma_B}-(3).  We have $\psi(0,B) > 0$ if and only if $B \in (0,\overline{b}(0))$, and hence $\Psi(0,\cdot)$ attains a global maximum $\Psi(0,\overline{b}(0))$ and it is strictly larger than zero because $\underline{b}(0) < \overline{b}(0)$.  Furthermore, $\Psi(\overline{A},B)$ is monotonically decreasing in $B$ on $(\overline{A},\infty)$ and $\Psi(\overline{A},\overline{A}+) < 0$ as in (1)-(ii).  This together with the same argument as in (1) shows the result.

(4) First, $\overline{A} = 0$ implies $\overline{b}(0) = 0$.  This also means that $\widehat{\psi}(0,B) \leq 0$ or $\Psi(0,B)$ is decreasing on $(0,\infty)$.  This together with Lemma \ref{remark_asymptotics_Psi}-(3) shows $\underline{b}(0) = \infty$.  Now, for both (i) and (ii) for every $B \in [\overline{b}(0),\underline{b}(0)]$, because $\psi(0,B) \leq 0$, we must have $\widehat{\Psi}(0,B) - \widehat{\psi}(0,B) \frac {W^{(r)}(B)} {W^{(r)'}(B)} \geq \widehat{\Psi}(0,B)$.  This shows that $b(0) \leq \underline{b}(0)$.  It is clear that this is \textbf{case 3} when $b(0) < \infty$ whereas this is \textbf{case 4} when $b(0) = \infty$.
\end{proof}

\begin{proof}[Proof of Lemma \ref{lemma_b_bar}]
(1) With $W^{(r)}(B-A) > 0$, it is sufficient to show  $\Psi(A,B)$ is decreasing in $A$ on $(\underline{A},\overline{A})$ for every fixed $B$.  Indeed, the derivative
\begin{align}
\frac \partial {\partial A}\Psi(A,B)  &= \frac \partial {\partial A} \Big[ - \Big( \frac \pcheck r + \gamma_s \Big) Z^{(r)}(B-A) + (\acheck - \gamma_s) \kappa(B;A) \Big] \\ &= W^{(r)} (B-A) \left( \pcheck + r\costs - (\acheck - \gamma_s)\Pi(A,\infty) \right) \label{Psi_derivative_A}
\end{align}
 is negative for every $A \in (0,\overline{A})$ by the definition of $\overline{A}$.
Part (2) is immediate from  Lemma \ref{lemma_gamma_B}-(1).
\end{proof}

\begin{proof}[Proof of Lemma \ref{lemma_domination}]
(1) Fix $B^* > x > A > A^* > 0$. First, suppose $B^* < \infty$. We compute the derivative:
\begin{multline*}
\frac \partial {\partial A} (v_{A,B^*}(x) - g(x)) = \frac \partial {\partial A} \Upsilon(x;A,B^*)   = \Big[ \frac \partial {\partial A} \frac {W^{(r)}(x-A)} {W^{(r)}(B^*-A)} \Big] \Psi(A,B^*) \\ +   \frac {W^{(r)}(x-A)} {W^{(r)}(B^*-A)}  \frac \partial {\partial A} \Psi(A,B^*)  + \frac \partial {\partial A} \Big[ \Big( \frac \pcheck r + \costs \Big) Z^{(r)}(x-A) - \left( \acheck - \costs \right) \kappa(x;A) \Big].
\end{multline*}
Using \eqref{Psi_derivative_A}, the last two terms of the above cancel out and
\begin{align*}
\frac \partial {\partial A} (v_{A,B^*}(x) - g(x))   = \Big[ \frac \partial {\partial A} \frac {W^{(r)}(x-A)} {W^{(r)}(B^*-A)} \Big] \Psi(A,B^*).
\end{align*}
On the right-hand side, the derivative is given by
\begin{align*}
\frac \partial {\partial A} \frac {W^{(r)}(x-A)} {W^{(r)}(B^*-A)} &= e^{-\Phi(r)(B^*-x)}\frac \partial {\partial A} \frac {W_{\Phi(r)}(x-A)} {W_{\Phi(r)}(B^*-A)}  \\ &= e^{-\Phi(r)(B^*-x)} \frac {-W_{\Phi(r)}'(x-A) W_{\Phi(r)}(B^*-A) + W_{\Phi(r)}(x-A) W_{\Phi(r)}'(B^*-A)} {W_{\Phi(r)}(B^*-A)^2}
\end{align*}
which is negative  according to \eqref{assumeW} by $B^* > x$.
Now suppose $B^* = \infty$. We have
\begin{multline*}
\frac \partial {\partial A} (v_{A,\infty}(x) - g(x))  =  \frac \partial {\partial A} \left[ W^{(r)}(x-A)\widehat{\Psi}(A,\infty) \right]  + \frac \partial {\partial A} \Big[ \Big( \frac \pcheck r + \costs \Big) Z^{(r)}(x-A) - \left( \acheck - \costs \right) \kappa(x;A) \Big].
\end{multline*}
By \eqref{gamma_a_definition}, the first term becomes
\begin{align*}
\frac \partial {\partial A} \left[ W^{(r)}(x-A)\widehat{\Psi}(A,\infty) \right] &= - W^{(r)'}(x-A) \widehat{\Psi}(A,\infty) - (\widetilde{\alpha} - \gamma_s) W^{(r)}(x-A) \int_A^\infty \Pi(\diff u) e^{-\Phi(r) (u-A)},
\end{align*}
and by using the last equality of \eqref{Psi_derivative_A} (with $B$ replaced with $x$), we obtain
\begin{multline*}
- (\widetilde{\alpha} - \gamma_s) W^{(r)}(x-A) \int_A^\infty \Pi(\diff u) e^{-\Phi(r) (u-A)} + \frac \partial {\partial A} \Big[ \Big( \frac \pcheck r + \costs \Big) Z^{(r)}(x-A) - \left( \acheck - \costs \right) \kappa(x;A) \Big] \\
= W^{(r)} (x-A) \left( - (\pcheck + r\costs) + (\acheck - \gamma_s) \rho(A) \right) = W^{(r)}(x-A) \Phi(r) \widehat{\Psi}(A,\infty).
\end{multline*}
Hence,
\begin{align*}
\frac \partial {\partial A} (v_{A,\infty}(x) - g(x)) = - \left[ W^{(r)'}(x-A)  - \Phi(r) W^{(r)}(x-A) \right] \widehat{\Psi}(A,\infty) = - e^{\Phi(r)(x-A)} W_{\Phi(r)}'(x-A) \widehat{\Psi}(A,\infty)
\end{align*}
where  $W_{\Phi(r)}'(x-A) > 0$ because $W_{\Phi(r)}$ is increasing.

Now in order to show  $v_{A,B^*}(x) - g(x)$ is increasing in $A$ on $(A^*,x)$, it is sufficient to show $\widehat{\Psi}(A,B^*) \leq 0$ for every $A^* < A < B^*$. This is true for $A^* < A < \overline{A}$ by $\underline{b}(A^*)=B^*$ and Lemma \ref{lemma_b_bar}-(1).  This holds also for $\overline{A} \leq A < B^*$.  Indeed, $\Psi(A,B)$ is decreasing in $B$ because,  for any $B > A > \overline{A}$, $\widehat{\psi}(A,A+) < 0$ and Lemma \ref{lemma_gamma_B}-(2) imply $\psi(A, B) \leq 0$.  Furthermore,  Lemma \ref{remark_asymptotics_Psi}-(3) shows that $\Psi(A,A+) < 0$.  Hence $\Psi(A,B^*) \leq 0$ or $\widehat{\Psi}(A,B^*) \leq 0$.

Now we have by \eqref{eq_continuous_fit_A}, $0 \geq {W^{(r)}(0)} \widehat{\Psi}(x,B^*) = v_{x,B^*}(x+) - g(x) \geq  v_{A^*,B^*}(x) - g(x)$.
This proves \eqref{ineq_g} for the case $A^*>0$.  Since $v_{0+,B^*}(x) = \lim_{A \downarrow 0} v_{A,B^*}(x)$ by \eqref{delta_by_upsilon} and \eqref{upsilon_zero}, this also shows for the case $A^* = 0$.

(2) Recall that $\psi(A^*,B) = \partial \Psi(A^*,B) / \partial B$ and hence for any $A^* < x < B < B^*$
\begin{align*}
\frac \partial {\partial B} (v_{A^*,B}(x) - h(x)) &= \frac \partial {\partial B} \Upsilon(x;A^*,B) \\ &= \frac {W^{(r)}(x-A^*)} {(W^{(r)}(B-A^*))^2} \big[ \psi(A^*,B)  W^{(r)}(B-A^*) - \Psi(A^*,B) W^{(r)'}(B-A^*)\big] \\ &= -  W^{(r)}(x-A^*)\frac {W^{(r)'}(B-A^*)} {W^{(r)}(B-A^*)}  \Big(\widehat{\Psi}(A^*,B) - \widehat{\psi}(A^*,B) \frac {W^{(r)}(B-A^*)} {W^{(r)'}(B-A^*)} \Big)
\end{align*}
which is positive for $B \in (A^*,B^*)$ by  Remark \ref{remark_A_B_sign}-(1). Therefore, by \eqref{eq_continuous_fit_B},
$0 = v_{A^*,x}(x-) - h(x)  \leq  v_{A^*,B^*}(x) - h(x)$.

This proves \eqref{ineq_h} for the case $B^* < \infty$. Since $v_{A^*, \infty}(x)= \lim_{B \uparrow \infty}v_{A^*, B}(x)$ by \eqref{delta_by_upsilon} and \eqref{upsilon_infty}, this also shows for the case $B^* = \infty$.
\end{proof}

\begin{proof}[Proof of Lemma \ref{lemma_domination2}]
(1) Suppose $A^* > 0$.  Because $X_{\sigma_{A^*} \wedge \tau} > A^*$ a.s.\ on $\{\tau < \sigma_{A^*}, \tau < \infty \}$,  $X_{\sigma_{A^*} \wedge \tau} \leq A^*$ a.s.\ on $\{\tau \geq \sigma_{A^*}, \sigma_{A^*} < \infty \}$ and by \eqref{ineq_h},  we have on $\{ \tau \wedge \sigma_{A^*} < \infty\}$
\begin{multline*}
 g(X_{\sigma_{A^*}}) 1_{\{\sigma_{A^*} < \tau \}} + h(X_{\tau}) 1_{\{\tau < \sigma_{A^*}\}}
\leq  g(X_{\sigma_{A^*}}) 1_{\{\sigma_{A^*} < \tau \}} + v_{A^*,B^*}(X_{\tau}) 1_{\{\tau < \sigma_{A^*}\}} \\
=  v_{A^*,B^*}(X_{\sigma_{A^*}}) 1_{\{\sigma_{A^*} < \tau \}} + v_{A^*,B^*}(X_{\tau}) 1_{\{\tau < \sigma_{A^*}\}} =  v_{A^*,B^*}(X_{\sigma_{A^*} \wedge \tau}).
\end{multline*}
Suppose $A^* = 0$.  We have, by \eqref{ineq_h}, on $\{ \tau < \infty \}$
\begin{align*}
- (\acheck - \gamma_s) 1_{\{X_{\tau} = 0\}} + h(X_{\tau}) 1_{\{\tau < {\sigma_0} \}}
\leq   - (\acheck - \gamma_s) 1_{\{X_{\tau} = 0\}} + v_{0+,B^*}(X_{\tau}) 1_{\{\tau < {\sigma_0} \}}  =  v_{0+,B^*}( X_\tau).
\end{align*}
The proof for (2) is similar thanks to \eqref{ineq_g}.
\end{proof}

\begin{proof}[Proof of Lemma \ref{lemma_generator}]
(1) First, Lemma 3.4 of \cite{Leung_Yamazaki_2010} shows that $(\mathcal{L}-r)\zeta(x) = 0$.
Therefore,  using  \eqref{value_function} and that $J'=J''=0$ on $(0,A^*)$, we have
\begin{align}
(\mathcal{L}-r) v_{A^*,B^*}(x) = \int_x^\infty \left( J(x-u) - J(x) \right) \Pi(\diff u)  - r J(x) =  \left( \acheck - \costs \right) \Pi(x,\infty)  - (r \costs + \pcheck). \label{generator_case_1}
\end{align}
Since $A^* > 0$, we must have by construction $\widehat{\Psi}(A^*,B^*) = 0$ and $\widehat{\Psi}(A^*,B^*) - \widehat{\psi}(A^*,B^*) \frac {W^{(r)}(B^*-A^*)} {W^{(r)'}(B^*-A^*)} = 0$ and consequently, $\widehat{\psi}(A^*,B^*) = 0$.  Furthermore, $\widehat{\psi}(A^*,B)$ is decreasing in $B$ and  hence $\widehat{\psi}(A^*,A^*+) =  \left( \acheck - \costs \right) \Pi(A^*,\infty) -\left( \pcheck + \costs r \right) > 0$.  Applying this to  \eqref{generator_case_1}, for $x < A^*$, it follows that $(\mathcal{L}-r) v_{A^*,B^*}(x) > 0$.

(2)
When $A^* > 0$, by the strong Markov property,
\begin{multline*}
e^{-r (t \wedge \sigma_{A^*} \wedge \tau_{B^*})}v_{A^*,B^*}(X_{t \wedge \sigma_{A^*} \wedge \tau_{B^*}}) \\ =  \E^x \left[ \left. e^{-r (\tau_{B^*} \wedge \sigma_{A^*})} \left( h(X_{\tau_{B^*}}) 1_{\{\tau_{B^*} < \sigma_{A^*}\}} + g(X_{\sigma_{A^*}}) 1_{\{\tau_{B^*} > \sigma_{A^*} \}}\right) 1_{\{\tau_{B^*} \wedge \sigma_{A^*} < \infty \}} \right| \mathcal{F}_{t \wedge \sigma_{A^*} \wedge \tau_{B^*}}\right].
\end{multline*}
Taking expectation on both sides, we see that $e^{-r (t \wedge \sigma_{A^*} \wedge \tau_{B^*})}v_{A^*,B^*} (X_{t \wedge \sigma_{A^*} \wedge \tau_{B^*}}) $ is a $\p^x$-martingale and hence $(\mathcal{L}-r)v_{A^*,B^*}(x) = 0$ on $(A^*,B^*)$ (see Remark \ref{remark_smoothness} and the Appendix of \cite{Biffis_Kyprianou_2010}).

When $A^* = 0$ by Remark \ref{lemma_creeping}
\begin{align*}
e^{-r (t \wedge \tau_{B^*})}v_{0+,B^*}(X_{t  \wedge \tau_{B^*}})  =  \E^x \big[  e^{-r \tau_{B^*}} \big( h(X_{\tau_{B^*}}) 1_{\{\tau_{B^*} < {\sigma_0} \}} - (\acheck - \gamma_s) 1_{\{X_{\tau_{B^*}} = 0 \}}\big) 1_{\{\tau_{B^*} < \infty \}} \big| \mathcal{F}_{t \wedge \tau_{B^*}}\big].
\end{align*}
Taking expectation on both sides, we see that $e^{-r (t \wedge \tau_{B^*})}v_{0+,B^*}(X_{t  \wedge \tau_{B^*}}) $ is a $\p^x$-martingale and hence $(\mathcal{L}-r)v_{0+,B^*}(x) = 0$ on $(0,B^*)$.


(3)
Suppose $\nu > 0$, i.e.\ there is a Gaussian component.   In this case,  $W^{(r)}$ is continuous on $\R$ and $C^2$ on $(0,\infty)$, and we have
\begin{multline*}
v_{A^*,B^*}''(B^*-) - h''(B^*) = W^{(r)''}(B^*-A^*) \widehat{\Psi}(A^*,B^*) \\ + (\pcheck + \gamma_s r) W^{(r)'}(B^*-A^*) - (\acheck - \gamma_s) \int_{A^*}^\infty \Pi(\diff u) \big( W^{(r)'}(B^*-A^*)  - W^{(r)'}(B^*-u) \big).
\end{multline*}
We show $v_{A^*,B^*}''(B^*-) - h''(B^*) \geq 0$. To this end, we suppose $v_{A^*,B^*}''(B^*-) - h''(B^*)  < 0$ and derive contradiction. The fact that $v_{A^*,B^*}'(B^*-) - h'(B^*) =0$ by smooth fit implies that $v_{A^*,B^*}'(x) - h'(x) > 0$ for some $x\in (B^*-\varepsilon, B^*)$.  However, since $v_{A^*,B^*}(B^*-) - h(B^*)  =0$, this would contradict \eqref{ineq_h}.  Consequently, $v_{A^*,B^*}''(B^*-) - h''(B^*)  \geq 0$, implying $(\mathcal{L}-r)v_{A^*,B^*}(B^*+) \leq (\mathcal{L}-r)v_{A^*,B^*}(B^*-)$.  When $\nu=0$, $(\mathcal{L}-r)v_{A^*,B^*}(B^*+) = (\mathcal{L}-r)v_{A^*,B^*}(B^*-)$ by continuous and smooth fit.

As a result, for all cases, we conclude that $(\mathcal{L}-r)v_{A^*,B^*}(B^*+) \leq (\mathcal{L}-r)v_{A^*,B^*}(B^*-) = 0$.
Now it is sufficient to show that $(\mathcal{L}-r)v_{A^*,B^*}(x) $ is decreasing on $(B^*, \infty)$.  Recall the decomposition \eqref{value_function}. Because $(\mathcal{L}-r)\zeta(x)=0$, we shall show $(\mathcal{L}-r)J(x) $  is decreasing on $(A^*,B^*)$.

 Now because $J'=J''=0$ on $x > B^*$,
\begin{align*}
(\mathcal{L}-r)J(x) = \int_{x-B^*}^\infty \Pi(\diff u) \left[ J(x-u) - \Big(\frac p r - \costb \Big) \right] - (p - r \costb), \quad x > B^*.
\end{align*}
Since $v_{A^*,B^*}(x) \geq h(x)$, we must have that
$J(x) \geq \frac p r - \costb$ on $x < B^*$ (or the integrand of the above is non-negative).  In order to show that this is decreasing, we show that $J$ in \eqref{def_J} is decreasing on $(-\infty, B^*)$.  By continuous fit at $A^*$ (when $A^* > 0$), it is sufficient to show that $\Upsilon(x; A^*, B^*)$ is \emph{decreasing} for every $x \in (A^*, B^*)$.  By Remark \ref{remark_A_B_sign}-(3), we must have $\widehat{\Psi}(A^*,B^*) - \widehat{\psi}(A^*,B^*) \frac {W^{(r)}(B^*-A^*)} {W^{(r)'}(B^*-A^*)}=0$, and hence by \eqref{assumeW} and because $\Psi(A^*,B^*) \leq 0$ as in  Remark \ref{remark_A_B_sign}-(1),
\begin{align*}
0 = \frac {W^{(r)'}(B^*-A^*)} {W^{(r)}(B^*-A^*)} \Psi(A^*,B^*) - \psi(A^*, B^*) \geq \frac {W^{(r)'}(x-A^*)} {W^{(r)}(x-A^*)} \Psi(A^*,B^*) - \psi(A^*, B^*).
\end{align*}
After multiplying by $W^{(r)}(x-A^*)/W^{(r)}(B^*-A^*)$ on both sides and observing $\widehat{\psi}(A^*, x)$ is decreasing in $x$ by Lemma \ref{lemma_gamma_B}, we get $0  \geq W^{(r)'}(x-A^*) \widehat{\Psi}(A^*,B^*) - W^{(r)}(x-A^*)\widehat{\psi}(A^*, B^*)  \geq W^{(r)'}(x-A^*) \widehat{\Psi}(A^*,B^*) - W^{(r)}(x-A^*)\widehat{\psi}(A^*, x)$,
which  matches $\Upsilon'(x;A^*,B^*)$ in \eqref{deltaprimes}.  Hence, $\Upsilon(x; A^*, B^*)$ is decreasing, as desired.\end{proof}

\begin{proof}[Proof of Theorem \ref{theorem_equilibrium}]
(i)  We show that $v_{A^*,B^*}(x) \geq v(x;\sigma_{A^*}, \tau)$ for every $\tau \in \S$.  As is discussed in Remark \ref{remark_laststep}, we only need to focus on the set $\S_{A^*}$.

In order to handle the discontinuity of $v_{A^*,B^*}$ at zero, we first construct a sequence of functions $v_n(\cdot)$ such that it is continuous on $\R$, $v_n(x) = v_{A^*,B^*}(x)$ on $x \in (0,\infty)$ and (c) $v_n(x) \uparrow v_{A^*,B^*}(x)$ pointwise for every fixed $x \in (-\infty,0)$.
 Notice that $v_{A^*,B^*}(\cdot)$ is uniformly bounded  because $h(\cdot)$ and $g(\cdot)$ are.  Hence, we can choose so that $v_n$ is also uniformly bounded for every fixed $n \geq 1$.  Because $v_{A^*, B^*}'(x)=v'_n(x)$ and $v_{A^*,B^*}''(x)=v_n''(x)$  on $x \in (0,\infty)\backslash \{A^*,B^*\}$ and $v_{A^*,B^*}(x) \geq v_n(x)$ on $(-\infty,0)$, we have
\begin{align}
(\mathcal{L}-r) (v_n - v_{A^*,B^*}) (x) \leq 0, \quad x \in (0,\infty) \backslash\{A^*,B^*\}. \label{generator_difference_negative}
\end{align}
We have for any $\tau \in \S_{A^*}$,
$\E^x \left[ \int_0^{\tau \wedge \sigma_{A^*}} e^{-rs} |(\mathcal{L}-r) (v_n - v_{A^*,B^*}) (X_{s-})| \diff s\right]  \leq K  \E^x \left[ \int_0^{\sigma_{A^*}} e^{-rs} \Pi(X_{s-},\infty) \diff s\right]$
where $K := \sup_{x \in \R}|v_{A^*,B^*}(x) - v_n (x)| < \infty$ is the maximum difference between $v_{A^*,B^*}$ and $v_n$.   Using $N$ as the Poisson random measure for the jumps of $-X$ and $\underline{X}$ as the running minimum of $X$, by the compensation formula \cite[Theorem 4.4]{Kyprianou_2006},
\begin{multline*}
\E^x \left[ \int_0^{\sigma_{A^*}} e^{-rs} \Pi(X_{s-},\infty) \diff s\right]  = \E^x \left[ \int_0^\infty \int_0^\infty e^{-rs} 1_{\{\underline{X}_{s-} > A^*, \; u > X_{s-}\}} \Pi(\diff u) \diff s \right] \\ = \E^x \left[ \int_0^\infty \int_0^\infty e^{-rs} 1_{\{\underline{X}_{s-} > A^*, \; u > X_{s-}\}} N (\diff u \times \diff s) \right]  = \E^x \left[ e^{-r \sigma_{A^*}} 1_{\{X_{\sigma_{A^*}} < 0, \, \sigma_{A^*} < \infty \}}\right] < \infty.
\end{multline*}
Therefore,  uniformly for any $n \geq 1$,
\begin{align}
\begin{split}
\E^x \left[ \int_0^{\tau \wedge \sigma_{A^*}} e^{-rs} |(\mathcal{L}-r) (v_n - v_{A^*,B^*}) (X_{s-})| \diff s\right] &< \infty, \\
\int_0^{\tau \wedge \sigma_{A^*}} e^{-rs} |(\mathcal{L}-r) (v_n - v_{A^*,B^*}) (X_{s-}) | \diff s &< \infty, \quad \p^x\textrm{-a.s.}
\end{split} \label{expectation_bounded_generator}
\end{align}


By applying Ito's formula to
$\left\{ e^{-r {(t \wedge \sigma_{A^*})}} v_{n}(X_{t \wedge \sigma_{A^*} }); t \geq 0 \right\}$ (here we assume $A^* > 0$), we see that
\begin{align}
\Big\{ e^{-r {(t \wedge \sigma_{A^*})}} v_{n}(X_{t \wedge \sigma_{A^*}}) - \int_0^{t \wedge \sigma_{A^*} } e^{-rs} (\mathcal{L} - r) v_{n} (X_{s-})  \diff s; \quad t \geq 0 \Big\} \label{local_martingale}
\end{align}
is a local martingale.  Here the $C^2$ ($C^1$) condition at $\{A^*,B^*\}$ for the case $X$ is of unbounded (bounded) variation can be relaxed by a version of Meyer-Ito formula as in Theorem IV.71 of \cite{ProtterBook} (see also Theorem 2.1 of  \cite{sulem}).


Suppose $\left\{T_k; k \geq 1 \right\}$ is the corresponding localizing sequence, namely,
\begin{align*}
\E^x \left[ e^{-r {(t \wedge \sigma_{A^*} \wedge T_k)}} v_n(X_{t \wedge \sigma_{A^*} \wedge T_k}) \right] &= v_n(x) + \E^x \left[  \int_0^{t \wedge \sigma_{A^*} \wedge T_k} e^{-rs}(\mathcal{L} - r) v_n (X_{s-})   \diff s \right].
\end{align*}
Now by applying the dominated convergence theorem on the left-hand side and Fatou's lemma on the right-hand side via $(\mathcal{L} - r) v_{n}(x) \leq 0$ for every $x > 0$ thanks to \eqref{generator_difference_negative} and  Lemma \ref{lemma_generator}-(2,3), we obtain
\begin{align*}
\E^x \left[ e^{-r {(t \wedge \sigma_{A^*})}} v_n(X_{t \wedge \sigma_{A^*}}) \right] \leq v_n(x) + \E^x \left[  \int_0^{t \wedge \sigma_{A^*} } e^{-rs} (\mathcal{L} - r) v_n (X_{s-})   \diff s \right].
\end{align*}
Hence  \eqref{local_martingale} is a supermartingale.

%
%
%
%

Now fix $\tau \in \S_{A^*}$. By optional sampling theorem, we have for any $M \geq 0$
\begin{align*}
&\E^x \left[ e^{-r {(\tau \wedge \sigma_{A^*} \wedge M)}} v_n(X_{\tau \wedge \sigma_{A^*} \wedge M}) \right] \\ &\leq v_n(x) + \E^x \left[  \int_0^{\tau \wedge \sigma_{A^*} \wedge M} e^{-rs}\left( (\mathcal{L} - r) v_{A^*,B^*} (X_{s-})   +  (\mathcal{L} - r) (v_n-v_{A^*,B^*}) (X_{s-}) \right) \diff s \right] \\
&\leq v_n(x) + \E^x \left[  \int_0^{\tau \wedge \sigma_{A^*} \wedge M} e^{-rs} (\mathcal{L} - r) (v_n-v_{A^*,B^*}) (X_{s-})  \diff s \right],
\end{align*}
where the last inequality holds by Lemma \ref{lemma_generator}-(2,3).  Applying the dominated convergence theorem on both sides via \eqref{expectation_bounded_generator}, we obtain the inequality:
\begin{align}
\E^x \left[ e^{-r (\tau \wedge \sigma_{A^*})} v_n(X_{\tau \wedge \sigma_{A^*}}) 1_{\{ \tau \wedge \sigma_{A^*} < \infty \}}\right] \leq v_n(x) + \E^x \left[  \int_0^{\tau \wedge \sigma_{A^*}} e^{-rs}(\mathcal{L} - r) (v_n -v_{A^*,B^*}) (X_{s-})   \diff s \right]. \label{supermtg_proof}
\end{align}
We shall take $n \rightarrow \infty$ on both sides.  For the left-hand side, the dominated convergence theorem implies
\begin{align*}
\lim_{n \rightarrow \infty} \E^x \left[ e^{-r (\tau \wedge \sigma_{A^*})} v_n(X_{\tau \wedge \sigma_{A^*}}) 1_{\{\tau \wedge \sigma_{A^*} < \infty \}} \right] =  \E^x \left[ e^{-r (\tau \wedge \sigma_{A^*})} v_{A^*,B^*}(X_{\tau \wedge \sigma_{A^*}}) 1_{\{\tau \wedge \sigma_{A^*} < \infty \}} \right].
\end{align*}
For the right-hand side, we again apply the dominated convergence theorem via \eqref{expectation_bounded_generator} to get
\begin{multline}\label{limiteqn}
\lim_{n \rightarrow \infty} \E^x \left[ \int_0^{\tau \wedge \sigma_{A^*}} e^{-rs} (\mathcal{L}-r) (v_n -v_{A^*,B^*}) (X_{s-}) \diff s\right] \\ =  \E^x \left[ \lim_{n \rightarrow \infty} \int_0^{\tau \wedge \sigma_{A^*}} e^{-rs} (\mathcal{L}-r) (v_n -v_{A^*,B^*}) (X_{s-}) \diff s\right].
\end{multline}
Now fix $\p^x$-a.e.\ $\omega \in \Omega$. By \eqref{expectation_bounded_generator} dominated convergence yields
$\lim_{n \rightarrow \infty}\int_0^{\tau (\omega)\wedge \sigma_{A^*} (\omega)} e^{-rs} (\mathcal{L}-r) (v_n -v_{A^*,B^*}) (X_{s-}  (\omega)) \diff s  = \int_0^{\tau (\omega)\wedge \sigma_{A^*} (\omega)} e^{-rs} \lim_{n \rightarrow \infty} (\mathcal{L}-r) (v_n -v_{A^*,B^*}) (X_{s-}  (\omega)) \diff s$.
Finally, since $X_{s-}(\omega) > A^*$ for Lebesgue-a.e.\ $s$ on $(0,\tau (\omega)\wedge \sigma_{A^*} (\omega))$, and  by the dominated convergence theorem,
$\lim_{n \rightarrow \infty} (\mathcal{L}-r) (v_n -v_{A^*,B^*}) (X_{s-}  (\omega)) = \int_{X_{s-}(\omega)}^\infty  \Pi(\diff u) \lim_{n \rightarrow \infty} \left( v_n(X_{s-}(\omega)- u) - v_{A^*,B^*}(X_{s-}(\omega)- u)\right) = 0$.
Hence, the limit \eqref{limiteqn} vanishes.
Therefore, by taking $n \rightarrow \infty$ in \eqref{supermtg_proof} (note $v_{A^*,B^*}(x) = v_n(x)$), we have
\begin{align*}
v_{A^*,B^*}(x)\geq \E^x \left[ e^{-r (\tau \wedge \sigma_{A^*})} v_{A^*,B^*}(X_{\tau \wedge \sigma_{A^*}}) 1_{\{\tau \wedge \sigma_{A^*} < \infty \}}\right], \quad \tau \in \S_{A^*}.
\end{align*}
This inequality and Lemma \ref{lemma_domination2}-(1) show that $v_{A^*,B^*}(x) \geq v(x; \sigma_{A^*},\tau)$ for any arbitrary $\tau \in \S_{A^*}$.

(ii)  Next, we show that $v_{A^*,B^*}(x) \leq v(x;\sigma, \tau_{B^*})$ for every $\sigma \in \S$.  Similarly to (i), we only need to focus on the set $\S_{B^*}$.
We again use $\left\{ v_n; n \geq 1 \right\}$ defined in (i).
Using the same argument as in (i), we obtain
\begin{align}\label{expectation_bounded_generator_B}
\begin{split}
\E^x \left[ \int_0^{\sigma \wedge \tau_{B^*}} e^{-rs} |(\mathcal{L}-r) (v_n - v_{A^*,B^*}) (X_{s-}) |\diff s\right] < \infty, \\
\int_0^{\sigma \wedge \tau_{B^*}} e^{-rs} |(\mathcal{L}-r) (v_n - v_{A^*,B^*}) (X_{s-}) | \diff s < \infty, \quad \p^x-a.s.,
\end{split}
\end{align}
 uniformly for any $n \geq 1$.

Because $v_n$ is not assumed to be $C^1$ nor $C^2$ at zero, we follow the approach by \cite{Loeffen_2009}.
Fix $\epsilon > 0$.  By applying Ito's formula to $\left\{ e^{-r {(t \wedge \tau_{B^*} \wedge \sigma_\epsilon)}} v_{n}(X_{t \wedge \tau_{B^*} \wedge \sigma_\epsilon}); t \geq 0 \right\}$, we see that
\begin{align}
\left\{ e^{-r {(t \wedge \tau_{B^*} \wedge \sigma_\epsilon)}} v_{n}(X_{t \wedge \tau_{B^*} \wedge \sigma_\epsilon}) - \int_0^{t \wedge \tau_{B^*} \wedge \sigma_\epsilon} e^{-rs} (\mathcal{L} - r) v_{n}(X_{s-})  \diff s; \quad t \geq 0 \right\} \label{local_martingale_B}
\end{align}
is a local martingale.  Suppose $\left\{T_k; k \geq 1 \right\}$ is the corresponding localizing sequence, we have
\begin{align*}
&\E^x \left[ e^{-r {(t \wedge \tau_{B^*} \wedge \sigma_\epsilon \wedge T_k)}} v_{n}(X_{t \wedge \tau_{B^*} \wedge \sigma_\epsilon \wedge T_k}) \right] = v_{n}(x) + \E^x \left[  \int_0^{t \wedge \tau_{B^*} \wedge \sigma_\epsilon \wedge T_k} e^{-rs} (\mathcal{L} - r) v_{n}(X_{s-})   \diff s \right] \\
&= v_n(x) + \E^x \left[  \int_0^{t \wedge \tau_{B^*} \wedge \sigma_\epsilon \wedge T_k} e^{-rs}\left( (\mathcal{L} - r) v_{A^*,B^*} (X_{s-})  + (\mathcal{L} - r) (v_n - v_{A^*,B^*})(X_{s-})  \right) \diff s \right] \\
&= v_n(x) + \E^x \left[  \int_0^{t \wedge \tau_{B^*} \wedge \sigma_\epsilon \wedge T_k} e^{-rs} (\mathcal{L} - r) v_{A^*,B^*} (X_{s-})  \diff s \right] \\ & \qquad + \E^x \left[  \int_0^{t \wedge \tau_{B^*} \wedge \sigma_\epsilon \wedge T_k} e^{-rs}(\mathcal{L} - r) (v_n - v_{A^*,B^*})(X_{s-})   \diff s \right]
\end{align*}
where we can split the expectation by \eqref{expectation_bounded_generator_B}.
Now by applying the dominated convergence theorem on the left-hand side and the monotone convergence theorem and the dominated convergence theorem respectively on the two expectations on the right-hand side (using respectively Lemma \ref{lemma_generator}-(1,2) and \eqref{expectation_bounded_generator_B}), we obtain
\begin{align*}
\E^x \left[ e^{-r {(t \wedge \tau_{B^*} \wedge \sigma_\epsilon)}} v_{n}(X_{t \wedge \tau_{B^*} \wedge \sigma_\epsilon}) \right] &= v_{n}(x) + \E^x \left[  \int_0^{t \wedge \tau_{B^*} \wedge \sigma_\epsilon } e^{-rs} (\mathcal{L} - r) v_{n}(X_{s-})  \diff s \right].
\end{align*}
Hence  \eqref{local_martingale_B} is a martingale.

Now fix $\sigma \in \S_{B^*}$. By the optional sampling theorem, we have for any $M \geq 0$ using Lemma \ref{lemma_generator}-(1,2)
\begin{align*}
&\E^x \left[ e^{-r {(\sigma \wedge \tau_{B^*} \wedge \sigma_\epsilon \wedge M)}}  v_{n}(X_{\sigma \wedge \tau_{B^*} \wedge \sigma_\epsilon \wedge M}) \right]  =  v_{n}(x) + \E^x \left[ \int_0^{\sigma \wedge \tau_{B^*} \wedge \sigma_\epsilon \wedge M} e^{-rs} (\mathcal{L}-r) v_n(X_{s-})\diff s\right]\\
 &\geq v_n(x) + \E^x \left[  \int_0^{\sigma \wedge \tau_{B^*} \wedge \sigma_\epsilon \wedge M} e^{-rs} (\mathcal{L} - r) (v_n-v_{A^*,B^*}) (X_{s-})   \diff s \right].
\end{align*}
Applying the dominated convergence theorem on both sides by \eqref{expectation_bounded_generator_B}, we have
\begin{align*}
&\E^x \left[ e^{-r {(\sigma \wedge \tau_{B^*} \wedge \sigma_\epsilon)}}  v_{n}(X_{\sigma \wedge \tau_{B^*} \wedge \sigma_\epsilon }) 1_{\{\sigma \wedge \tau_{B^*} \wedge \sigma_\epsilon < \infty \}} \right] \geq  v_{n}(x) + \E^x \left[  \int_0^{\sigma \wedge \tau_{B^*} \wedge \sigma_\epsilon} e^{-rs} (\mathcal{L} - r) (v_n-v_{A^*,B^*}) (X_{s-})  \diff s \right].
\end{align*}
Because $\sigma_\epsilon \rightarrow \sigma_0$ ($\tau_{B^*} \wedge \sigma_\epsilon \rightarrow \tau_{B^*}$ )  a.s., the bounded convergence theorem yields
\begin{align*}
&\E^x \left[ e^{-r {(\sigma \wedge \tau_{B^*})}}  v_{n}(X_{\sigma \wedge \tau_{B^*}}) 1_{\{ \sigma \wedge \tau_{B^*} < \infty \}}\right] \geq  v_{n}(x) + \E^x \left[  \int_0^{\sigma \wedge \tau_{B^*}} e^{-rs} (\mathcal{L} - r) (v_n-v_{A^*,B^*}) (X_{s-}) \diff s \right].
\end{align*}
Finally, we can take $n \rightarrow \infty$ on both sides along the same line as in (i) and we obtain
\begin{align*}
v_{A^*,B^*}(x) &\leq  \E^x \left[ e^{-r {(\sigma \wedge \tau_{B^*})}} \lim_{n \rightarrow \infty} v_{n}(X_{\sigma \wedge \tau_{B^*}}) 1_{\{ \sigma \wedge \tau_{B^*} < \infty \}}\right] \\  &= \E^x \left[ e^{-r {(\sigma \wedge \tau_{B^*})}} (v_{A^*, B^*}(X_{\sigma \wedge \tau_{B^*} }) 1_{\{ X_{\sigma \wedge \tau_{B^*}} \neq 0 \}} + v_{A^*, B^*}(0+) 1_{\{ X_{\sigma \wedge \tau_{B^*}} = 0  \}} ) 1_{\{ \sigma \wedge \tau_{B^*} < \infty  \}}\right] \\
&\leq \E^x \left[ e^{-r {(\sigma \wedge \tau_{B^*})}} v_{A^*, B^*}(X_{\sigma \wedge \tau_{B^*} }) 1_{\{ \sigma \wedge \tau_{B^*} < \infty \}}  \right].
\end{align*}
%
This together with Lemma \ref{lemma_domination2}-(2) shows that  $v_{A^*,B^*}(x) \leq v(x; \sigma,\tau_{B^*})$ for any arbitrary $\sigma \in \S_{B^*}$.  
\end{proof}

\begin{proof}[Proof of Theorem \ref{theorem_equilibrium_zero}]
When $\nu = 0$, then the same results as (i) of the proof of Theorem \ref{theorem_equilibrium}  hold by replacing $A^*$ with $0$ and $\tau_{A^*}$
 with ${\sigma_0}$.  Now suppose $\nu > 0$. Using the same argument as in the proof of Theorem \ref{theorem_equilibrium} with $\tau_{A^*}$ replaced with ${\sigma_0}$ and the argument with $\sigma_\epsilon$ as in (ii) of the proof of Theorem \ref{theorem_equilibrium}, the supermartingale property of $\left\{ e^{-r (t \wedge {\sigma_0})}v_{0+,B^*}(X_{t \wedge {\sigma_0}}); t \geq 0 \right\}$ holds.  This together with Lemma \ref{lemma_domination2}-(1) shows, for any $\tau \in \S$,
\begin{align*}
v_{0+,B^*}(x) \geq \E^x \left[ e^{-r \tau} v_{0+,B^*}(X_\tau) 1_{\{\tau < \infty\}}\right] \geq \E^x \left[ e^{-r \tau} (h(X_\tau) 1_{\{\tau < {\sigma_0}\}} - (\acheck - \gamma_s) 1_{\{X_\tau = 0\}}) 1_{\{\tau < \infty \}}\right] = v(x; \sigma_{0+}, \tau).
\end{align*}
 As in the proof of Lemma \ref{lemma_generator}-(2), $\left\{ e^{-r (t \wedge \tau_{B^*})}v_{0+,B^*}(X_{t \wedge \tau_{B^*}}); t \geq 0 \right\}$ is a martingale.  This together with Lemma \ref{lemma_domination2}-(2) shows that $v_{0+,B^*}(x) \leq v(x;\sigma, \tau_{B^*})$ for all $\sigma \in \S_{B^*}$.  
\end{proof}

\begin{small}\textbf{Acknowledgements.} This work is supported by NSF Grant DMS-0908295, Grant-in-Aid for Young Scientists (B) No.\ 22710143, the Ministry of Education, Culture, Sports, Science and Technology, and Grant-in-Aid for Scientific Research (B) No.\ 23310103, No.\ 22330098, and (C) No.\ 20530340, Japan Society for the Promotion of Science. We thank two anonymous referees for their thorough reviews and insightful comments that help improve the presentation of this paper.\end{small}

\bibliographystyle{abbrv}
\bibliographystyle{apalike}

\bibliographystyle{agsm}\begin{small}
\bibliography{bib_game}\end{small}
\end{document}